\documentclass[10pt,twoside]{IEEEtran} \onecolumn \linespread{1.3}

\title{Recursive Sparse Recovery in \\ Large but Structured Noise -- Part 2}
\author{Chenlu Qiu and Namrata Vaswani
\thanks{C. Qiu and N. Vaswani are with the ECE dept at Iowa State University. Email: \{chenlu,namrata\}@iastate.edu. This work was supported by NSF grant CCF-1117125. A shorter version of this work is submitted to ISIT 2013.}
}

\usepackage{graphicx,subfigure}
\usepackage{epsfig,algorithm,amsmath,amssymb,url,algorithmic}

\begin{document}
\maketitle

\setlength{\arraycolsep}{0.03cm}
\newcommand{\xhat}{\hat{x}}
\newcommand{\xpred}{\hat{x}_{t|t-1}}
\newcommand{\xupd}{\hat{x}_{t|t}}
\newcommand{\Ppred}{P_{t|t-1}}
\newcommand{\ty}{\tilde{y}_t}
\newcommand{\tty}{\tilde{y}_{t,\text{res}}}
\newcommand{\tw}{\tilde{w}_t}
\newcommand{\ttw}{\tilde{w}_{t,f}}
\newcommand{\betahat}{\hat{\beta}}

\newcommand{\ypast}{y_{1:t-1}}
\newcommand{\sone}{S_{*}}
\newcommand{\sinf}{{S_{**}}}
\newcommand{\smax}{S_{\max}}
\newcommand{\smin}{S_{\min}}
\newcommand{\samax}{S_{a,\max}}
\newcommand{\Nhat}{{\hat{N}}}

\newcommand{\Dnum}{D_{num}}
\newcommand{\pss}{p^{**,i}}
\newcommand{\fr}{f_{r}^i}

\newcommand{\A}{{\cal A}}
\newcommand{\Z}{{\cal Z}}
\newcommand{\B}{{\cal B}}
\newcommand{\R}{{\cal R}}
\newcommand{\reg}{{\cal G}}
\newcommand{\const}{\mbox{const}}

\newcommand{\trace}{\mbox{tr}}

\newcommand{\hsim}{{\hspace{0.0cm} \sim  \hspace{0.0cm}}}
\newcommand{\he}{{\hspace{0.0cm} =  \hspace{0.0cm}}}

\newcommand{\vect}[2]{\left[\begin{array}{cccccc}
     #1 \\
     #2
   \end{array}
  \right]
  }

\newcommand{\matr}[2]{ \left[\begin{array}{cc}
     #1 \\
     #2
   \end{array}
  \right]
  }
\newcommand{\vc}[2]{\left[\begin{array}{c}
     #1 \\
     #2
   \end{array}
  \right]
  }

\newcommand{\gdot}{\dot{g}}
\newcommand{\Cdot}{\dot{C}}
\newcommand{\re}{\mathbb{R}}
\newcommand{\n}{{\cal N}}  
\newcommand{\N}{{\overrightarrow{\bf N}}}  
\newcommand{\chat}{\tilde{C}_t}
\newcommand{\chati}{\chat^i}

\newcommand{\cmin}{C^*_{min}}
\newcommand{\twi}{\tilde{w}_t^{(i)}}
\newcommand{\twj}{\tilde{w}_t^{(j)}}
\newcommand{\wi}{{w}_t^{(i)}}
\newcommand{\twio}{\tilde{w}_{t-1}^{(i)}}

\newcommand{\tWi}{\tilde{W}_n^{(m)}}
\newcommand{\tWj}{\tilde{W}_n^{(k)}}
\newcommand{\Wi}{{W}_n^{(m)}}
\newcommand{\tWio}{\tilde{W}_{n-1}^{(m)}}

\newcommand{\ds}{\displaystyle}

\newcommand{\SAR}{S$\!$A$\!$R }
\newcommand{\MAR}{MAR}
\newcommand{\MMRF}{MMRF}
\newcommand{\AR}{A$\!$R }
\newcommand{\GMRF}{G$\!$M$\!$R$\!$F }
\newcommand{\DTM}{D$\!$T$\!$M }
\newcommand{\MSE}{M$\!$S$\!$E }
\newcommand{\RCS}{R$\!$C$\!$S }
\newcommand{\uomega}{\underline{\omega}}
\newcommand{\lu}{\mu}
\newcommand{\g}{g}
\newcommand{\s}{{\bf s}}
\newcommand{\bft}{{\bf t}}
\newcommand{\refmap}{{\cal R}}
\newcommand{\totrefl}{{\cal E}}
\newcommand{\beq}{\begin{equation}}
\newcommand{\eeq}{\end{equation}}
\newcommand{\bdm}{\begin{displaymath}}
\newcommand{\edm}{\end{displaymath}}
\newcommand{\hatz}{\hat{z}}
\newcommand{\hatu}{\hat{u}}
\newcommand{\tilz}{\tilde{z}}
\newcommand{\tilu}{\tilde{u}}
\newcommand{\hhatz}{\hat{\hat{z}}}
\newcommand{\hhatu}{\hat{\hat{u}}}
\newcommand{\tilc}{\tilde{C}}
\newcommand{\hatc}{\hat{C}}
\newcommand{\tim}{n}

\newcommand{\ssp}{\renewcommand{\baselinestretch}{1.0}}
\newcommand{\defd}{\mbox{$\stackrel{\mbox{$\triangle$}}{=}$}}
\newcommand{\goes}{\rightarrow}
\newcommand{\tends}{\rightarrow}
\newcommand{\se}{&=&}
\newcommand{\sdefn}{& :=  &}
\newcommand{\sle}{& \le &}
\newcommand{\sge}{& \ge &}
\newcommand{\plusminus}{\stackrel{+}{-}}
\newcommand{\Ey}{E_{Y_{1:t}}}
\newcommand{\ey}{E_{Y_{1:t}}}

\newcommand{\equivto}{\mbox{~~~which is equivalent to~~~}}
\newcommand{\nonzero}{i:\pi^n(x^{(i)})>0}
\newcommand{\nonzeroc}{i:c(x^{(i)})>0}

\newcommand{\supn}{\sup_{\phi:||\phi||_\infty \le 1}}

\newtheorem{theorem}{Theorem}[section]
\newtheorem{lem}[theorem]{Lemma}
\newtheorem{sigmodel}[theorem]{Signal Model}
\newtheorem{corollary}[theorem]{Corollary}
\newtheorem{definition}[theorem]{Definition}
\newtheorem{remark}[theorem]{Remark}
\newtheorem{example}[theorem]{Example}
\newtheorem{ass}[theorem]{Assumption}
\newtheorem{proposition}[theorem]{Proposition}
\newtheorem{fact}[theorem]{Fact}
\newtheorem{heuristic}[theorem]{Heuristic}

\newcommand{\eps}{\epsilon}
\newcommand{\bd}{\begin{definition}}
\newcommand{\ed}{\end{definition}}
\newcommand{\udq}{\underline{D_Q}}
\newcommand{\td}{\tilde{D}}
\newcommand{\epsinv}{\epsilon_{inv}}
\newcommand{\al}{\mathcal{A}}

\newcommand{\bfx} {\bf X}
\newcommand{\bfy} {\bf Y}
\newcommand{\bfz} {\bf Z}
\newcommand{\ddas}{\mbox{${d_1}^2({\bf X})$}}
\newcommand{\ddbs}{\mbox{${d_2}^2({\bfx})$}}
\newcommand{\dda}{\mbox{$d_1(\bfx)$}}
\newcommand{\ddb}{\mbox{$d_2(\bfx)$}}
\newcommand{\xinc}{{\bfx} \in \mbox{$C_1$}}
\newcommand{\eqa}{\stackrel{(a)}{=}}
\newcommand{\eqb}{\stackrel{(b)}{=}}
\newcommand{\eqe}{\stackrel{(e)}{=}}
\newcommand{\leqc}{\stackrel{(c)}{\le}}
\newcommand{\leqd}{\stackrel{(d)}{\le}}

\newcommand{\leqa}{\stackrel{(a)}{\le}}
\newcommand{\leqb}{\stackrel{(b)}{\le}}
\newcommand{\leqe}{\stackrel{(e)}{\le}}
\newcommand{\leqf}{\stackrel{(f)}{\le}}
\newcommand{\leqg}{\stackrel{(g)}{\le}}
\newcommand{\leqh}{\stackrel{(h)}{\le}}
\newcommand{\leqi}{\stackrel{(i)}{\le}}
\newcommand{\leqj}{\stackrel{(j)}{\le}}

\newcommand{\w}{{W^{LDA}}}
\newcommand{\halpha}{\hat{\alpha}}
\newcommand{\hsigma}{\hat{\sigma}}
\newcommand{\slmax}{\sqrt{\lambda_{max}}}
\newcommand{\slmin}{\sqrt{\lambda_{min}}}
\newcommand{\lmax}{\lambda_{max}}
\newcommand{\lmin}{\lambda_{min}}

\newcommand{\da} {\frac{\alpha}{\sigma}}
\newcommand{\chka} {\frac{\check{\alpha}}{\check{\sigma}}}
\newcommand{\sumo}{\sum _{\underline{\omega} \in \Omega}}
\newcommand{\distance}{d\{(\hatz _x, \hatz _y),(\tilz _x, \tilz _y)\}}
\newcommand{\col}{{\rm col}}
\newcommand{\rcs}{\sigma_0}
\newcommand{\CalR}{{\cal R}}
\newcommand{\df}{{\delta p}}
\newcommand{\dq}{{\delta q}}
\newcommand{\dZ}{{\delta Z}}
\newcommand{\pprime}{{\prime\prime}}

\newcommand{\vn}{N}

\newcommand{\bv}{\begin{vugraph}}
\newcommand{\ev}{\end{vugraph}}
\newcommand{\bi}{\begin{itemize}}
\newcommand{\ei}{\end{itemize}}
\newcommand{\ben}{\begin{enumerate}}
\newcommand{\een}{\end{enumerate}}
\newcommand{\be}{\protect\[}
\newcommand{\ee}{\protect\]}
\newcommand{\bean}{\begin{eqnarray*} }
\newcommand{\eean}{\end{eqnarray*} }
\newcommand{\bea}{\begin{eqnarray} }
\newcommand{\eea}{\end{eqnarray} }
\newcommand{\nn}{\nonumber}
\newcommand{\ba}{\begin{array} }
\newcommand{\ea}{\end{array} }
\newcommand{\ep}{\mbox{\boldmath $\epsilon$}}
\newcommand{\epp}{\mbox{\boldmath $\epsilon '$}}
\newcommand{\Lep}{\mbox{\LARGE $\epsilon_2$}}
\newcommand{\und}{\underline}
\newcommand{\pdif}[2]{\frac{\partial #1}{\partial #2}}
\newcommand{\odif}[2]{\frac{d #1}{d #2}}
\newcommand{\dt}[1]{\pdif{#1}{t}}
\newcommand{\urho}{\underline{\rho}}

\newcommand{\spc}{{\cal S}}
\newcommand{\tspc}{{\cal TS}}

\newcommand{\uv}{\underline{v}}
\newcommand{\us}{\underline{s}}
\newcommand{\uc}{\underline{c}}
\newcommand{\utheta}{\underline{\theta}^*}
\newcommand{\ualpha}{\underline{\alpha^*}}

\newcommand{\uxy}{\underline{x}^*}
\newcommand{\uxyj}{[x^{*}_j,y^{*}_j]}
\newcommand{\arcl}[1]{arclen(#1)}
\newcommand{\one}{{\mathbf{1}}}

\newcommand{\uxyjt}{\uxy_{j,t}}
\newcommand{\E}{\mathbf{E}}

\newcommand{\rhomat}{\left[\begin{array}{c}
                        \rho_3 \ \rho_4 \\
                        \rho_5 \ \rho_6
                        \end{array}
                   \right]}
\newcommand{\deltat}{\tau} 
\newcommand{\deltatt}{\Delta t_1}
\newcommand{\ceil}[1]{\ulcorner #1 \urcorner}

\newcommand{\xxi}{x^{(i)}}
\newcommand{\txi}{\tilde{x}^{(i)}}
\newcommand{\txj}{\tilde{x}^{(j)}}

\newcommand{\mi}[1]{{#1}^{(m,i)}}

\newcommand{\cs}{\text{cs}}
\newcommand{\bigO}{\mathcal{O}}
\newcommand{\new}{\text{new}}
\newcommand{\newset}{\text{new-set}}
\newcommand{\old}{\text{old}}
\newcommand{\rank}{\text{rank}}
\newcommand{\Shat}{\hat{S}}
\newcommand{\Lhat}{\hat{L}}
\newcommand{\Phat}{\hat{P}}
\newcommand{\Span}{\text{span}}
\newcommand{\del}{\text{del}}
\newcommand{\diag}{\text{diag}}
\newcommand{\add}{\text{add}}

\newcommand{\train}{\text{train}}
\newcommand{\thresh}{\hat{\lambda}^-}
\newcommand{\betathresh}{\eta} 
\newcommand{\That}{\hat{T}}
\newcommand{\zetasp}{{\zeta_*^+}}
\newcommand{\SE}{\text{SE}}

\newcommand{\egam}{\Gamma^e} 
\newcommand{\tegam}{\tilde{\Gamma}^e} 
\newcommand{\calb}{{\cal B}}
\newcommand{\calc}{{\cal C}}
\newcommand{\calf}{{\cal F}}
\newcommand{\ecalb}{{\calb}^e}
\newcommand{\ecalc}{{\calc}^e}


\begin{abstract}
We study the problem of recursively recovering a time sequence of sparse vectors, $S_t$, from measurements $M_t: = S_t + L_t$ that are corrupted by structured noise $L_t$ which is dense and can have large magnitude. The structure that we require is that $L_t$ should lie in a low dimensional subspace that is either fixed or changes ``slowly enough"; and the eigenvalues of its covariance matrix are ``clustered". We do not assume any model on the sequence of sparse vectors. Their support sets and their nonzero element values may be either independent or correlated over time (usually in many applications they are correlated). The only thing required is that there be {\em some} support change every so often.
We introduce a novel solution approach called Recursive Projected Compressive Sensing with cluster-PCA (ReProCS-cPCA) that addresses some of the limitations of earlier work. Under mild assumptions, we show that, with high probability, ReProCS-cPCA can exactly recover the support set of $S_t$ at all times; and the reconstruction errors of both $S_t$ and $L_t$ are upper bounded by a time-invariant and small value.
\end{abstract} 

{\bf Keywords: }  robust PCA, sparse and low-rank matrix recovery, sparse recovery, compressive sensing

\section{Introduction} \label{intro}
In this work, we study the problem of recursively recovering a time sequence of sparse vectors, $S_t$, from measurements $M_t: = S_t + L_t$ that are corrupted by structured noise $L_t$ which is dense and can have large magnitude. The structure that we require is that $L_t$ should lie in a low dimensional subspace that is either fixed or changes ``slowly enough" as discussed in Sec \ref{slowss}; and the eigenvalues of its covariance matrix are ``clustered" as explained in Sec \ref{eigencluster}.
As a by-product, at certain times, the basis vectors for the subspace in which the most recent several $L_t$'s lies is also recovered. Thus, at these times, we also solve the recursive robust principal components' analysis (PCA) problem. For the recursive robust PCA problem, $L_t$ is the signal of interest while $S_t$ can be interpreted as the outlier (large but sparse noise).




A key application where the above problem occurs is in video analysis where the goal is to separate a slowly changing background from moving foreground objects \cite{Torre03aframework,rpca}. If one stacks each image frame as a column vector, the background is well modeled as lying in a low dimensional subspace that may gradually change over time, while the moving foreground objects constitute the sparse vectors \cite{error_correction_PCP_l1,rpca} which change in a correlated fashion over time. Another key application is online detection of brain activation patterns from functional MRI (fMRI) sequences. In this case, the ``active" region of the brain is the correlated sparse vector.%

\subsection{Related Work}

Many of the older works on sparse recovery with structured noise study the case of sparse recovery from large but sparse noise (outliers), e.g., \cite{error_correction_PCP_l1,Laska_exactsignal,trac_tran}. However, here we are interested in sparse recovery in large but low dimensional noise. On the other hand, most older works on robust PCA cannot recover the outlier ($S_t$) when its nonzero entries have magnitude much smaller than that of the low dimensional part ($L_t$) \cite{ipca_weightedand,Torre03aframework,Li03anintegrated}. The main goal of this work is to study sparse recovery and hence we do not discuss these older works here. Some recent works on robust PCA such as \cite{mccoy_tropp11,outlier_pursuit} assume that an entire measurement vector $M_t$ is either an inlier ($S_t$ is a zero vector) or an outlier (all entries of $S_t$ can be nonzero), and a certain number of $M_t$'s are inliers. These works also cannot be used when all $S_t$'s are nonzero but sparse.


In a series of recent works \cite{rpca,rpca2}, a new and elegant solution, which is referred to as Principal Components' Pursuit (PCP) in \cite{rpca}, has been proposed. It redefines batch robust PCA as a problem of separating a low rank matrix, ${\cal L}_t := [L_1,\dots,L_t]$, from a sparse matrix, ${\cal S}_t := [S_1,\dots,S_t]$, using the measurement matrix, ${\cal M}_t := [M_1,\dots,M_t] = {\cal L}_t+ {\cal S}_t$. Thus these works can be interpreted as batch solutions to sparse recovery in large but low dimensional noise. Other recent works that also study batch algorithms for recovering a sparse ${\cal S}_t$ and a low-rank ${\cal L}_t$ from ${\cal M}_t := {\cal L}_t+ {\cal S}_t$ or from undersampled measurements include \cite{rpca_tropp, linear_inverse_prob, rpca_hu,SpaRCS,rpca_vayatis,rpca_zhang, rpca_Giannakis,compressivePCP,rpca_reduced,noisy_undersampled_yuan}.

It was shown in \cite{rpca} that, with high probability (w.h.p.), one can recover ${\cal L}_t$ and ${\cal S}_t$ exactly by solving 
\beq
\underset{{\cal L},{\cal S}} {\min}\|{\cal L}\|_* + \lambda\|{\cal S}\|_{1,\text{vec}} \ \text{subject to}  \ \ {\cal L} + {\cal S} = {\cal M}_t
\label{pcp_prob}
\eeq
provided that (a) ${\cal L}_t$ is dense (its left and right singular vectors satisfy certain conditions);
(b) any element of the matrix ${\cal S}_t$ is nonzero w.p. $\varrho$, and zero w.p. $1-\varrho$, independent of all others (in particular, this means that the support sets of the different $S_t$'s are independent over time);
%
and (c) the rank of ${\cal L}_t$ and the support size of ${\cal S}_t$ are small enough. Here $\|B\|_*$ is the nuclear norm of $B$ (sum of singular values of $B$) while $\|B\|_{1,\text{vec}}$ is the $\ell_1$ norm of $B$ seen as a long vector.
In most applications, it is fair to assume that the low dimensional part, $L_t$ (background in case of video) is dense. However, the assumption that the support of the sparse part (foreground in case of video) is independent over time is often not valid. Foreground objects typically move in a correlated fashion, and may even not move for a few frames. This results in ${\cal S}_t$ being sparse and low rank.

The question then is, what can we do if ${\cal L}_t$ is low rank and dense, but ${\cal S}_t$ is sparse and may also be low rank? In this case, without any extra information, in general, it is not possible to separate ${\cal S}_t$ and ${\cal L}_t$. In \cite{rrpcp_perf}, we introduced the Recursive Projected Compressive Sensing (ReProCS) algorithm that provided one possible solution to this problem by using the extra piece of information that an initial short sequence of $L_t$'s, or $L_t$'s in small noise, is available (which can be used to get an accurate estimate of the subspace in which the initial $L_t$'s lie) and assuming slow subspace change (as explained in Sec. \ref{slowss}). The key idea of ReProCS is as follows. At time $t$, assume that a $n \times r$ matrix with orthonormal columns, $\Phat_{(t-1)}$, is available with $\Span(\Phat_{(t-1)}) \approx \Span({\cal L}_{t-1})$. We project $M_t$ perpendicular to $\Span(\Phat_{(t-1)})$. 
Because of slow subspace change, this cancels out most of the contribution of $L_t$. Recovering $S_t$ from the projected measurements then becomes a classical sparse recovery / compressive sensing (CS) problem in small noise \cite{candes_rip}. Under a denseness assumption on $\Span({\cal L}_{t-1})$, one can show that $S_t$ can be accurately recovered via $\ell_1$ minimization. Thus, $L_t = M_t-S_t$ can also be recovered accurately. We use the estimates of $L_t$ in a projection-PCA based subspace estimation algorithm to update $\Phat_{(t)}$.

ReProCS is designed under the assumption that the subspace in which the most recent several $L_t$'s lie can only grow over time. It assumes a model in which at every subspace change time, $t_j$, some new directions get added to this subspace. After every subspace change, it uses projection-PCA to estimate the newly added subspace. As a result the rank of $\Phat_{(t)}$ keeps increasing with every subspace change. Therefore, the number of effective measurements available for the CS step, $(n-\rank(\Phat_{(t-1)}))$, keeps reducing. To keep this number large enough at all times, ReProCS needs to assume a bound on the total number of subspace changes, $J$.

\subsection{Our Contributions and More Related Work}

In practice, usually, the dimension of the subspace in which the most recent several $L_t$'s lie typically remains roughly constant. A simple way to model this is to assume that at every change time, $t_j$, some new directions can get added and some existing directions can get deleted from this subspace and to assume an upper bound on the difference between the total number of added and deleted directions (the earlier model in \cite{rrpcp_perf} is a special case of this). ReProCS still applies for this more general model as discussed in the extensions section of \cite{rrpcp_perf}. However, because it never deletes directions, the rank of $\Phat_{(t)}$ still keeps increasing with every subspace change time and so it still requires a bound on $J$.

In this work, we address the above limitation by introducing a novel approach called {\em cluster-PCA} that re-estimates the current subspace after the newly added directions have been accurately estimated. This re-estimation step ensures that the deleted directions have been ``removed" from the new $\Phat_{(t)}$. We refer to the resulting algorithm as {\em ReProCS-cPCA}.
The design and analysis of cluster-PCA and ReProCS-cPCA is the focus of the current paper.  We will see that ReProCS-cPCA does not need a bound on $J$ as long as the delay between subspace change times increases in proportion to $\log J$. An extra assumption that is needed though is that the eigenvalues of the covariance matrix of $L_t$ are sufficiently clustered at certain times as explained in Sec \ref{eigencluster}. As discussed in Sec \ref{discuss}, this is a practically valid assumption.

Under the clustering assumption and some other mild assumptions, we show that, w.h.p, at all times, ReProCS-cPCA can exactly recover the support of $S_t$, and the reconstruction errors of both $S_t$ and $L_t$ are upper bounded by a time invariant and small value. Moreover, we show that the subspace recovery error decays roughly exponentially with every projection-PCA step.
The proof techniques developed in this work are very different from those used to obtain performance guarantees in recent batch robust PCA works such as \cite{rpca,rpca2, novel_m_estimator,mccoy_tropp11, outlier_pursuit,rpca_tropp, linear_inverse_prob, rpca_reduced, rpca_Giannakis,rpca_zhang,compressivePCP,noisy_undersampled_yuan}. As explained earlier, \cite{mccoy_tropp11,outlier_pursuit} also study a different problem. Our proof utilizes sparse recovery results \cite{candes_rip}; results from matrix perturbation theory (sin $\theta$ theorem \cite{davis_kahan} and Weyl's theorem \cite{hornjohnson}) and the matrix Hoeffding inequality \cite{tail_bound}.


Our result for ReProCS-cPCA (and also that for ReProCS from \cite{rrpcp_perf}) does not assume any model on the sparse vectors', $S_t$'s. In particular, it allows the support sets of the $S_t$'s to be either independent, e.g. generated via the model of \cite{rpca} (resulting in ${\cal S}_t$ being full rank w.h.p.), or correlated over time (can result in ${\cal S}_t$ being low rank). As explained in Sec \ref{discuss}, the only thing that is required is that there be {\em some} support changes every so often. We should point out that some of the other works that study the batch problem, e.g. \cite{rpca,rpca_zhang}, also allow ${\cal S}_t$ to be low rank.

A key difference of our work compared with most existing work analyzing finite sample PCA, e.g. \cite{nadler}, and references therein, is that in these works, the noise/error in the observed data is independent of the true (noise-free) data. However, in our case, because of how $\Lhat_t$ is computed, the error $e_t = L_t - \Lhat_t$ is correlated with $L_t$. As a result the tools developed in these earlier works cannot be used for our problem. This is the main reason we need to develop and analyze projection-PCA based approaches for both subspace addition and deletion.

In earlier conference papers \cite{rrpcp_allerton, rrpcp_allerton11}, we first introduced the ReProCS idea. However, these used an algorithm motivated by recursive PCA \cite{sequentialSVD} for updating the subspace estimates on-the-fly. As explained in Sec \ref{rep_del_perf} and also in \cite[Appendix F]{rrpcp_perf}, it is not clear how to obtain performance guarantees for recursive PCA (which is a fast algorithm for PCA) for our problem. Another online algorithm that addresses a problem similar to ours is given in \cite{grass_undersampled}. This also does not obtain guarantees.


The ReProCS-cPCA approach is related to that of \cite{decodinglp,rpca_regression, rpca_regression_sparse} in that all of these first try to nullify the low dimensional signal by projecting the measurement vector into a subspace perpendicular to that of the low dimensional signal, and then solve for the sparse ``error" vector. However, the big difference is that in all of these works the basis for the subspace of the low dimensional signal is {\em perfectly known.} We study {\em the case where the subspace is not known and can change over time}.


\subsection{Paper Organization}
We give the notation next followed by a review of results from existing work that we will need. The problem definition and the three key assumptions that are needed are explained in Sec \ref{probdef}. We develop the ReProCS-cPCA algorithm in Sec \ref{rep_del_perf}. We give its performance guarantees (Theorem \ref{thm2}) in Sec \ref{perf_g}. Here we also provide a discussion of the result and the assumptions it makes. We define the quantities needed for the proof and give the proof outline in Sec \ref{detailed}. The proof of Theorem \ref{thm2} is given in Sec \ref{thmproof}. The key lemmas needed for it are given and proved in Sec \ref{keylems}. In Sec \ref{sims}, we show numerical experiments demonstrating Theorem \ref{thm2}, as well as comparisons with ReProCS and PCP. Conclusions are given in Sec \ref{conc}.

\subsection{Notation}


For a set $T \subseteq \{1,2,\dots n\}$, we use $|T|$ to denote its cardinality, i.e., the number of elements in $T$. We use $T^c$ to denote its complement w.r.t. $\{1,2,\dots n\}$, i.e. $T^c:= \{i \in \{1,2,\dots n\}: i \notin T \}$. 
The notations $T_1 \subseteq T_2$ and $T_2 \supseteq T_1$ both mean that $T_1$ is a subset of $T_2$.


We use the notation $[t_1, t_2]$ to denote an interval which contains $t_1$ and $t_2$, as well as all integers between them, i.e. $[t_1, t_2]:=\{t_1, t_1+1, \cdots, t_2\}$. The notation $[L_t; t\in [t_1,t_2]]$ is used to denote the matrix $[L_{t_1}, L_{t_1+1}, \cdots, L_{t_2}]$.

For a vector $v$, $v_i$ denotes the $i$th entry of $v$ and $v_T$ denotes a vector consisting of the entries of $v$ indexed by $T$. We use $\|v\|_p$ to denote the $\ell_p$ norm of $v$. The support of $v$, $\text{supp}(v)$, is the set of indices at which $v$ is nonzero, $\text{supp}(v) := \{i : v_i\neq 0\}$. We say that $v$ is s-sparse if $|\text{supp}(v)| \leq  s$.

For a tall matrix $P$, $\Span(P)$ denotes the subspace spanned by the column vectors of $P$.

For a matrix $B$, $B'$ denotes its transpose, and $B^{\dag}$ denotes its pseudo-inverse. For a matrix with linearly independent columns, $B^{\dag} = (B'B)^{-1}B'$.
We use $\|B\|_2:= \max_{x \neq 0} \|Bx\|_2/\|x\|_2$ to denote the induced 2-norm of the matrix. Also, $\|B\|_*$ is the nuclear norm and $\|B\|_{\max}$ denotes the maximum over the absolute values of all its entries.
We let $\sigma_i(B)$ denote the $i$th largest singular value of $B$. For a Hermitian matrix, $B$, we use the notation $B \overset{EVD}{=} U \Lambda U'$ to denote the eigenvalue decomposition (EVD) of $B$. Here $U$ is an orthonormal matrix and $\Lambda$ is a diagonal matrix with entries arranged in non-increasing order. Also, we use $\lambda_i(B)$ to denote the $i$th largest eigenvalue of a Hermitian matrix $B$ and we use $\lambda_{\max}(B)$ and $\lambda_{\min}(B)$ denote its maximum and minimum eigenvalues. If $B$ is Hermitian positive semi-definite (p.s.d.), then $\lambda_i(B) = \sigma_i(B)$. For Hermitian matrices $B_1$ and $B_2$, the notation $B_1 \preceq B_2$ means that $B_2-B_1$ is p.s.d. Similarly, $B_1 \succeq B_2$ means that $B_1-B_2$ is p.s.d.


For a Hermitian matrix $B$, we have $\|B\|_2 = \sqrt{\max( \lambda_{\max}^2(B), \lambda_{\min}^2(B) )}$. Thus, for a $b\geq 0$, $\|B\|_2 \le b$ implies that $-b \le \lambda_{\min}(B) \le \lambda_{\max}(B) \le b$. If $B$ is a Hermitian p.s.d. matrix, then $\|B\|_2 = \lambda_{\max}(B)$.

The notation $[.]$ denotes an empty matrix. We use $I$ to denote an identity matrix. For an $m \times n$ matrix $B$ and an index set $T \subseteq \{1,2,\dots n\}$, $B_T$ is the sub-matrix of $B$ containing columns with indices in the set $T$. Notice that $B_T= B I_T$. We use $B \setminus B_{T}$ to denote $B_{T^c}$ (here $T^c:= \{i \in \{1,2,\cdots, n\}: i \notin T \}$).
Given another matrix $B_2$ of size $m \times n_2$, $[B \ B_2]$ constructs a new matrix by concatenating matrices $B$ and $B_2$ in horizontal direction. Thus, $[(B\setminus B_{T}) \  B_2] = [B_{T^c} \  B_2]$. For any matrix $B$ and sets $T_1, T_2$, $(B)_{T_1,T_2}$ denotes the sub-matrix containing the rows with indices in $T_1$ and columns with indices in $T_2$.

\begin{definition}
We refer to a tall matrix $P$ as a {\em basis matrix} if it satisfies $P'P=I$.
\end{definition}


\begin{definition}\label{defn_delta}
The {\em $s$-restricted isometry constant (RIC)} \cite{decodinglp}, $\delta_s$, for an $n \times m$ matrix $\Psi$ is the smallest real number satisfying $(1-\delta_s) \|x\|_2^2 \leq \|\Psi_T x\|_2^2 \leq (1+\delta_s) \|x\|_2^2$
for all sets $T \subseteq \{1,2,\dots n \}$ with $|T| \leq s$ and all real vectors $x$ of length $|T|$.%
\end{definition} 

It is easy to see that $\max_{T:|T| \le s} \|({\Psi_T}'\Psi_T)^{-1}\|_2 \le \frac{1}{1-\delta_s(\Psi)}$ \cite{decodinglp}.


\begin{definition}
Let $X$ and $Z$ be two random variables (r.v.) and let $\calb$ be a set of values that $Z$ can take.
\ben

\item We use $\ecalb$ to denote the {\em event} $Z \in \calb$, i.e. $\ecalb:= \{Z \in \calb\}$.

\item The probability of event $\ecalb$ can be expressed as \cite{grimmett},
$$\mathbf{P}(\ecalb) := \E[\mathbb{I}_\calb(Z)].$$
where
\bea
\mathbb{I}_\calb(Z): = \left\{ \begin{array}{cc}
                                 1 \ & \ \text{if} \ Z \in \calb \nn \\
                                 0  \ & \ \text{otherwise}
                                 \end{array} \right. \nn
\eea
is an indicator function of $Z$ on the set $\calb$ and $\E[\mathbb{I}_\calb(Z)]$ is the expectation of $\mathbb{I}_\calb(Z)$. 

\item Define $\mathbf{P}(\ecalb|X) := \E[\mathbb{I}_{\calb}(Z)|X]$ where $\E[\mathbb{I}_{\calb}(Z)|X]$ is the conditional expectation of $\mathbb{I}_\calb(Z)$ given $X$.

\een
\label{probdefs}
\end{definition}

Finally, RHS refers to the right hand side of an equation or inequality; w.p. means ``with probability"; and w.h.p. means ``with high probability".

\subsection{Preliminaries}
In this section we state certain results from literature, or certain lemmas which follow easily using these results, that will be used in proving our main result.

\subsubsection{Simple probability facts and matrix Hoeffding inequalities}
The following result follows directly from Definition \ref{probdefs}.
\begin{lem}
Suppose that $\calb$ is the set of values that the r.v.s $X,Y$ can take. Suppose that $\calc$ is a set of values that the r.v. $X$ can take.
For a $0 \le p \le 1$, if $\mathbf{P}(\ecalb|X) \ge p$ for all $X \in \calc$,  then $\mathbf{P}(\ecalb|\ecalc) \ge p$ as long as $\mathbf{P}(\ecalc)> 0$.
\label{rem_prob}
\end{lem}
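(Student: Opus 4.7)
The plan is to reduce the conditional probability $\mathbf{P}(\ecalb\mid\ecalc)$ to an expectation over $X$ of $\mathbf{P}(\ecalb\mid X)$, restricted to $X\in\calc$, and then use the pointwise lower bound $\mathbf{P}(\ecalb\mid X)\ge p$ for $X\in\calc$. The key tool is the tower property (iterated expectation): conditioning first on $X$, the function $\mathbf{P}(\ecalb\mid X)$ acts as a surrogate for $\mathbb{I}_\calb(Z)$, and the assumption says this surrogate is at least $p$ on the set $\{X\in\calc\}$.

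Concretely, I would first write
\[
\mathbf{P}(\ecalb\mid\ecalc)\;=\;\frac{\E[\mathbb{I}_\calb(Z)\,\mathbb{I}_\calc(X)]}{\mathbf{P}(\ecalc)},
\]
which is the standard formula for conditioning on a positive-probability event (well-defined because $\mathbf{P}(\ecalc)>0$). Then I would apply the tower property to the numerator, using the fact that $\mathbb{I}_\calc(X)$ is measurable with respect to $X$:
\[
\E[\mathbb{I}_\calb(Z)\,\mathbb{I}_\calc(X)]
\;=\;\E\bigl[\E[\mathbb{I}_\calb(Z)\mid X]\,\mathbb{I}_\calc(X)\bigr]
\;=\;\E\bigl[\mathbf{P}(\ecalb\mid X)\,\mathbb{I}_\calc(X)\bigr],
\]
where the last equality uses Definition \ref{probdefs}(3).

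Next, I would use the hypothesis: for every $X\in\calc$, $\mathbf{P}(\ecalb\mid X)\ge p$. Hence $\mathbf{P}(\ecalb\mid X)\,\mathbb{I}_\calc(X)\ge p\,\mathbb{I}_\calc(X)$ pointwise (trivially on $\{X\notin\calc\}$, where both sides are zero, and by the assumption on $\{X\in\calc\}$). Taking expectations preserves the inequality, giving
\[
\E\bigl[\mathbf{P}(\ecalb\mid X)\,\mathbb{I}_\calc(X)\bigr]\;\ge\;p\,\E[\mathbb{I}_\calc(X)]\;=\;p\,\mathbf{P}(\ecalc).
\]
Dividing by $\mathbf{P}(\ecalc)>0$ yields $\mathbf{P}(\ecalb\mid\ecalc)\ge p$, as required.

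There is no serious obstacle here; the only subtle point is making sure the tower property is invoked correctly, which requires $\mathbb{I}_\calc(X)$ to be $\sigma(X)$-measurable (it is, by construction). Everything else is bookkeeping with Definition \ref{probdefs}.
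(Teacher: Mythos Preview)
Your argument is correct: the tower property reduces $\mathbf{P}(\ecalb\mid\ecalc)$ to an average of $\mathbf{P}(\ecalb\mid X)$ over $\{X\in\calc\}$, and the pointwise bound then gives the result. The paper itself does not supply a proof, deferring instead to \cite[Lemma 11]{rrpcp_perf}; your write-up is exactly the standard proof one would expect there.
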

Proof: This is the same as \cite[Lemma 11]{rrpcp_perf}.

The following lemma is an easy consequence of the chain rule of probability applied to a contracting sequence of events.
\begin{lem} \label{subset_lem}
For a sequence of events $E_0^e, E_1^e, \dots E_m^e$ that satisfy $E_0^e \supseteq E_1^e  \supseteq E_2^e \dots  \supseteq E_m^e$, the following holds
$$\mathbf{P}(E_m^e|E_0^e) = \prod_{k=1}^{m} \mathbf{P}(E_k^e | E_{k-1}^e).$$
\end{lem}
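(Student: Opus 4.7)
The plan is to reduce the statement to the standard chain rule of conditional probability by exploiting the nesting of the events. First I would observe that, because the sequence is contracting ($E_k^e \subseteq E_{k-1}^e$ for each $k$), any finite intersection telescopes: for every $k \in \{1,\dots,m\}$, we have $E_0^e \cap E_1^e \cap \dots \cap E_k^e = E_k^e$. In particular, $E_m^e = \bigcap_{k=1}^m E_k^e$, so the quantity on the left-hand side can be rewritten as $\mathbf{P}\bigl(\bigcap_{k=1}^m E_k^e \mid E_0^e\bigr)$.

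Next I would apply the standard chain rule of conditional probability to this intersection, conditioned throughout on $E_0^e$ (which is valid because $\mathbf{P}(E_0^e) > 0$ is implicit, otherwise $\mathbf{P}(E_m^e \mid E_0^e)$ would be undefined and the identity trivial). This yields
\begin{equation*}
\mathbf{P}\Bigl(\bigcap_{k=1}^m E_k^e \,\Big|\, E_0^e\Bigr) \;=\; \prod_{k=1}^{m} \mathbf{P}\bigl(E_k^e \,\big|\, E_0^e \cap E_1^e \cap \dots \cap E_{k-1}^e\bigr).
\end{equation*}
Finally, I would use the telescoping observation again to collapse each conditioning event: $E_0^e \cap E_1^e \cap \dots \cap E_{k-1}^e = E_{k-1}^e$, so each factor reduces to $\mathbf{P}(E_k^e \mid E_{k-1}^e)$, giving exactly the claimed product.

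There is essentially no obstacle here; the only care required is a sanity check that the conditional probabilities $\mathbf{P}(E_k^e \mid E_{k-1}^e)$ are all well defined. This follows because $\mathbf{P}(E_{k-1}^e) \ge \mathbf{P}(E_m^e)$ by monotonicity, and whenever $\mathbf{P}(E_m^e \mid E_0^e)$ is itself nonzero (the only interesting case) each $\mathbf{P}(E_{k-1}^e)$ must be strictly positive. If any $\mathbf{P}(E_{k-1}^e)$ vanishes, then $\mathbf{P}(E_m^e)$ also vanishes and both sides equal zero, so the identity holds trivially.
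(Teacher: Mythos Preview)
Your proposal is correct and follows essentially the same approach as the paper: rewrite $E_m^e$ as the intersection $E_0^e \cap \dots \cap E_m^e$, apply the chain rule, and then collapse each conditioning event back to $E_{k-1}^e$ using the nesting. The paper's proof is just a one-line version of this, without your extra remarks about well-definedness.
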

\begin{proof}
$\mathbf{P}(E_m^e|E_0^e) = \mathbf{P}(E_m^e, E_{m-1}^e, \dots E_0^e | E_0^e) = \prod_{k=1}^{m} \mathbf{P}(E_k^e | E_{k-1}^e, E_{k-2}^e, \dots E_0^e) = \prod_{k=1}^{m} \mathbf{P}(E_k^e | E_{k-1}^e)$.
\end{proof}

The following two results are corollaries of the matrix Hoeffding inequality \cite[Theorem 1.3]{tail_bound} that were proved in \cite{rrpcp_perf}. In the rest of the paper we often refer to them as the {\em Hoeffding corollaries.}


\begin{corollary}[Matrix Hoeffding conditioned on another random variable for a nonzero mean Hermitian matrix]\label{hoeffding_nonzero}
Given an $\alpha$-length sequence $\{Z_t\}$ of random Hermitian matrices of size $n\times n$, a r.v. $X$, and a set ${\cal C}$ of values that $X$ can take. Assume that, for all $X \in \calc$, (i) $Z_t$'s are conditionally independent given $X$; (ii) $\mathbf{P}(b_1 I \preceq Z_t \preceq b_2 I|X) = 1$ and (iii) $b_3 I \preceq \frac{1}{\alpha}\sum_t \E(Z_t|X) \preceq b_4 I $. Then for all $\epsilon > 0$,
\bea
&&\mathbf{P} (\lambda_{\max}(\frac{1}{\alpha}\sum_t Z_t) \leq b_4 + \epsilon|X) \geq 1- n \exp(-\frac{\alpha \epsilon^2}{8(b_2-b_1)^2}) \ \text{for all} \ X \in \calc
\nn \\
&&\mathbf{P} (\lambda_{\min}(\frac{1}{\alpha}\sum_t Z_t) \geq b_3 -\epsilon|X) \geq  1- n \exp(-\frac{\alpha \epsilon^2}{8(b_2-b_1)^2}) \ \text{for all} \ X \in \calc \nn
\eea
\end{corollary}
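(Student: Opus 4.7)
The plan is to reduce the claim to the standard (mean-zero) matrix Hoeffding inequality by conditioning on $X$ and centering. Fix an arbitrary $X \in \calc$ and work throughout in the conditional probability measure given $X$. Define $Y_t := Z_t - \E(Z_t \mid X)$. By hypothesis (i) the $Z_t$ are conditionally independent given $X$, so the centered matrices $Y_t$ are conditionally independent as well. By hypothesis (ii), $b_1 I \preceq Z_t \preceq b_2 I$ a.s.\ under the conditional measure, and taking conditional expectation gives $b_1 I \preceq \E(Z_t \mid X) \preceq b_2 I$; subtracting yields $-(b_2-b_1) I \preceq Y_t \preceq (b_2-b_1) I$ a.s., which in turn implies $Y_t^2 \preceq (b_2-b_1)^2 I$.

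Next I would apply the standard matrix Hoeffding inequality of \cite[Theorem 1.3]{tail_bound} to the $\alpha$-length independent sequence $\{Y_t\}$ in the conditional measure. With the a.s.\ bound $Y_t^2 \preceq (b_2-b_1)^2 I$, the variance proxy becomes $\sigma^2 = \bigl\|\sum_{t} (b_2-b_1)^2 I\bigr\|_2 = \alpha (b_2-b_1)^2$. Setting the deviation to $\alpha \epsilon$ this gives
\begin{equation*}
\mathbf{P}\!\left(\lambda_{\max}\!\Big(\tfrac{1}{\alpha}\sum_t Y_t\Big) \ge \epsilon \,\Big|\, X\right) \le n \exp\!\left(-\tfrac{\alpha \epsilon^2}{8(b_2-b_1)^2}\right),
\end{equation*}
and, by applying the same bound to $-Y_t$,
\begin{equation*}
\mathbf{P}\!\left(\lambda_{\min}\!\Big(\tfrac{1}{\alpha}\sum_t Y_t\Big) \le -\epsilon \,\Big|\, X\right) \le n \exp\!\left(-\tfrac{\alpha \epsilon^2}{8(b_2-b_1)^2}\right).
\end{equation*}

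Finally I would transfer these concentration bounds back to the non-centered sum using Weyl's inequality. Since $\tfrac{1}{\alpha}\sum_t Z_t = \tfrac{1}{\alpha}\sum_t Y_t + \tfrac{1}{\alpha}\sum_t \E(Z_t\mid X)$, Weyl gives
\begin{equation*}
\lambda_{\max}\!\Big(\tfrac{1}{\alpha}\sum_t Z_t\Big) \le \lambda_{\max}\!\Big(\tfrac{1}{\alpha}\sum_t Y_t\Big) + \lambda_{\max}\!\Big(\tfrac{1}{\alpha}\sum_t \E(Z_t\mid X)\Big),
\end{equation*}
and by hypothesis (iii) the last term is at most $b_4$; symmetrically $\lambda_{\min}\bigl(\tfrac{1}{\alpha}\sum_t Z_t\bigr) \ge \lambda_{\min}\bigl(\tfrac{1}{\alpha}\sum_t Y_t\bigr) + b_3$. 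Therefore the event $\{\lambda_{\max}(\tfrac{1}{\alpha}\sum_t Z_t) > b_4+\epsilon\}$ is contained in $\{\lambda_{\max}(\tfrac{1}{\alpha}\sum_t Y_t) > \epsilon\}$ and the corresponding containment holds for $\lambda_{\min}$; substituting the two conditional concentration bounds yields the two inequalities claimed, uniformly in $X \in \calc$.

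There is no real obstacle here beyond bookkeeping: the only thing to be careful about is that all three hypotheses (conditional independence, the a.s.\ bound, and the mean bound) are stated pointwise for every $X \in \calc$, so each of the conditional Hoeffding applications and Weyl comparisons must be performed for a fixed $X$, which is exactly what is needed to conclude the conditional probability bound for all $X \in \calc$. No additional step such as Lemma~\ref{rem_prob} is required at this stage since the conclusion is a conditional-on-$X$ statement rather than a conditional-on-event statement.
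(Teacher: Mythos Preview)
Your proposal is correct and follows exactly the standard approach one expects here: condition on $X$, center to reduce to the mean-zero case, apply Tropp's matrix Hoeffding with variance proxy $\alpha(b_2-b_1)^2$, and then use Weyl's inequality together with hypothesis (iii) to convert back to the non-centered sum. The paper itself does not spell out a proof but simply cites \cite[Corollary 13]{rrpcp_perf}, and your argument is precisely what that corollary's proof amounts to.
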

Proof: This is slight modification of \cite[Corollary 13]{rrpcp_perf}.

\begin{corollary}[Matrix Hoeffding conditioned on another random variable for an arbitrary nonzero mean matrix]\label{hoeffding_rec}
Given an $\alpha$-length sequence $\{Z_t\}$ of random Hermitian matrices of size $n\times n$, a r.v. $X$, and a set ${\cal C}$ of values that $X$ can take. Assume that, for all $X \in \calc$, (i) $Z_t$'s are conditionally independent given $X$; (ii) $\mathbf{P}(\|Z_t\|_2 \le b_1|X) = 1$ and (iii) $\|\frac{1}{\alpha}\sum_t \E( Z_t|X)\|_2 \le b_2$. Then, for all $\epsilon >0$,
$$\mathbf{P} (\|\frac{1}{\alpha}\sum_t Z_t\|_2 \leq b_2 + \epsilon|X) \geq 1-(n_1+n_2) \exp(-\frac{\alpha \epsilon^2}{32 b_1^2})  \ \text{for all} \ X \in \calc $$
\end{corollary}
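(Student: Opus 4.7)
The plan is to reduce the claim to the standard matrix Hoeffding inequality of \cite{tail_bound} by centering the $Z_t$'s around their conditional means, and then to absorb the residual mean via the triangle inequality and assumption (iii). Fix any value $x \in \calc$. Define the centered matrices $Y_t := Z_t - \E(Z_t \mid X=x)$. Conditioned on $X = x$, assumption (i) makes the $Y_t$'s independent, by construction they have zero conditional mean, and they are almost surely bounded by
$$\|Y_t\|_2 \le \|Z_t\|_2 + \|\E(Z_t\mid X=x)\|_2 \le 2b_1,$$
where the first step is the triangle inequality and the second uses Jensen's inequality together with (ii).

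Next I would apply the (rectangular) matrix Hoeffding bound of \cite[Thm.~1.3]{tail_bound} to the sequence $\{Y_t\}_{t=1}^\alpha$ under the conditional measure given $X = x$. Since each summand is bounded in spectral norm by $2b_1$, that bound gives
$$\mathbf{P}\!\left(\Big\|\tfrac{1}{\alpha}\sum_t Y_t\Big\|_2 \ge \epsilon \,\Big|\, X=x\right) \le (n_1+n_2) \exp\!\left(-\tfrac{\alpha \epsilon^2}{32 b_1^2}\right),$$
where the $32b_1^2 = 8\cdot(2b_1)^2$ in the exponent comes from plugging the almost-sure norm bound on $Y_t$ into the standard variance proxy. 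This is the analogue, for rectangular centered matrices, of the calculation that produced the $8(b_2-b_1)^2$ constant in Corollary \ref{hoeffding_nonzero}.

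Finally I would undo the centering. Since $\tfrac{1}{\alpha}\sum_t Z_t = \tfrac{1}{\alpha}\sum_t Y_t + \tfrac{1}{\alpha}\sum_t \E(Z_t\mid X=x)$, assumption (iii) gives $\|\tfrac{1}{\alpha}\sum_t \E(Z_t\mid X=x)\|_2 \le b_2$, so by the triangle inequality the event $\{\|\tfrac{1}{\alpha}\sum_t Y_t\|_2 < \epsilon\}$ is contained in $\{\|\tfrac{1}{\alpha}\sum_t Z_t\|_2 < b_2 + \epsilon\}$. Combining this containment with the bound of the previous step yields the desired inequality for each $x \in \calc$, and hence for all $X \in \calc$.

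There is no real obstacle here; the argument is a routine centering-and-triangle-inequality reduction to \cite{tail_bound}. The only thing to verify carefully is that both the independence in (i) and the almost-sure norm bound in (ii) are assumed to hold under conditioning on $X=x$ for every $x \in \calc$, which is precisely how the hypotheses are stated, so Tropp's inequality applies verbatim to the conditional law.
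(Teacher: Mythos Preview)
Your argument is correct and is exactly the standard centering-plus-triangle-inequality reduction to Tropp's matrix Hoeffding inequality; the paper itself does not spell out a proof but simply cites \cite[Corollary 14]{rrpcp_perf}, of which your outline is the natural underlying argument. The only cosmetic point is that the statement is phrased for Hermitian $n\times n$ matrices yet carries the dimensional prefactor $(n_1+n_2)$; you were right to treat it via the rectangular version of \cite[Theorem 1.3]{tail_bound}, since that is how the corollary is actually applied later in the paper (e.g.\ to the non-square $\tilde{B}_k$ in the proof of Lemma \ref{lem_bound_terms}).
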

Proof: This is slight modification of \cite[Corollary 14]{rrpcp_perf}.  

%

\subsubsection{Linear algebra results}
Kahan and Davis's $\sin \theta$ theorem \cite{davis_kahan} studies the effect of a Hermitian perturbation, $\mathcal{H}$, on a Hermitian matrix, $\mathcal{A}$.


\begin{theorem}[$\sin \theta$ theorem \cite{davis_kahan}] \label{sin_theta}
Given two Hermitian matrices $\mathcal{A}$ and $\mathcal{H}$ satisfying
\beq
\mathcal{A} = \left[ \begin{array}{cc} E & E_{\perp} \\ \end{array} \right]
\left[\begin{array}{cc} A\ & 0\ \\ 0 \ & A_{\perp} \\ \end{array} \right]
\left[ \begin{array}{c} E' \\ {E_{\perp}}' \\ \end{array} \right], \
\mathcal{H} = \left[ \begin{array}{cc} E & E_{\perp} \\ \end{array} \right]
\left[\begin{array}{cc} H \ & B'\ \\ B \ & H_{\perp} \\ \end{array} \right]
\left[ \begin{array}{c} E' \\ {E_{\perp}}' \\ \end{array} \right]
\label{def_A_H}
\eeq
where $[E \ E_{\perp}]$ is an orthonormal matrix. The two ways of representing $\mathcal{A}+\mathcal{H}$ are
\beq
\mathcal{A} + \mathcal{H}  = \left[ \begin{array}{cc} E & E_{\perp} \\ \end{array} \right]
\left[\begin{array}{cc} A + H \ & B'\ \\ B \ & A_{\perp} + H_{\perp} \\ \end{array} \right]
\left[ \begin{array}{c} E' \\ {E_{\perp}}' \\ \end{array} \right]
= \left[ \begin{array}{cc} F & F_{\perp} \\ \end{array} \right]
\left[\begin{array}{cc} \Lambda\ & 0\ \\ 0 \ & \Lambda_{\perp} \\ \end{array} \right]
\left[ \begin{array}{c} F' \\ {F_{\perp}}' \\ \end{array} \right] \nn
\eeq
where $[F\ F_{\perp}]$ is another orthonormal matrix. Let $\mathcal{R} := (\mathcal{A}+\mathcal{H}) E - \mathcal{A}E = \mathcal{H} E $. If $ \lambda_{\min}(A) >\lambda_{\max}(\Lambda_{\perp})$, then
\beq
\|(I-F F')E \|_2 \leq \frac{\|\mathcal{R}\|_2}{\lambda_{\min}(A) - \lambda_{\max}(\Lambda_{\perp})} \nn
\eeq
\end{theorem}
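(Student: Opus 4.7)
}
The plan is to reduce the bound on $\|(I-FF')E\|_2$ to the solution of a Sylvester equation whose coefficients have well-separated spectra, and then invoke the usual norm bound for such equations. First I would observe that since $[F\ F_\perp]$ is orthonormal, $I - FF' = F_\perp F_\perp'$, so $\|(I-FF')E\|_2 = \|F_\perp F_\perp' E\|_2 = \|F_\perp' E\|_2$. Thus it suffices to bound the single matrix $X := F_\perp' E$.

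Next I would extract two eigen-relations from the two block representations. From the first representation $\mathcal{A} = [E\ E_\perp]\,\diag(A,A_\perp)\,[E\ E_\perp]'$ one gets $\mathcal{A} E = E A$, so $\mathcal{H} E = (\mathcal{A}+\mathcal{H})E - EA$ and hence
\[
\mathcal{R} = (\mathcal{A}+\mathcal{H}) E - E A.
\]
From the second representation $\mathcal{A}+\mathcal{H} = [F\ F_\perp]\,\diag(\Lambda,\Lambda_\perp)\,[F\ F_\perp]'$, left-multiplication by $F_\perp'$ yields $F_\perp' (\mathcal{A}+\mathcal{H}) = \Lambda_\perp F_\perp'$. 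Applying $F_\perp'$ on the left of the previous display therefore gives the Sylvester equation
\[
\Lambda_\perp X - X A = F_\perp' \mathcal{R}.
\]

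The last step is to invert this equation in operator norm. Because $A$ and $\Lambda_\perp$ are Hermitian and their spectra satisfy $\lambda_{\min}(A) > \lambda_{\max}(\Lambda_\perp)$, I would diagonalize both sides (writing $A = U_A D_A U_A'$, $\Lambda_\perp = U_\perp D_\perp U_\perp'$) so that in the rotated coordinates $\tilde{X} := U_\perp' X U_A$ the equation becomes entrywise $(d_\perp^{(i)} - d_A^{(j)}) \tilde{X}_{ij} = (U_\perp' F_\perp' \mathcal{R} U_A)_{ij}$. Each denominator has absolute value at least $\delta := \lambda_{\min}(A) - \lambda_{\max}(\Lambda_\perp) > 0$, and this entrywise division corresponds to multiplication by a diagonal operator of norm at most $1/\delta$ in the appropriate Hadamard/pinching sense; taking 2-norms (and using that $U_A, U_\perp$ are orthonormal and that $\|F_\perp'\|_2 = 1$) yields $\|X\|_2 \le \|\mathcal{R}\|_2/\delta$, which is the claim.

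The main obstacle is the last step: a clean norm bound for the Sylvester operator $X \mapsto \Lambda_\perp X - X A$. The subtlety is that, strictly speaking, the entrywise division above does not immediately give a 2-norm bound unless one argues via the integral representation $X = \tfrac{1}{2\pi}\int_{-\infty}^{\infty} (\Lambda_\perp - i t I)^{-1} F_\perp' \mathcal{R} (A - i t I)^{-1}\,dt$ (valid once the spectra are separated), or equivalently via the fact that for Hermitian matrices with spectra separated by $\delta$ the Sylvester operator has inverse of norm exactly $1/\delta$. Either route finishes the proof; I would use the contour/resolvent identity since it makes the gap $\delta$ appear transparently and avoids dealing with the Hadamard product explicitly.
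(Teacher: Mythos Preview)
The paper does not prove Theorem~\ref{sin_theta}; it is stated as a classical result and cited to Davis and Kahan \cite{davis_kahan}. So there is no ``paper's own proof'' to compare against.

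That said, your proposal is the standard argument and is correct. The reduction $\|(I-FF')E\|_2 = \|F_\perp' E\|_2$ and the derivation of the Sylvester equation $\Lambda_\perp X - X A = F_\perp' \mathcal{R}$ with $X = F_\perp' E$ are exactly right. You are also right to flag the last step as the nontrivial one: the entrywise inversion after simultaneous diagonalization gives the bound immediately in Frobenius norm, but for the spectral norm one needs the fact that for Hermitian $A$ and $\Lambda_\perp$ with spectra separated by $\delta$, the Sylvester operator $X \mapsto \Lambda_\perp X - XA$ has inverse of operator norm exactly $1/\delta$. Both routes you name (the resolvent integral, or citing the Bhatia--Davis--McIntosh-type result on Sylvester operators with normal coefficients) close this gap. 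For the integral route, note that one should first shift by $c = \tfrac{1}{2}(\lambda_{\min}(A) + \lambda_{\max}(\Lambda_\perp))$ so that the two spectra lie symmetrically on either side of zero; with that shift the bound $\|X\|_2 \le \|\mathcal{R}\|_2/\delta$ drops out cleanly from $\int_{-\infty}^\infty \frac{dt}{(\delta/2)^2 + t^2} = 2\pi/\delta$.
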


Next we state the Weyl's theorem (Weyl's inequality for matrices) \cite[page 181]{hornjohnson} and the Ostrowski's theorem  \cite[page 224]{hornjohnson}.

\begin{theorem}[Weyl \cite{hornjohnson}]\label{weyl} 
Let $\mathcal{A}$ and $\mathcal{H}$ be two  $n \times n$ Hermitian matrices. For each $i = 1,2,\dots,n$ we have
$$\lambda_i(\mathcal{A}) + \lambda_{\min}(\mathcal{H}) \leq \lambda_i(\mathcal{A}+\mathcal{H}) \leq \lambda_i(\mathcal{A}) + \lambda_{\max}(\mathcal{H})$$
\end{theorem}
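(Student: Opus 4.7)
The plan is to prove Weyl's inequality via the Courant-Fischer min-max (equivalently, max-min) characterization of the eigenvalues of a Hermitian matrix. Recall that for an $n \times n$ Hermitian matrix $B$ with eigenvalues ordered as $\lambda_1(B) \ge \lambda_2(B) \ge \cdots \ge \lambda_n(B)$,
\[
\lambda_i(B) \;=\; \max_{\substack{S \subseteq \mathbb{C}^n \\ \dim S = i}} \;\min_{\substack{x \in S \\ \|x\|_2 = 1}} x^* B x.
\]
This is the one piece of outside machinery the proof will use; it is standard and presumably available from \cite{hornjohnson} just like the statement of Weyl's theorem itself. Since $\mathcal{A}$, $\mathcal{H}$, and $\mathcal{A}+\mathcal{H}$ are all Hermitian, Courant-Fischer applies to each of them and gives a uniform handle on the $i$th eigenvalue.

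First I would establish the pointwise Rayleigh-quotient bounds. For any unit vector $x$, the Rayleigh quotient $x^* \mathcal{H} x$ lies between $\lambda_{\min}(\mathcal{H})$ and $\lambda_{\max}(\mathcal{H})$ (this is itself Courant-Fischer specialized to $i=n$ and $i=1$). Hence for every unit $x$,
\[
x^* \mathcal{A} x + \lambda_{\min}(\mathcal{H}) \;\le\; x^*(\mathcal{A}+\mathcal{H}) x \;\le\; x^* \mathcal{A} x + \lambda_{\max}(\mathcal{H}).
\]

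Next I would propagate these pointwise bounds through the min-max. For the upper bound: fix any subspace $S$ with $\dim S = i$. Taking the min over unit $x \in S$ of the right-hand inequality above yields
\[
\min_{x \in S,\, \|x\|_2=1} x^*(\mathcal{A}+\mathcal{H}) x \;\le\; \min_{x \in S,\, \|x\|_2=1} x^* \mathcal{A} x \;+\; \lambda_{\max}(\mathcal{H}),
\]
because $\lambda_{\max}(\mathcal{H})$ is a constant. Now take the max over all $i$-dimensional subspaces $S$: the left side becomes $\lambda_i(\mathcal{A}+\mathcal{H})$ and the first term on the right becomes $\lambda_i(\mathcal{A})$, giving the desired inequality $\lambda_i(\mathcal{A}+\mathcal{H}) \le \lambda_i(\mathcal{A}) + \lambda_{\max}(\mathcal{H})$. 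The lower bound is completely symmetric: minimizing the left inequality $x^*\mathcal{A} x + \lambda_{\min}(\mathcal{H}) \le x^*(\mathcal{A}+\mathcal{H}) x$ over unit $x \in S$ and then maximizing over $i$-dimensional $S$ gives $\lambda_i(\mathcal{A}) + \lambda_{\min}(\mathcal{H}) \le \lambda_i(\mathcal{A}+\mathcal{H})$.

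There is essentially no obstacle in this proof; the only subtlety worth flagging is that one must apply the same variational formula to all three matrices $\mathcal{A}$, $\mathcal{H}$, and $\mathcal{A}+\mathcal{H}$, and must be careful that the constants $\lambda_{\min}(\mathcal{H})$ and $\lambda_{\max}(\mathcal{H})$ pull out of the inner $\min$ and outer $\max$ cleanly (which they do, precisely because they do not depend on $x$ or $S$). Once the min-max formula is accepted, the result is a two-line consequence. If a self-contained proof were required, one could instead establish the result by a dimension-counting argument: the subspace spanned by eigenvectors of $\mathcal{A}$ corresponding to $\lambda_1(\mathcal{A}),\dots,\lambda_i(\mathcal{A})$ has dimension $i$, and intersecting it with an appropriate $(n-i+1)$-dimensional subspace related to $\mathcal{A}+\mathcal{H}$ produces a nonzero vector witnessing the bound; but the Courant-Fischer route above is the shortest and cleanest.
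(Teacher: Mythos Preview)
Your proof via Courant--Fischer is correct and is the standard textbook argument. The paper itself does not prove this statement at all: it merely quotes Weyl's theorem from \cite{hornjohnson} as a known preliminary result, so there is no ``paper's own proof'' to compare against.
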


\begin{theorem}[Ostrowski \cite{hornjohnson}]\label{ost} 
Let $H$ and $W$ be $n \times n$ matrices, with $H$ Hermitian and $W$ nonsingular. For each $i=1,2 \dots n$, there exists a positive real number $\theta_i$ such that $\lambda_{\min} (WW') \leq \theta_i \leq \lambda_{\max}(W{W}')$ and $\lambda_i(W H {W}') = \theta_i \lambda_i(H)$. Therefore,
$$\lambda_{\min}(W H {W}') \geq \lambda_{\min} (W{W}') \lambda_{\min} (H)$$
\end{theorem}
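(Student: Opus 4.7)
The plan is to separate out a trivial case using Sylvester's law of inertia, and then to verify the existence of $\theta_i$ directly via the Courant-Fischer min-max principle combined with the substitution $y = W'x$. Because $W$ is nonsingular, $WHW'$ is congruent to $H$, so by Sylvester's law these two matrices share the same numbers of positive, negative, and zero eigenvalues. In particular, when $\lambda_i(H) = 0$ we also have $\lambda_i(WHW') = 0$, and any $\theta_i$ in the (strictly positive, nonempty) interval $[\lambda_{\min}(WW'), \lambda_{\max}(WW')]$ satisfies the required identity vacuously.

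For $\lambda_i(H) \neq 0$ I would define $\theta_i := \lambda_i(WHW')/\lambda_i(H)$ and show that it lies in the stated interval. Courant-Fischer gives
$$\lambda_i(WHW') = \max_{\dim S = i}\ \min_{0 \neq x \in S} \frac{x'WHW'x}{x'x}$$
together with a dual min-max expression. The substitution $y = W'x$ is a linear bijection (since $W$ is nonsingular), carries any $i$-dimensional subspace $S$ to the $i$-dimensional subspace $T := W'S$, and turns the quotient into
$$\frac{y'Hy}{y'(WW')^{-1}y} \;=\; \frac{y'Hy}{y'y} \cdot \frac{y'y}{y'(WW')^{-1}y},$$
whose second factor lies in $[\lambda_{\min}(WW'), \lambda_{\max}(WW')]$ for every $y \neq 0$ by the standard Rayleigh bound applied to $(WW')^{-1}$.

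Assuming $\lambda_i(H) > 0$, evaluating on the subspace $T^*$ that attains $\lambda_i(H)$ in the max-min formula for $H$ guarantees $y'Hy \geq \lambda_i(H)\, y'y > 0$ for every $y \in T^*$, so both factors above are positive and I obtain $\lambda_i(WHW') \geq \lambda_i(H)\lambda_{\min}(WW')$. The upper bound $\lambda_i(WHW') \leq \lambda_i(H)\lambda_{\max}(WW')$ follows analogously from the dual min-max applied to the subspace $T^{**}$ attaining $\lambda_i(H)$, but here a short case split on the sign of $y'Hy$ on $T^{**}$ is needed: when $y'Hy \geq 0$ the product bound is direct, while when $y'Hy < 0$ the quotient is negative and thus automatically below the positive target $\lambda_i(H)\lambda_{\max}(WW')$. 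Dividing through by $\lambda_i(H) > 0$ places $\theta_i$ in the stated interval; the case $\lambda_i(H) < 0$ is handled identically by applying the argument to $-H$ (equivalently, by flipping inequality directions when dividing by a negative number). The corollary $\lambda_{\min}(WHW') \geq \lambda_{\min}(WW')\lambda_{\min}(H)$ is then immediate in the positive-semidefinite regime for $H$ (which is the regime in which the theorem is used in this paper, $H$ being a covariance-type matrix) by setting $i=n$: $\lambda_{\min}(WHW') = \theta_n\lambda_{\min}(H) \geq \lambda_{\min}(WW')\lambda_{\min}(H)$. The main nontrivial step throughout is the sign-handling in the upper bound, since the factorization of the Rayleigh-like quotient does not respect signs of its numerator; the case split above resolves this cleanly without any additional machinery.
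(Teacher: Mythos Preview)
The paper does not supply its own proof of this theorem; it is simply quoted from Horn and Johnson's \emph{Matrix Analysis} as a standard preliminary result, so there is no ``paper's proof'' to compare against. Your argument via the Courant--Fischer min--max principle together with the bijective substitution $y = W'x$ is correct and is essentially the standard textbook proof (indeed the one given in Horn--Johnson). Your careful observation that the displayed corollary $\lambda_{\min}(WHW') \geq \lambda_{\min}(WW')\lambda_{\min}(H)$ follows from the main statement only under the additional hypothesis $\lambda_{\min}(H) \geq 0$ is also accurate---when $\lambda_{\min}(H) < 0$ the multiplication by $\theta_n \in [\lambda_{\min}(WW'),\lambda_{\max}(WW')]$ reverses the inequality---and you are right that in the paper's sole application of this result the matrix $H$ is a positive-semidefinite average of outer products $a_{t,k}a_{t,k}'$, so the hypothesis is met.
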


The following lemma uses the $\sin \theta$ theorem and Weyl's theorem. It generalizes the idea of \cite[Lemma 30]{rrpcp_perf}.
\begin{lem} 
\label{sin_theta_weyl}
Suppose that two Hermitian matrices $\mathcal{A}$ and $\mathcal{H}$ can be decomposed as in (\ref{def_A_H})
where $[E \ E_{\perp}]$ is an orthonormal matrix and $A$ is a $c \times c$ matrix.
Also, suppose that the EVD of $\mathcal{A}+\mathcal{H}$ is
$$\mathcal{A}+\mathcal{H} \overset{EVD}{=} \left[ \begin{array}{cc} F & F_{\perp} \\ \end{array} \right]
\left[\begin{array}{cc} \Lambda\ & 0\ \\ 0 \ & \Lambda_{\perp} \\ \end{array} \right]
\left[ \begin{array}{c} F' \\ {F_{\perp}}' \\ \end{array} \right]$$
where
$\Lambda$ is a $c \times c$ diagonal matrix.
If $\lambda_{\min}(A) >  \lambda_{\max}(A_\perp) + \|\mathcal{H}\|_2$, then
\beq
\|(I-F F')E \|_2 \leq \frac{\|\mathcal{H}\|_2}{\lambda_{\min}(A) -  \lambda_{\max}(A_\perp) - \|\mathcal{H}\|_2} \nn
\eeq
\end{lem}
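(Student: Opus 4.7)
The plan is to combine Theorem \ref{sin_theta} (the $\sin\theta$ theorem) with Theorem \ref{weyl} (Weyl's inequality). Once the hypothesis of $\sin\theta$ is verified, it yields
\[
\|(I - FF')E\|_2 \le \frac{\|\mathcal{R}\|_2}{\lambda_{\min}(A) - \lambda_{\max}(\Lambda_\perp)},
\]
with $\mathcal{R} = \mathcal{H}E$. Since $E$ is a basis matrix (orthonormal columns), $\|\mathcal{R}\|_2 \le \|\mathcal{H}\|_2 \|E\|_2 = \|\mathcal{H}\|_2$, so the numerator is already bounded by $\|\mathcal{H}\|_2$. All that remains is to show $\lambda_{\max}(\Lambda_\perp) \le \lambda_{\max}(A_\perp) + \|\mathcal{H}\|_2$; substitution into the $\sin\theta$ bound would then yield the claim.

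The main step, and the main obstacle, is identifying $\Lambda$ as the block of the top $c$ eigenvalues of $\mathcal{A}+\mathcal{H}$ (so that $\Lambda_\perp$ collects the bottom $n-c$ and Weyl gives a useful upper bound). The eigenvalues of $\mathcal{A}$ are the union of those of $A$ and $A_\perp$, and the hypothesis $\lambda_{\min}(A) > \lambda_{\max}(A_\perp) + \|\mathcal{H}\|_2 > \lambda_{\max}(A_\perp)$ ensures that $A$'s eigenvalues are precisely the top $c$ eigenvalues of $\mathcal{A}$; i.e.\ $\lambda_c(\mathcal{A}) = \lambda_{\min}(A)$ and $\lambda_{c+1}(\mathcal{A}) = \lambda_{\max}(A_\perp)$. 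Applying Weyl at indices $c$ and $c+1$ then gives
\[
\lambda_c(\mathcal{A}+\mathcal{H}) \ge \lambda_{\min}(A) - \|\mathcal{H}\|_2, \qquad \lambda_{c+1}(\mathcal{A}+\mathcal{H}) \le \lambda_{\max}(A_\perp) + \|\mathcal{H}\|_2,
\]
and these are strictly separated by the hypothesis. Since the paper's EVD convention orders diagonal entries non-increasingly, the $c\times c$ block $\Lambda$ in the EVD of $\mathcal{A}+\mathcal{H}$ must contain the top $c$ eigenvalues and $\Lambda_\perp$ the remaining $n-c$, so $\lambda_{\max}(\Lambda_\perp) \le \lambda_{\max}(A_\perp) + \|\mathcal{H}\|_2$.

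With this in hand, the hypothesis of Theorem \ref{sin_theta}, namely $\lambda_{\min}(A) > \lambda_{\max}(\Lambda_\perp)$, follows from the lemma's hypothesis, and the denominator obeys $\lambda_{\min}(A) - \lambda_{\max}(\Lambda_\perp) \ge \lambda_{\min}(A) - \lambda_{\max}(A_\perp) - \|\mathcal{H}\|_2 > 0$. Combining with $\|\mathcal{R}\|_2 \le \|\mathcal{H}\|_2$ produces the stated bound. The only non-routine ingredient is the block-ordering argument in the previous paragraph; the rest is a direct assembly of Theorems \ref{sin_theta} and \ref{weyl}. This lemma generalizes \cite[Lemma 30]{rrpcp_perf} in that the perturbation $\mathcal{H}$ is allowed to be arbitrary Hermitian rather than block-diagonal in the $[E,E_\perp]$ basis; this generality is exactly what forces the Weyl step, since in the block-diagonal case $\Lambda_\perp$ is just the perturbation of $A_\perp$ alone.
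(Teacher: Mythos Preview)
Your proposal is correct and follows essentially the same approach as the paper's proof: apply the $\sin\theta$ theorem (Theorem \ref{sin_theta}), bound $\|\mathcal{R}\|_2 \le \|\mathcal{H}\|_2$, use the EVD ordering convention to identify $\lambda_{\max}(\Lambda_\perp) = \lambda_{c+1}(\mathcal{A}+\mathcal{H})$, and apply Weyl's theorem at index $c+1$ together with $\lambda_{c+1}(\mathcal{A}) = \lambda_{\max}(A_\perp)$. Your additional use of Weyl at index $c$ to establish strict separation of $\lambda_c(\mathcal{A}+\mathcal{H})$ and $\lambda_{c+1}(\mathcal{A}+\mathcal{H})$ is a harmless extra step that the paper omits, relying instead directly on the EVD ordering convention.
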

\begin{proof} 
By definition of EVD, $[F \ F_\perp]$ is an orthonormal matrix.
By the $\sin \theta$ theorem, if $\lambda_{\min}(A) > \lambda_{\max}(\Lambda_{\perp})$, then  
$
\|(I-F F')E \|_2 \leq \frac{\|\mathcal{R}\|_2}{\lambda_{\min}(A) - \lambda_{\max}(\Lambda_{\perp})}
$
where $\mathcal{R} := \mathcal{H} E $. Clearly, $\|\mathcal{R}\|_2 \le \|\mathcal{H}\|_2$.
Since $\lambda_{\min}(A) > \lambda_{\max}(A_\perp)$ and $A$ is a $c \times c$ matrix, thus, $\lambda_{c+1}(\mathcal{A}) =\lambda_{\max}(A_\perp)$.

By definition of EVD (eigenvalues arranged in non-increasing order) and since $\Lambda$ is a $c \times c$ matrix, $\lambda_{c+1}(\mathcal{A}+\mathcal{H}) = \lambda_{\max}(\Lambda_{\perp})$.
By Weyl's theorem, $\lambda_{\max}(\Lambda_{\perp}) = \lambda_{c+1}(\mathcal{A}+\mathcal{H}) \le \lambda_{c+1}(\mathcal{A}) + \lambda_{\max}(\mathcal{H})$. Since $\lambda_{\max}(\mathcal{H}) \leq \|\mathcal{H}\|_2$, the result follows.
\end{proof}

The following lemma is a minor modification of \cite[Lemma 10]{rrpcp_perf}.
\begin{lem} \label{lemma0}
Suppose that $P$, $\Phat$ and $Q$ are three basis matrices, $P$ and $\Phat$ are of same size. Also, ${Q}'P = 0$ and $\|(I-\Phat{\Phat}')P\|_2 \le \zeta_*^+$. Then,
\ben
  \item $\|(I-\Phat{\Phat}')PP'\|_2 =\|(I - P{P}')\Phat{\Phat}'\|_2 =  \|(I - P P')\Phat\|_2 = \|(I - \Phat \Phat')P\|_2 \le  \zeta_*^+$
  \item $\|P{P}' - \Phat {\Phat}'\|_2 \leq 2 \|(I-\Phat{\Phat}')P\|_2 \le 2 \zeta_*^+$
  \item $\|{\Phat}' Q\|_2 \leq \zeta_*^+$ \label{lem_cross}
  \item $ \sqrt{1-{\zeta_*^+}^2} \leq \sigma_i((I-\Phat \Phat')Q)\leq 1 $
\een
\end{lem}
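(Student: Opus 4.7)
The plan is to establish all four items through a small set of projection-matrix identities together with the structural facts $P'P=\Phat'\Phat=Q'Q=I$ and $P'Q=0$. The key workhorses are: (a) for any matrix $A$, $\|A\|_2^2=\|AA'\|_2=\|A'A\|_2$; (b) $(I-\Phat\Phat')^2=I-\Phat\Phat'$ and likewise for $I-PP'$; (c) if $C:=\Phat'P$ is square (which it is, since $P$ and $\Phat$ have the same size), then $C'C$ and $CC'$ share the \emph{entire} spectrum, not merely the nonzero one. These are the only tools I need; no perturbation or Weyl-type input is required.

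For item 1, I would first use (a) to peel off trailing factors: setting $B:=(I-\Phat\Phat')P$, $\|BP'\|_2^2=\|BP'PB'\|_2=\|BB'\|_2=\|B\|_2^2$, giving $\|(I-\Phat\Phat')PP'\|_2=\|(I-\Phat\Phat')P\|_2$. The same manipulation applied with the roles of $P$ and $\Phat$ interchanged yields $\|(I-PP')\Phat\Phat'\|_2=\|(I-PP')\Phat\|_2$. To join the two sides I would square and use (b): $\|(I-\Phat\Phat')P\|_2^2=\|P'(I-\Phat\Phat')P\|_2=\|I-C'C\|_2$, and symmetrically $\|(I-PP')\Phat\|_2^2=\|I-CC'\|_2$. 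Now invoke (c): since $C$ is square, $C'C$ and $CC'$ have identical spectra, so $\|I-C'C\|_2=\|I-CC'\|_2$, closing the chain. The bound by $\zeta_*^+$ is the hypothesis.

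For item 2, I would use the algebraic decomposition
\[
PP'-\Phat\Phat' = PP'(I-\Phat\Phat') - (I-PP')\Phat\Phat',
\]
apply the triangle inequality, and bound each summand by $\zeta_*^+$ using item 1. For item 3, the orthogonality $P'Q=0$ gives $\Phat'Q=\Phat'(I-PP')Q$, hence $\|\Phat'Q\|_2\le\|\Phat'(I-PP')\|_2=\|(I-PP')\Phat\|_2\le\zeta_*^+$, again by item 1.

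For item 4, the upper bound is immediate since both $I-\Phat\Phat'$ and $Q$ are contractions. For the lower bound I would compute
\[
\bigl((I-\Phat\Phat')Q\bigr)'\bigl((I-\Phat\Phat')Q\bigr) = Q'Q - Q'\Phat\Phat'Q = I - (\Phat'Q)'(\Phat'Q),
\]
whose eigenvalues are $1-\sigma_i^2(\Phat'Q)\ge 1-(\zeta_*^+)^2$ by item 3; taking square roots gives the claim. The only step that requires more than routine bookkeeping is the middle equality in item 1, and the subtlety there is exactly that $P$ and $\Phat$ have matching widths so that $C=\Phat'P$ is square and $C'C$, $CC'$ have coincident spectra; this is the one hypothesis without which the equality would only hold up to the nonzero part of the spectrum and the argument would break.
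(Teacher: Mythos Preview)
Your proof is correct in all four parts; the projection-idempotence identities and the square-$C$ observation for item 1 are exactly the right tools, and items 2--4 follow cleanly from item 1 as you describe. The paper itself does not give a self-contained argument here (it defers to \cite[Lemma 10]{rrpcp_perf}), but your approach is the standard one and is essentially what that reference contains.
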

\begin{proof} The result follows exactly as in the proof of \cite[Lemma 10]{rrpcp_perf}. \end{proof} 

\subsubsection{Sparse Recovery Error Bound}
The following is a minor modification of \cite[Theorem 1]{candes_rip} applied to exact sparse signals.
\begin{theorem}[\cite{candes_rip}]
\label{candes_csbound}
Suppose we observe $y := \Psi x + z$
where $z$ is the noise. Let $\hat{x}$ be the solution to following problem
\beq
\min_{x} \|x\|_1 \ \text{subject to} \  \|y - \Psi x\|_2 \leq \xi  \label{*}
\eeq
Assume that $x$ is $s$-sparse, $\|z\|_2 \leq \xi$ and $\delta_{2s}(\Psi) \le b <  (\sqrt{2}-1)$. The solution of (\ref{*}) obeys
$\|\hat{x} - x\|_2 \leq C_1 \xi$ with $C_1 := \frac{4\sqrt{1+b}}{1-(\sqrt{2}+1)b}$.
\end{theorem}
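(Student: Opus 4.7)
The plan is to follow the standard Candès--Tao cone-plus-tube argument, specialized to the case where $x$ is exactly $s$-sparse so that the best-$s$-term approximation error vanishes. First, since $\|y - \Psi x\|_2 = \|z\|_2 \le \xi$, the true $x$ is feasible for (\ref{*}), and optimality of $\hat{x}$ gives $\|\hat{x}\|_1 \le \|x\|_1$. Writing $h := \hat{x} - x$ and $T_0 := \text{supp}(x)$ with $|T_0| \le s$, splitting $\|x + h\|_1$ over $T_0$ and $T_0^c$ and using the reverse triangle inequality yields the cone inequality $\|h_{T_0^c}\|_1 \le \|h_{T_0}\|_1$. Feasibility of both $x$ and $\hat{x}$ also gives the tube bound $\|\Psi h\|_2 \le 2\xi$.

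Next, partition $T_0^c$ into disjoint index sets $T_1, T_2, \dots$ each of size $s$, where $T_1$ contains the indices of the $s$ largest-magnitude entries of $h_{T_0^c}$, $T_2$ the next $s$, and so on. The usual monotonicity argument produces the shell bound $\sum_{j \ge 2} \|h_{T_j}\|_2 \le s^{-1/2} \|h_{T_0^c}\|_1$, which together with the cone inequality and $\|h_{T_0}\|_1 \le \sqrt{s}\,\|h_{T_0}\|_2$ gives $\sum_{j \ge 2} \|h_{T_j}\|_2 \le \|h_{T_0}\|_2 \le \|h_{T_{01}}\|_2$, where $T_{01} := T_0 \cup T_1$ has cardinality at most $2s$. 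It is here that exact $s$-sparsity is used: for a general $x$ an extra $s^{-1/2}\|x - x_s\|_1$ term would appear on the right, but it is zero in our case.

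The core of the proof is the RIP estimate. Since $|T_{01}| \le 2s$, the lower RIP bound gives $(1 - \delta_{2s}) \|h_{T_{01}}\|_2^2 \le \|\Psi h_{T_{01}}\|_2^2 = \langle \Psi h_{T_{01}}, \Psi h \rangle - \sum_{j \ge 2} \langle \Psi h_{T_{01}}, \Psi h_{T_j} \rangle$. The first inner product is bounded using Cauchy--Schwarz, the upper RIP, and the tube bound as $|\langle \Psi h_{T_{01}}, \Psi h \rangle| \le 2 \xi \sqrt{1+\delta_{2s}}\, \|h_{T_{01}}\|_2$. For the cross terms, I write $T_{01} = T_0 \cup T_1$ and invoke the standard consequence of RIP that $|\langle \Psi u, \Psi v \rangle| \le \delta_{2s} \|u\|_2 \|v\|_2$ whenever $u, v$ have disjoint supports of combined size at most $2s$, applied to $(T_0, T_j)$ and $(T_1, T_j)$ separately. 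Using $\|h_{T_0}\|_2 + \|h_{T_1}\|_2 \le \sqrt{2}\, \|h_{T_{01}}\|_2$ then gives $|\langle \Psi h_{T_{01}}, \Psi h_{T_j} \rangle| \le \sqrt{2}\, \delta_{2s}\, \|h_{T_{01}}\|_2\, \|h_{T_j}\|_2$. Combining with the shell bound yields
\[
(1 - \delta_{2s}) \|h_{T_{01}}\|_2 \le 2\xi \sqrt{1+\delta_{2s}} + \sqrt{2}\, \delta_{2s}\, \|h_{T_{01}}\|_2,
\]
so that $\|h_{T_{01}}\|_2 \le \frac{2 \sqrt{1+\delta_{2s}}}{1 - (\sqrt{2}+1)\delta_{2s}}\, \xi$ provided $\delta_{2s} < \sqrt{2} - 1$.

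Finally, $\|h\|_2^2 = \|h_{T_{01}}\|_2^2 + \|h_{T_{01}^c}\|_2^2$, and the shell bound gives $\|h_{T_{01}^c}\|_2 \le \sum_{j \ge 2} \|h_{T_j}\|_2 \le \|h_{T_{01}}\|_2$, so $\|h\|_2 \le 2 \|h_{T_{01}}\|_2$. Replacing $\delta_{2s}$ by its upper bound $b$ in the monotone increasing RHS produces $\|h\|_2 \le C_1 \xi$ with $C_1 = 4 \sqrt{1+b} / (1 - (\sqrt{2}+1)b)$, as claimed. The only delicate bookkeeping is the $\sqrt{2}$ factor coming from $\|h_{T_0}\|_2 + \|h_{T_1}\|_2 \le \sqrt{2}\,\|h_{T_{01}}\|_2$; this is precisely what sharpens the RIP threshold from $1/2$ (which a naive application of RIP on $T_{01}$ as a single block would yield) to $\sqrt{2} - 1$, and must be carried through carefully so that RIP is only invoked on pairs of disjoint supports of combined size at most $2s$.
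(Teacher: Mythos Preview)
Your proof is correct and is precisely the standard Cand\`es argument from \cite{candes_rip} specialized to exactly $s$-sparse signals. Note, however, that the paper does not supply its own proof of this theorem: it is stated in the Preliminaries as a quoted result from \cite{candes_rip} and used as a black box, so there is no in-paper proof to compare against.
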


%

\section{Problem Definition and Model Assumptions}\label{probdef}
We give the problem definition below followed by the model and three key assumptions.

\subsection{Problem Definition}
\label{model}
The measurement vector at time $t$, $M_t$, is an $n$ dimensional vector which can be decomposed as
\beq
M_t = L_t + S_t \label{problem_defn}
\eeq
Here $S_t$ is a sparse vector with support set size at most $s$ and minimum magnitude of nonzero values at least $S_{\min}$. $L_t$ is a dense but low dimensional vector, i.e. $L_t = P_{(t)} a_t$ where $P_{(t)}$ is an $n \times r_{(t)}$ basis matrix with $r_{(t)} \ll n$, that changes every so often. $P_{(t)}$ and $a_t$ change according to the model given below. We are given an accurate estimate of the subspace in which the initial $t_\train$ $L_t$'s lie, i.e. we are given a basis matrix $\Phat_0$ so that $\|(I-\Phat_0 \Phat_0')P_0 \|_2$ is small.
Here $P_0$ is a basis matrix for $\Span({\cal L}_{t_{\train}})$, i.e. $\Span(P_0) = \Span({\cal L}_{t_{\train}})$. Also,  for the first $t_{\train}$ time instants, $S_t$ is either zero or very small. 
The goal is
\ben
\item to estimate both $S_t$ and $L_t$ at each time $t > t_\train$, and
\item to estimate $\Span(P_{(t)})$ every-so-often, i.e., update $\hat{P}_{(t)}$ so that the subspace estimation error, $\SE_{(t)}:=\|(I- \hat{P}_{(t)} \hat{P}_{(t)}')P_{(t)}\|_2$ is small. \label{item2}
\een


{\bf Notation for $S_t$. }
Let $T_t:=\{i: \  (S_t)_i \neq 0 \}$ denote the support of $S_t$.
Define $$S_{\min}: = \min_{t > t_{\train}} \ \min_{i \in T_t} |(S_t)_i |, \ \ \text{and} \ \ s: = \max_t |T_t|$$

\begin{ass}[Model on $L_t$] \label{model_lt}
We assume that $L_t = P_{(t)} a_t$ where $P_{(t)}$ and $a_t$ satisfy the following.
\ben
\item $P_{(t)} = P_j$ for all $t_j \leq t <t_{j+1}$, $j=0,1,2 \cdots J$, where $P_j$ is an $n \times r_j$ basis matrix with $r_j  \ll n$ and $r_j \ll (t_{j+1} - t_j)$. We let $t_0=0$ and $t_{J+1}$ equal the sequence length. This can be infinity also.
At the change times, $t_j$, $P_j$ changes as $P_j = [( P_{j-1} \setminus P_{j,\old}) \ P_{j,\new}]$. Here, $P_{j,\new}$ is an $n \times c_{j,\new}$ basis matrix with $P_{j,\new}'P_{j-1} = 0$ and $P_{j,\old}$ contains $c_{j,\old}$ columns of $P_{j-1}$. Thus $r_j = r_{j-1} + c_{j,\new} - c_{j,\old}$. Also, $0 < t_{\train} \le t_1$.
This model is illustrated in Fig. \ref{add_del_model}.

\item There exists a constant $c_{\max}$ such that $0 \le c_{j,\new} \leq c_{\max}$ and $\sum_{i=1}^{j} (c_{i,\new} - c_{i,\old}) \leq c_{\max}$ for all $j$. Let $r_{\max}: = r_0 + c_{\max}$. Thus, $r_j = r_0+\sum_{i=1}^{j} (c_{i,\new} - c_{i,\old}) \leq r_0+c_{\max} = r_{\max}$, i.e., the rank of $P_j$ is upper bounded by $r_{\max}$.

\item $a_t:={P_{(t)}}'L_t$, is a $r_j$ length random variable (r.v.) with the following properties.
\ben
\item $a_t$'s are mutually independent over $t$.
\item $a_t$ is a zero mean bounded r.v., i.e. $\E(a_t) = 0$ and there exists a constant $\gamma_*$ such that $\|a_t\|_{\infty} \leq \gamma_*$ for all $t$.
\item Its covariance matrix $\Lambda_t: = \text{Cov}[a_t] = \E(a_ta_t')$ is diagonal with $\lambda^-: = \min_t \lambda_{\min}(\Lambda_t)  > 0$ and $\lambda^+:=\max_t \lambda_{\max} (\Lambda_t) < \infty$. Thus, the condition number of any $\Lambda_t$ is bounded by $f: = \frac{\lambda^+}{\lambda^-}$.
\een

\een
\end{ass}

Also, $P_j$ and $a_t$ satisfy the assumptions discussed in the next three subsections.

\begin{definition}\label{defn_at}
The following notation will be used frequently. Let $P_{j,*}:= P_{(t_j-1)} = P_{j-1}$. For $t \in [t_j, t_{j+1}-1]$, let $a_{t,*} := {P_{j,*}}'L_t = {P_{j-1}}'L_t$ be the projection of $L_t$ along $P_{j,*}$ of which $a_{t,*,\text{nz}}:= ( P_{j-1} \setminus P_{j,\old})' L_t$ is the nonzero part. Also, let $a_{t,\new} := P_{j,\new}'L_t$ be the projection of $L_t$ along the newly added directions. Thus,
$$a_{t,*} = \vect{a_{t,*,\text{nz}}}{{\bf 0}} \ \text{and} \  a_t = \vect{a_{t,*,\text{nz}}}{a_{t,\new}}$$
where ${\bf 0}$ is a $c_{j,\old}$ length zero vector (since ${P_{j,\old}}'L_t ={\bf  0}$).
Using the above, for $t \in [t_j, t_{j+1}-1]$, $L_t$ can be rewritten as
$$L_t =  P_j a_t = ( P_{j-1} \setminus P_{j,\old}) a_{t,*,\text{nz}} +  P_{j,\new} a_{t,\new} = P_{j,*}a_{t,*} +  P_{j,\new} a_{t,\new}$$
and $\Lambda_t$ can be split as
$$\Lambda_t = \left[ \begin{array}{cc}
(\Lambda_t)_{*,\text{nz}} &  0 \nn \\
0  & (\Lambda_t)_\new  \nn \\
\end{array}
\right]$$
where $(\Lambda_t)_{*,\text{nz}}: = \text{Cov}(a_{t,*,\text{nz}})$ and $(\Lambda_t)_\new = \text{Cov}(a_{t,\new})$ are diagonal matrices.
\end{definition}


\begin{figure}[t!]
\centerline{
\epsfig{file = 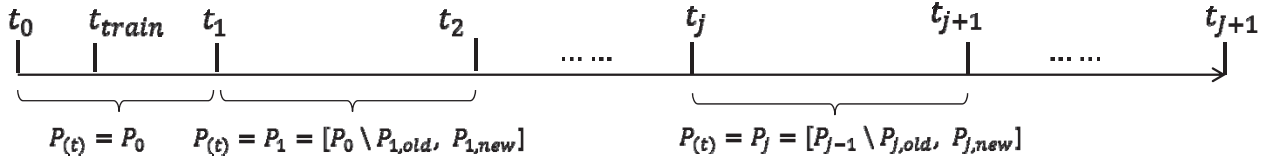, width =18cm}
}
\caption{\small{The subspace change model given in Sec \ref{model}. Here $t_0=0$. 
}}
\label{add_del_model}
\end{figure}

\subsection{Slow subspace change}\label{slowss}
By slow subspace change we mean all of the following.
\ben
\item First, the delay between consecutive subspace change times, $t_{j+1} - t_j$, is large enough. 

\item Second, the projection of $L_t$ along the newly added directions, $a_{t,\new}$, is initially small, i.e. $\max_{t_j \le t < t_j + \alpha} \|a_{t,\new}\|_{\infty} \le \gamma_\new, \ \text{with} \ \gamma_\new \ll \gamma_* \ \text{and} \  \gamma_\new \ll S_{\min},$ but can increase gradually. We model this as follows. Split the interval $[t_j, t_{j+1}-1]$ into $\alpha$ length periods. We assume that
    $$\max_j \max_{t \in [t_j + (k-1) \alpha,t_j + k \alpha-1]} \|a_{t,\new}\|_{\infty} \le \gamma_{\new,k}:=  \min(v^{k-1} \gamma_\new,\gamma_*)$$
    for a $v>1$ but not too large\footnote{Small $\gamma_\new$ and slowly increasing $\gamma_{\new,k}$ is needed for the noise seen by the sparse recovery step to be small. However, if $\gamma_\new$ is zero or very small, it will be impossible to estimate the new subspace. This will not happen in our model because $\gamma_\new \ge \lambda^- > 0$.}. This assumption is verified for real video data in \cite[Sec X-B]{rrpcp_perf}.

\item Third, the number of newly added directions is small, i.e.  $c_{j,\new} \le c_{\max} \ll r_{0}$. This is also verified in \cite[Sec X-B]{rrpcp_perf}.

\een

\subsection{Measuring denseness of a matrix and its relation with RIC}
\label{denseness}
For a tall $n \times r$ matrix, $B$, or for a $n \times 1$ vector, $B$, we define the the denseness coefficient as follows \cite{rrpcp_perf}:
\bea
\kappa_s(B): = \max_{|T| \le s} \frac{\|{I_T}' B\|_2}{\|B\|_2}.
\label{defkappa}
\eea
where $\|.\|_2$ is the matrix or vector 2-norm respectively. Clearly, $\kappa_s(B) \le 1$.
As explained in \cite{rrpcp_perf}, $\kappa_s$ measures the denseness (non-compressibility) of a vector $B$ or of the columns of a matrix $B$. For a vector, a small value indicates that its entries  are spread out, i.e. it is a dense vector. A large value indicates that it is compressible (approximately or exactly sparse). Similarly, for an $n \times r$ matrix $B$, a small $\kappa_s$ means that most (or all) of its columns are dense vectors.

For a basis matrix $P$, $\kappa_s(PP') = \kappa_s(P)$ and thus $\kappa_s(P)$ is a property of $\Span(P)$ \cite{rrpcp_perf}. 

\begin{remark}
A better way to quantify denseness of a matrix $B$ would be to define the denseness coefficient as $\max_{|T| \le s} \|{I_T}'Q(B)\|_2$ where $Q(B)$ is a basis matrix for $\Span(B)$, e.g. it can be obtained by QR decomposition on $B$. This definition will ensure that the denseness coefficient is a property of $\Span(B)$ for any matrix $B$. It is easy to see that $\|{I_T}' B\|_2 \le \|{I_T}'Q(B)\|_2 \|B\|_2$. Thus, even with this new definition, all our results, and all results of \cite{rrpcp_perf}, will go through without any change. However, we keep the definition of (\ref{defkappa}) because it was used in \cite{rrpcp_perf} and the current work uses certain lemmas from \cite{rrpcp_perf}.
\end{remark}

The following lemma was proved in  \cite{rrpcp_perf}.
\begin{lem}\label{delta_kappa}
For an $n \times r$ basis matrix $P$  (i.e $P$ satisfying $P'P=I$), 
$$\delta_s(I-PP') = \kappa_s^2 (P).$$
\end{lem}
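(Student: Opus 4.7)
\textbf{Proof plan for Lemma~\ref{delta_kappa}.} Set $\Psi := I - PP'$. Since $P$ is a basis matrix, $\Psi$ is an orthogonal projector: $\Psi = \Psi'$ and $\Psi^2 = \Psi$. The plan is to compute $\Psi_T{}'\Psi_T$ explicitly, use this to get a clean closed form for $\|\Psi_T x\|_2^2$, and then read off $\delta_s(\Psi)$ from Definition~\ref{defn_delta} by examining which of the two inequalities is the binding one.

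First I would write $\Psi_T = \Psi I_T = I_T - PP' I_T$ and compute
\[
\Psi_T{}' \Psi_T \;=\; I_T{}'(I-PP')(I-PP')I_T \;=\; I_T{}'(I-PP')I_T \;=\; I - I_T{}' P P' I_T,
\]
using $\Psi^2 = \Psi$ and $I_T{}' I_T = I$. For any vector $x$ of length $|T|$ this gives the key identity
\[
\|\Psi_T x\|_2^2 \;=\; x'(I - I_T{}'PP'I_T)x \;=\; \|x\|_2^2 - \|P'I_T x\|_2^2.
\]

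Next I would plug this into the RIC inequalities $(1-\delta)\|x\|_2^2 \le \|\Psi_T x\|_2^2 \le (1+\delta)\|x\|_2^2$. The upper inequality becomes $-\|P'I_T x\|_2^2 \le \delta\|x\|_2^2$, which holds for every $\delta \ge 0$ and so imposes no constraint. The lower inequality becomes $\|P'I_T x\|_2^2 \le \delta \|x\|_2^2$, which must hold for all $|T|\le s$ and all $x$. The smallest $\delta$ for which this holds is therefore
\[
\delta_s(\Psi) \;=\; \max_{|T|\le s}\ \sup_{x\ne 0}\frac{\|P'I_T x\|_2^2}{\|x\|_2^2} \;=\; \max_{|T|\le s} \|P' I_T\|_2^2 \;=\; \max_{|T|\le s} \|I_T{}' P\|_2^2.
\]

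Finally, since $P'P=I$ gives $\|P\|_2 = 1$, the definition (\ref{defkappa}) yields $\kappa_s(P) = \max_{|T|\le s}\|I_T{}' P\|_2$, so the displayed quantity equals $\kappa_s^2(P)$, completing the proof. There is really no obstacle here; the only point to be careful about is noting that it is the \emph{lower} RIC inequality that is active (the upper one is automatic because $I-PP'$ is a contraction), so one does not have to take a maximum of two expressions.
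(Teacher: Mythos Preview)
Your proof is correct and is the natural direct computation: exploit that $I-PP'$ is an orthogonal projector to get $\Psi_T{}'\Psi_T = I - I_T{}'PP'I_T$, observe that the upper RIC inequality is vacuous, and read off $\delta_s$ from the lower one. The paper does not actually prove this lemma here---it simply cites \cite{rrpcp_perf}---so there is no in-paper argument to compare against, but your approach is exactly the standard one and would match what one finds in that reference.
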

In other words, if $P$ is dense enough (small $\kappa_s$), then the RIC of $I-PP'$ is small. 
%
%
As we explain in \cite[Sec IV-D]{rrpcp_perf}, $\kappa_s(B)$ is related to the denseness assumption required by PCP \cite{rpca}.

\begin{figure}[t!]
\centerline{
\epsfig{file = 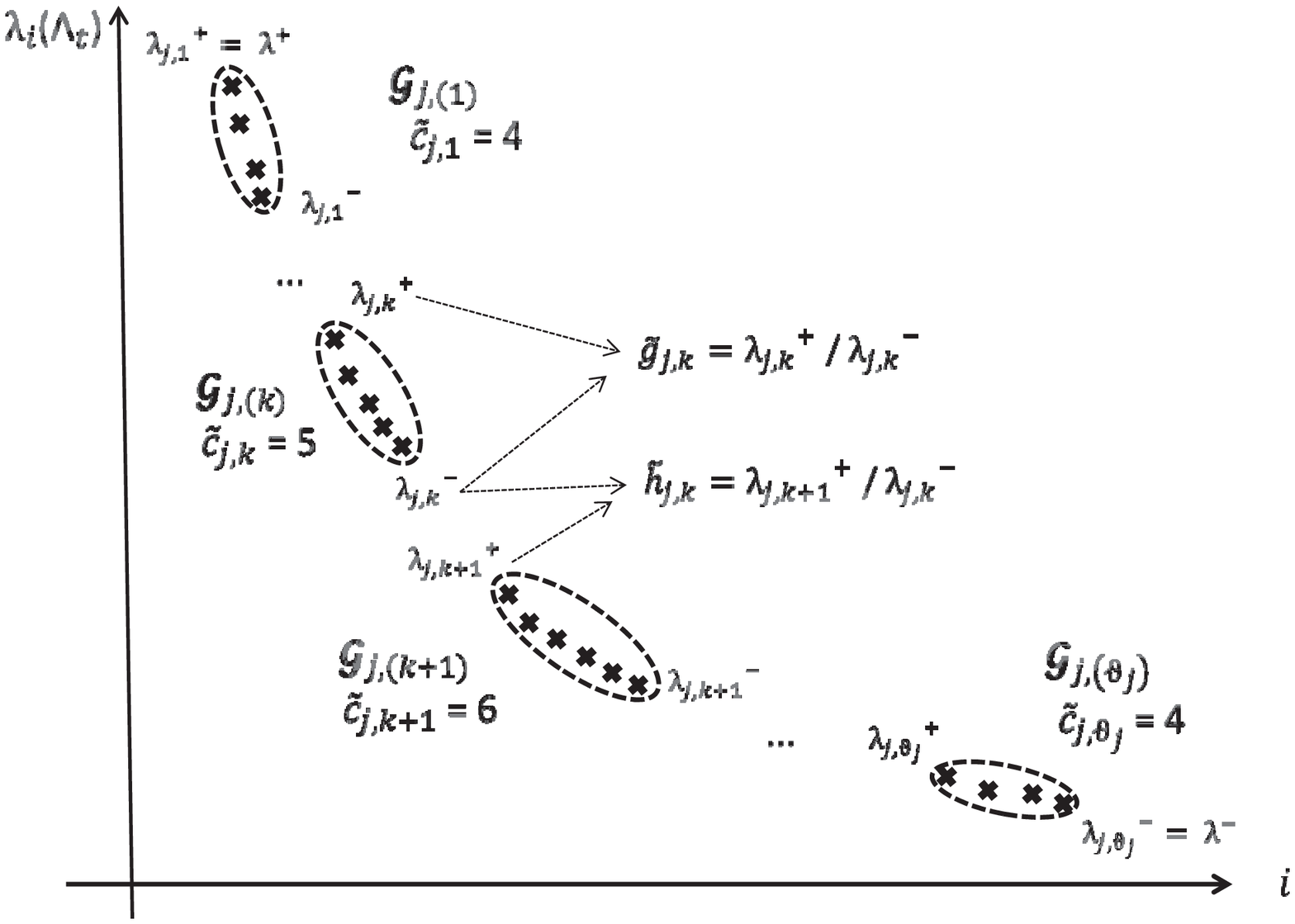, height = 4cm, width = 10cm}
}
\caption{\small{We illustrate the clustering assumption. Assume $\Lambda_t = \Lambda_{\tilde{t}_j}$.
}}
\label{clustering_diag}
\end{figure}

%
\subsection{Clustering assumption}\label{eigencluster}
\newcommand{\group}{\mathcal{G}}
For positive integers $K$ and $\alpha$, let $\tilde{t}_j := t_j + K \alpha$. We set their values in our main result, Theorem \ref{thm2}.  
Recall from the model on $L_t$ and the slow subspace change assumption that new directions, $P_{j,\new}$, get added at $t=t_j$ and initially, for the first $\alpha$ frames,  the projection of $L_t$ along these directions is small (and thus their variances are small), but can increase gradually. It is fair to assume that by $t=\tilde{t}_j$, the variances along these new directions have stabilized and do not change much for $t \in [\tilde{t}_j, t_{j+1}-1]$. It is also fair to assume that the same is true for the variances along the existing directions, $P_{j-1}$. In other words, we assume that the matrix $\Lambda_t$ is either constant or does not change much during this period. Under this assumption, we assume that we can cluster its eigenvalues (diagonal entries) into a few clusters such that the distance between consecutive clusters is large and the distance between the smallest and largest element of each cluster is small. 
We make this precise below.

\begin{ass} \label{assu1}
Assume the following.
\ben
\item 
Either $\Lambda_t = \Lambda_{\tilde{t}_j}$ for all ${t \in [\tilde{t}_j, t_{j+1}-1]}$ or $\Lambda_t$ changes very little during this period so that for each $i=1,2,\cdots,r_j$, $\min_{t \in [\tilde{t}_j, t_{j+1}-1]} \lambda_i(\Lambda_t) \ge \max_{t \in [\tilde{t}_j, t_{j+1}-1]} \lambda_{i+1}(\Lambda_t)$. 

\item  Let $ \group_{j,(1)}, \group_{j,(2)}, \cdots, \group_{j,(\vartheta_j)} $ be a partition of the index set $\{1,2, \dots r_j\}$ so that $\min_{i \in \group_{j,(k)}} \min_{t \in [\tilde{t}_j, t_{j+1}-1]} \lambda_i(\Lambda_t)  > \max_{i \in \group_{j,(k+1)}} \max_{t \in [\tilde{t}_j, t_{j+1}-1]} \lambda_i(\Lambda_t)$, i.e. the first group/cluster contains the largest set of eigenvalues, the second one the next smallest set and so on (see Fig \ref{clustering_diag}). Let
\ben
\item $G_{j,k} := (P_j)_{ \group_{j,(k)} }$ be the corresponding cluster of eigenvectors, then $P_j = [G_{j,1},G_{j,2},\cdots,G_{j,\vartheta_j}]$;
\item $\tilde{c}_{j,k} := |\group_{j,(k)}|$ be the number of elements in $\group_{j,(k)}$, then $\sum_{k=1}^{\vartheta_j} \tilde{c}_{j,k} = r_j$;

\item  ${\lambda_{j,k}}^- := \min_{i\in \group_{j,(k)} } \min_{t\in [\tilde{t}_j, t_{j+1}-1]}  \lambda_i (\Lambda_t)$,  ${\lambda_{j,k}}^+ := \max_{i \in \group_{j,(k)} } \max_{t\in [\tilde{t}_j, t_{j+1}-1]}  \lambda_i (\Lambda_t)$ and ${\lambda_{j,\vartheta_j+1} }^+:= 0$;

\item $\tilde{g}_{j,k} := {\lambda_{j,k}}^+ / {\lambda_{j,k}}^- $ (notice that $\tilde{g}_{j,k} \ge 1$);

\item $\tilde{h}_{j,k} := {\lambda_{j,k+1}}^+ / {\lambda_{j,k}}^-$ (notice that $\tilde{h}_{j,k} < 1$);
\item $\tilde{g}_{\max} := \max_j  \max_{k = 1,2,\cdots,\vartheta_j} \tilde{g}_{j,k}$, $\tilde{h}_{\max} := \max_j  \max_{k = 1,2,\cdots,\vartheta_j} \tilde{h}_{j,k}$, $\tilde{c}_{\min} : = \min_j \min_{k = 1,2,\cdots,\vartheta_j} \tilde{c}_{j,k}$
\item  $\vartheta_{\max}: = \max_j \vartheta_j$
\een

 We assume that $\tilde{g}_{\max}$ is small enough (the distance between the smallest and largest eigenvalues of a cluster is small)
and $\tilde{h}_{\max}$ is small enough (distance between consecutive clusters is large). We quantify this in Theorem \ref{thm2}.
\een
\end{ass}

\begin{remark}
The assumption above can, in fact, be relaxed to only require the following. The matrices $\Lambda_t$ are such that there exists a partition, $ \group_{j,(1)}, \group_{j,(2)}, \cdots, \group_{j,(\vartheta_j)} $, of the index set $\{1,2, \dots r_j\}$ so that $\min_{i \in \group_{j,(k)}} \min_{t \in [\tilde{t}_j, t_{j+1}-1]} \lambda_i(\Lambda_t)  > \max_{i \in \group_{j,(k+1)}} \max_{t \in [\tilde{t}_j, t_{j+1}-1]} \lambda_i(\Lambda_t)$. Define all quantities as above. We assume that $\tilde{g}_{\max}$ and $\tilde{h}_{\max}$ are small enough.
\end{remark}

\section{ReProCS with cluster-PCA (ReProCS-cPCA)} \label{rep_del_perf} 

We first briefly recap the main idea of projection-PCA (proj-PCA) which was used in \cite{rrpcp_perf}. The ReProCS with cluster-PCA (ReProCS-cPCA) algorithm is then explained. In Sec \ref{prac_algo}, we discuss how to set its parameters in practice when the model may not be known. The need for proj-PCA is explained in Sec \ref{ppca_need}.
We need the following notation.
\begin{definition}\label{defn_intervals}
Let  $\tilde{t}_j := t_j + K\alpha$. Define the following time intervals
\ben
\item $\mathcal{I}_{j,k}:= [t_j + (k-1)\alpha, t_j + k\alpha-1]$ for $k=1,2,\cdots,K$.
\item $\tilde{\mathcal{I}}_{j,k} := [\tilde{t}_j + (k-1) \tilde{\alpha}, \tilde{t}_j + k \tilde{\alpha}-1]$ for $k = 1,2,\cdots, \vartheta_j$.
\item $\tilde{\mathcal{I}}_{j,\vartheta_j+1} := [\tilde{t}_j + \vartheta_j \tilde{\alpha}, t_{j+1}-1]$.
\een
Notice that $[t_j, t_{j+1}-1] = (\cup_{k=1}^{K} \mathcal{I}_{j,k})\cup (\cup_{k=1}^{\vartheta_j} \tilde{\mathcal{I}}_{j,k}) \cup \mathcal{\tilde{I}}_{j,\vartheta_j+1}$. Also, $K$, $\alpha$ and $\tilde{\alpha}$ are parameters given in Algorithm \ref{ReProCS_del}.
\end{definition}

\subsection{The Projection-PCA algorithm}

Given a data matrix $\mathcal{D}$, a basis matrix $P$ and an integer $r$, projection-PCA (proj-PCA) applies PCA on $\mathcal{D}_{\text{proj}}:=(I-PP')\mathcal{D}$, i.e., it computes the top $r$ eigenvectors (the eigenvectors with the largest $r$ eigenvalues) of $\frac{1}{\alpha_{\mathcal{D}}} \mathcal{D}_{\text{proj}} {\mathcal{D}_{\text{proj}}}'$. Here $\alpha_{\mathcal{D}}$ is the number of column vectors in $\mathcal{D}$. This is summarized in Algorithm \ref{algo_pPCA}.

If $P =[.]$, then projection-PCA reduces to standard PCA, i.e. it computes the top $r$ eigenvectors of $\frac{1}{\alpha_{\mathcal{D}}} \mathcal{D} {\mathcal{D}}'$.

We should mention that the idea of projecting perpendicular to a partly estimated subspace has been used in different contexts in past work \cite{PP_PCA_Li_Chen, mccoy_tropp11}.

%
%

\begin{algorithm}
\caption{projection-PCA: $Q \leftarrow \text{proj-PCA}(\mathcal{D},P,r)$}\label{algo_pPCA}
\ben
\item Projection: compute $\mathcal{D}_{\text{proj}} \leftarrow (I - P P') \mathcal{D}$
\item PCA: compute $\frac{1}{\alpha_{\mathcal{D}}}  \mathcal{D}_{\text{proj}}{\mathcal{D}_{\text{proj}}}' \overset{EVD}{=}
\left[ \begin{array}{cc}Q & Q_{\perp} \\\end{array}\right]
\left[ \begin{array}{cc} \Lambda & 0 \\0 & \Lambda_{\perp} \\\end{array}\right]
\left[ \begin{array}{c} Q' \\ {Q_{\perp}}'\\\end{array}\right]$
where $Q$ is an $n \times r$ basis matrix and  $\alpha_{\mathcal{D}}$ is the number of columns in $\mathcal{D}$.
\een
\end{algorithm}


\begin{algorithm}[ht]
\caption{Recursive Projected CS with cluster-PCA (ReProCS-cPCA)}\label{ReProCS_del}
{\bf Parameters: } algorithm parameters: $\xi$, $\omega$, $\alpha$, $\tilde{\alpha}$, $K$, model parameters: $t_j$, $r_0$, $c_{j,\new}$, $\vartheta_j$ and $\tilde{c}_{j,i}$ 
\\
{\bf Input: } $n \times 1$ vector, $M_t$, and $n \times r_0$ basis matrix $\hat{P}_0$.
{\bf Output: } $n \times 1$ vectors $\Shat_t$ and $\Lhat_t$, and $n \times r_{(t)}$ basis matrix $\Phat_{(t)}$. 
\\
{\bf  Initialization: } 
Let $\Phat_{(t_\train)} \leftarrow \Phat_0$.
Let $j \leftarrow 1$, $k\leftarrow 1$. 
For $t > t_{\train}$, do the following:
\ben
\item {\bf Estimate $T_t$ and $S_t$ via Projected CS: }
\ben
\item \label{othoproj} Nullify most of $L_t$: compute $\Phi_{(t)} \leftarrow I-\Phat_{(t-1)} {\Phat_{(t-1)}}'$, $y_t \leftarrow \Phi_{(t)} M_t$
\item \label{Shatcs} Sparse Recovery: compute $\hat{S}_{t,\cs}$ as the solution of $\min_{x} \|x\|_1 \ s.t. \ \|y_t - \Phi_{(t)} x\|_2 \leq \xi$
\item \label{That} Support Estimate: compute $\hat{T}_t = \{i: \ |(\hat{S}_{t,\cs})_i| > \omega\}$
\item \label{LS} LS Estimate of $S_t$: compute $(\hat{S}_t)_{\hat{T}_t}= ((\Phi_t)_{\hat{T}_t})^{\dag} y_t, \ (\hat{S}_t)_{\hat{T}_t^{c}} = 0$
\een
\item {\bf Estimate $L_t$. } $\hat{L}_t = M_t - \hat{S}_t$.
\item \label{PCA} 
{\bf Update $\Phat_{(t)}$}: 
\ben
\item If $t \neq t_j + q\alpha-1$ for any $q=1,2, \dots K$ and $t \neq t_j + K \alpha + \vartheta_j \tilde{\alpha} -1$,
\ben
\item set $\Phat_{(t)} \leftarrow \Phat_{(t-1)}$
\een
\item {\bf Addition: Estimate $\Span(P_{j,\new})$ iteratively using proj-PCA: } If $t = t_j + k\alpha-1$
\ben

\item $\Phat_{j,\new,k} \leftarrow \text{proj-PCA} ([\hat{L}_t; t\in \mathcal{I}_{j,k}], \Phat_{j-1}, c_{j,\new})$

\item set $\Phat_{(t)} \leftarrow [\Phat_{j-1} \ \Phat_{j,\new,k}]$.

\item If $k=K$, reset $k \leftarrow 1$; else increment $k \leftarrow k+1$.
\een

\item {\bf Deletion: Estimate $\Span(P_j)$ by cluster-PCA:} If $t= t_j + K \alpha + \vartheta_j \tilde{\alpha} -1$,

\ben
\item For $i = 1,2,\cdots, \vartheta_j$,
  \bi
  \item $\hat{G}_{j,i} \leftarrow \text{proj-PCA}( [\hat{L}_t; t \in \tilde{\mathcal{I}}_{j,k}], [\hat{G}_{j,1},\hat{G}_{j,2}, \dots \hat{G}_{j,i-1}], \tilde{c}_{j,i})$
  \ei
  End for
\item set $\Phat_j \leftarrow [\hat{G}_{j,1},\cdots, \hat{G}_{j,\vartheta_j}]$ and set $\Phat_{(t)} \leftarrow \Phat_{j}$.
\item increment $ j \leftarrow j+1$.
\een

\een
\een
\end{algorithm}

\subsection{The ReProCS-cPCA algorithm}

ReProCS-cPCA is summarized in Algorithm \ref{ReProCS_del}. It proceeds as follows. The algorithms begins with the knowledge of $\Phat_0$ and initializes $\Phat_{(t_\train)} \leftarrow \Phat_0$. $\Phat_0$ can be computed as the top $r_0$ left singular vectors of ${\cal M}_{t_\train}$  (since, by assumption, ${\cal S}_{t_\train}$ is either zero or very small).
For $t > t_\train$, the following is done. Step 1 projects $M_t$ perpendicular to $\Phat_{(t-1)}$, solves the $\ell_1$ minimization problem, followed by support recovery and finally computes a least squares (LS) estimate of $S_t$ on its estimated support. This final estimate $\Shat_t$ is used to estimate $L_t$ as $\Lhat_t = M_t-\Shat_t$ in step 2. The sparse recovery error, $e_t: = \Shat_t -S_t$. Since $\Lhat_t = M_t - \Shat_t$, $e_t$ also satisfies $e_t =L_t-\Lhat_t$. Thus, a small $e_t$ (accurate recovery of $S_t$) means that $L_t$ is also recovered accurately. Step 3a is used at times when no subspace update is done. In step 3b, the estimated $\Lhat_t$'s are used to obtain improved estimates of $\Span(P_{j,\new})$ every $\alpha$ frames for a total of $K \alpha$ frames using the proj-PCA procedure given in Algorithm \ref{algo_pPCA}. As explained in \cite{rrpcp_perf}, within $K$ proj-PCA updates ($K$ chosen as given in Theorem \ref{thm2}), it can be shown that both $\|e_t\|_2$ and the subspace error,  $\SE_{(t)}: = \|(I - \Phat_{(t)} \Phat_{(t)}') P_{(t)}\|_2$, drop down to a constant times ${\zeta}$. In particular, if at $t=t_{j}-1$,  $\SE_{(t)} \le r \zeta$, then at $t= \tilde{t}_j:=t_j + K \alpha$, we can show that $\SE_{(t)} \le (r + c_{\max}) \zeta$. Here $r:=r_{\max} = r_0 + c_{\max}$.

To bring $\SE_{(t)}$ down to $r \zeta$ before $t_{j+1}$, we need a step so that by $t=t_{j+1}-1$ we have an estimate of only $\Span(P_j)$, i.e. we have ``deleted" $\Span(P_{j,\old})$. One simple way to do this is by standard PCA: at $t=\tilde{t}_j + \tilde{\alpha}-1$, compute $\Phat_j \leftarrow \text{proj-PCA}([\Lhat_t; t \in \tilde{\mathcal{I}}_{j,1}], [.], r_j)$ and let $\Phat_{(t)} \leftarrow \Phat_j$.
Using the $\sin \theta$ theorem and the Hoeffding corollaries, it can be shown that, as long as $f$ is small enough, doing this is guaranteed to give an accurate estimate of $\Span(P_j)$. However $f$ being small is not compatible with the slow subspace change assumption. Notice from Sec \ref{probdef} that $\lambda^- \le \gamma_\new$ and $\E[||L_t||_2^2] \le r \lambda^+$. Slow subspace change implies that $\gamma_\new$ is small. Thus, $\lambda^-$ is small.
However, to allow $L_t$ to have large magnitude, $\lambda^+$ needs to be large. Thus,  $f = \lambda^+ / \lambda^-$ cannot be small unless we require that $L_t$ has small magnitude for all times $t$.



In step 3c, we introduce a generalization of the above strategy called cluster-PCA, that removes the bound on $f$, but instead only requires that the eigenvalues of $\text{Cov}(L_t)$ be sufficiently clustered as explained in Sec \ref{eigencluster}. The main idea is to recover one cluster of entries of $P_j$ at a time. In the $k^{th}$ iteration, we apply proj-PCA on  $[\Lhat_t; t \in \tilde{I}_{j,k}]$ with $P \leftarrow [\hat{G}_{j,1}, \hat{G}_{j,2}, \dots \hat{G}_{j,k-1}])$ to estimate $\Span(G_{j,k})$. The first iteration uses $P \leftarrow [.]$, i.e. it computes standard PCA to estimate $\Span(G_{j,1})$.
By modifying the approach used in \cite{rrpcp_perf} for analyzing the addition step, we can show that since $\tilde{g}_{j,k}$ and $\tilde{h}_{j,k}$ are small enough (by Assumption \ref{assu1}), $\Span(G_{j,k})$ will be accurately recovered, i.e. $\|(I - \sum_{i=1}^{k} \hat{G}_{j,i} \hat{G}_{j,i}')G_{j,k}\|_2 \le \tilde{c}_{j,k} \zeta$. We do this $\vartheta_j$ times and finally we set $\Phat_j \leftarrow [\hat{G}_{j,1}, \hat{G}_{j,2} \dots \hat{G}_{j,\vartheta_j}]$ and $\Phat_{(t)} \leftarrow \Phat_j$. All of this is done at $t=\tilde{t}_j + \vartheta_j \tilde{\alpha}-1$.
Thus, at this time, $\SE_{(t)} = \|(I - \Phat_j \Phat_j') P_j\|_2 \le \sum_{k=1}^{\vartheta_j}  \|(I - \sum_{i=1}^{k} \hat{G}_{j,i} \hat{G}_{j,i}') G_{j,k} \|_2 \le  \sum_{k=1}^{\vartheta_j} \tilde{c}_{j,k} \zeta =  r_j \zeta \le r \zeta$.
Under the assumption that $t_{j+1} - t_j \ge K \alpha + \vartheta_{\max} \tilde{\alpha}$, this means that before the next subspace change time, $t_{j+1}$, $\SE_{(t)}$ is below $r \zeta$. 

We illustrate the ideas of subspace estimation by addition proj-PCA and cluster-PCA in Fig. \ref{add_del_proj_pca_diag2}. We discuss the connection between proj-PCA done in the addition step and the cluster-PCA (for deletion) step in Table \ref{tab_diff} given in Sec \ref{connect}.


\begin{figure}[t!]
\centerline{
\epsfig{file = 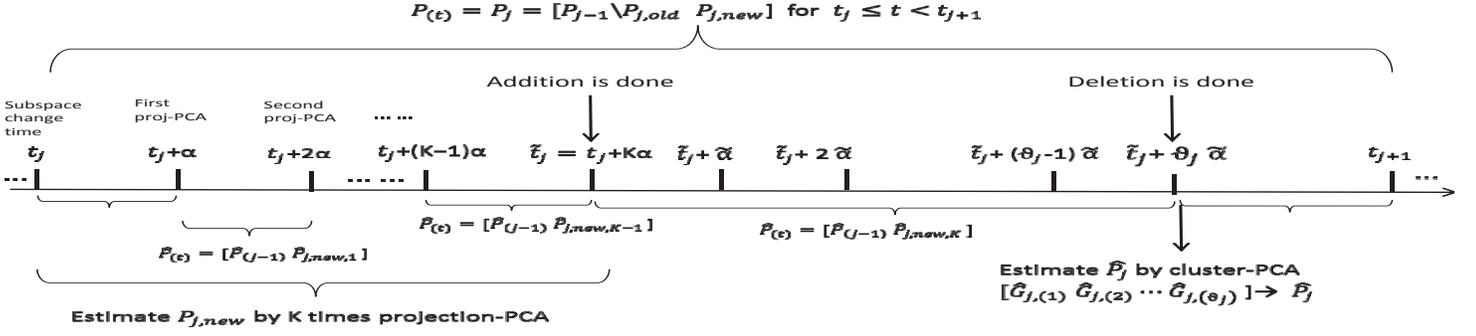, width =20cm, height = 5cm}
}
\caption{\small{A diagram illustrating subspace estimation by ReProCS-cPCA
}}
\label{add_del_proj_pca_diag2}
\end{figure}

\subsection{Practical Parameter Settings} \label{prac_algo}
The ReProCS-cPCA algorithm has parameters $\xi$, $\omega$, $\alpha$, $\tilde{\alpha}$, $K$ and it uses knowledge of model parameters $t_j$, $r_0$, $c_{j,\new}$, $\vartheta_j$ and $\tilde{c}_{j,i}$. If the model is known the algorithm parameters can be set as in Theorem \ref{thm2}. In practice, typically the model is unknown. In this case, the parameters $t_j$, $r_0$, $c_{j,\new}$, $\xi$, $\omega$, $K$ can be set as explained in \cite{rrpcp_perf}.
The parameters $\vartheta_j$ and $\tilde{c}_{j,i}$ for $i=1,2 \dots \vartheta_j$, can be set by computing the eigenvalues of $\frac{1}{\tilde{\alpha}} \sum_{t \in \tilde{I}_{j,1}}  \hat{L}_t \hat{L}_t'$ and clustering them using any standard clustering algorithm, e.g. k-means clustering or split-and-merge\footnote{One simple split-and-merge approach is as follows. Start with a single cluster. Split into two clusters: select the split so that $\tilde{g}_{\max}$ is minimized. Split each of these clusters into two parts again while ensuring $\tilde{g}_{\max}$ is minimized. Keep doing this for $d_1$ steps. Notice that, with every splitting, $\tilde{g}_{\max}$ will either remain the same or reduce, however $\tilde{h}_{\max}$ will either remain same or increase. Then, do a set of merge steps: in each step find the pair of consecutive clusters to merge that will minimize $\tilde{h}_{\max}$.}. We pick $\alpha$ and $\tilde{\alpha}$ somewhat arbitrarily. A thumb rule is that $\alpha$ and $\tilde\alpha$ need to be at least five to ten times $c_{\max}$ and $\max_j \max_{i=1,2 \dots \vartheta_j} \tilde{c}_{j,i}$ respectively. From simulation experiments, the algorithm is not very sensitive to the specific choice.

\subsection{The need for Projection-PCA}\label{ppca_need}
The reason standard PCA cannot be used and we need proj-PCA is because $e_t= \Lhat_t - L_t$ is correlated with $L_t$. The discussion here also applies to recursive or online PCA which is just a fast algorithm for computing standard PCA. In most existing works that analyze finite sample PCA, e.g. see \cite{nadler} and references therein, the noise or error in the ``data" used for PCA (here $\Lhat_t$'s) is uncorrelated with the true values of the data (here $L_t$'s) and is zero mean. Thus, when computing the eigenvectors of $(1/\alpha)\sum_t \Lhat_t \Lhat_t'$, the dominant term of the perturbation, $(1/\alpha) \sum_t \Lhat_t \Lhat_t' - (1/\alpha)\sum_t L_t L_t'$, is  $(1/\alpha)\sum_t e_t e_t'$ (the terms $(1/\alpha)\sum_t L_t e_t'$ and its transpose are close to zero w.h.p. due to law of large numbers). By assuming that the error/noise $e_t$ is small enough, the perturbation can be made small enough.

However, for our problem, because $e_t$ and $L_t$ are correlated, the dominant terms in the perturbation seen by standard PCA will be $(1/\alpha)\sum_t L_t {e_t}'$ and its transpose. Since $L_t$ can have large magnitude, the bound on the perturbation will be large and this will create problems when applying the $\sin \theta$ theorem (Theorem \ref{sin_theta}) to bound the subspace error. On the other hand, when using proj-PCA, $L_t$ gets replaced by $(I - \Phat_{j-1}\Phat_{j-1}')L_t$ (in the addition step) or by $(I - \sum_{i=1}^k \hat{G}_i \hat{G}_i')L_t$ (in cluster-PCA) and this results in significantly smaller perturbation.
We have explained this point in detail in Appendix F of \cite{rrpcp_perf}.


\section{Performance Guarantees} \label{perf_g}

We state the main result first and then discuss it in the next subsection. We give its corollary for the case where $f$ is small in Sec \ref{f_small_sec}. The proof outline is given in Sec \ref{detailed} and the proof is given in Sec \ref{thmproof}.

\subsection{Main Result}

\begin{theorem} \label{thm2}
Consider Algorithm \ref{ReProCS_del}.  Let $c:= c_{\max}$ and $r:= r_0 + c$.
Assume that $L_t$ obeys the model given in Assumption \ref{model_lt}. Also, assume that the initial subspace estimate is accurate enough, i.e. $\|(I - \Phat_0 \Phat_0') P_0\| \le r_0 \zeta$, for a $\zeta$ that satisfies
$$\zeta  \leq  \min(\frac{10^{-4}}{(r+c)^2},\frac{1.5 \times 10^{-4}}{(r+c)^2 f},\frac{1}{(r+c)^{3}\gamma_*^2}) \ \text{where} \ f := \frac{\lambda^+}{\lambda^-}$$
Let $\xi_0(\zeta), \rho, K(\zeta), \alpha_{\add}(\zeta), \alpha_{\del}(\zeta)$, $g_{j,k}$ be as defined in Definition \ref{defn_alpha}.
If the following conditions hold:
\ben
\item {\em (algorithm parameters) }
$\xi = \xi_0(\zeta), \   7 \rho \xi \leq \omega \leq S_{\min} - 7 \rho \xi,  \ K = K(\zeta), \ \alpha \ge \alpha_{\text{add}}(\zeta), \ \tilde{\alpha} \ge \alpha_{\del}(\zeta)$,

\item  {\em (denseness) }
\bea
&& \max_j \kappa_{2s} (P_{j-1}) \leq \kappa_{2s,*}^+ = 0.3, \ \max_j  \kappa_{2s}(P_{j,\new}) \leq \kappa_{2s,\new}^+ = 0.15,  \nn\\
&&  \max_j  \max_{0 \leq k \leq K} \kappa_{2s}(D_{j,\new,k}) \le \kappa_s^+ = 0.15, \  \max_j  \max_{0 \leq k \leq K} \kappa_{2s}(Q_{j,\new,k}) \le \tilde{\kappa}_{2s}^+ = 0.15, \nn \\  
&& \max_j \kappa_s ((I-\hat{P}_{j-1} {\hat{P}_{j-1}}' - \hat{P}_{j,\new,K} {\hat{P}_{j,\new,K}}')P_j) \leq \kappa_{s,e}^+ \nn
\eea
where $D_{j,\new,k}:= (I - \Phat_{j-1} \Phat_{j-1}'-\Phat_{j,\new,k} \Phat_{j,\new,k}')P_{j,\new}$,
and $Q_{j,\new,k}: = (I-P_{j,\new}{P_{j,\new}}')\Phat_{j,\new,k}$ and $\Phat_{j,\new,0} = [.]$,

\item  {\em (slow subspace change) }  
\bea
&& \max_{j} (t_{j+1} -t_j) > K  \alpha + \vartheta_{\max} \tilde{\alpha}, \nn \\
&& \max_{j} \max_{t \in \mathcal{I}_{j,k}} \|a_{t,\new}\|_{\infty} \leq \gamma_{\new,k}:= \min (1.2^{k-1} \gamma_{\new}, \gamma_*), \ \text{for all} \ k=1,2, \dots K, \nn \\ %
&& 14 \rho \xi_0 (\zeta) \le S_{\min}, \nn
\eea

\item {\em (small average condition number of $Cov(a_{t,\new})$) } 
$g_{j,k} \le g^+:=\sqrt{2}$,

\item {\em (clustered eigenvalues) } Assumption \ref{assu1} holds with $\tilde{g}_{\max},\tilde{h}_{\max}, \tilde{c}_{\min}$ satisfying $f_{dec}(\tilde{g}_{\max},\tilde{h}_{\max}) - \frac{f_{inc}(\tilde{g}_{\max},\tilde{h}_{\max})}{\tilde{c}_{\min} \zeta} > 0$ where $f_{dec}(\tilde{g}_{\max},\tilde{h}_{\max})$ and $f_{inc}(\tilde{g}_{\max},\tilde{h}_{\max})$ are defined in Definition \ref{defzetap} (also see Remark \ref{f_inc_rem} which weakens this requirement),

\een

then, with probability at least $1 -  2 n^{-10}$, at all times, $t$, 
\ben
\item 
$
\That_t = T_t \  \text{and} \  \|e_t\|_2 = \|L_t - \hat{L}_t\|_2 = \|\hat{S}_t - S_t\|_2 \le 0.18 \sqrt{c} \gamma_{\new} + 1.24 \sqrt{\zeta}.
$ 

\item the subspace error, $\SE_{(t)}$ satisfies
\bea
\SE_{(t)} &\leq&  \left\{  \begin{array}{ll}
0.6^{k-1} + r \zeta + 0.4 c \zeta  &  \ \text{if}  \    t \in \mathcal{I}_{j,k}, \ k=1,2,\cdots,K\\  
(r+c) \zeta                        & \  \text{if} \   t \in \cup_{k=1}^{\vartheta_j} \tilde{\mathcal{I}}_{j,k}  \\
r \zeta                            & \  \text{if} \  t \in \tilde{\mathcal{I}}_{j,\vartheta_j+1}
\end{array} \right. \nn \\
 & \le  &   \left\{  \begin{array}{ll}
0.6^{k-1} + 10^{-2} \sqrt{\zeta}  & \  \text{if}  \  t \in \mathcal{I}_{j,k}, \ k=1,2,\cdots,K \nn \\
10^{-2} \sqrt{\zeta}   & \  \text{if}  \  t \in (\cup_{k=1}^{\vartheta_j} \tilde{\mathcal{I}}_{j,k}) \cup \tilde{\mathcal{I}}_{j,\vartheta_j+1}  
\end{array} \right.
\eea

\item the error $e_t = \hat{S}_t - S_t = L_t - \hat{L}_t$ satisfies the following at various times
\bea
\|e_t\|_2  & \le &  \left\{  \begin{array}{ll}
1.17 [ 0.15 \cdot 0.72^{k-1} \sqrt{c}\gamma_{\new} + 0.15 \cdot 0.4 c \zeta \sqrt{c} \gamma_* + r \zeta \sqrt{r} \gamma_*]   & \ \ \text{if}  \ \   t \in \mathcal{I}_{j,k}, \ k=1,2,\cdots,K \\
1.17(r+c) \zeta \sqrt{r} \gamma_* &  \ \ \text{if} \ \ t \in \cup_{k=1}^{\vartheta_j} \tilde{\mathcal{I}}_{j,k} \\
1.17 r\zeta \sqrt{r} \gamma_* &  \ \ \text{if} \ \ t \in\tilde{\mathcal{I}}_{j,\vartheta_j+1} \\
\end{array} \right. \nn \\
 & \le &  \left\{  \begin{array}{ll}
0.18 \cdot 0.72^{k-1} \sqrt{c}\gamma_{\new} + 1.17 \cdot 1.06 \sqrt{\zeta}  & \ \ \text{if} \ \   t \in \mathcal{I}_{j,k}, \ k=1,2,\cdots,K  \nn \\ 
1.17 \sqrt{\zeta}  & \ \ \text{if} \ \ t \in  (\cup_{k=1}^{\vartheta_j} \tilde{\mathcal{I}}_{j,k}) \cup \tilde{\mathcal{I}}_{j,\vartheta_j+1}
\end{array} \right.
\eea

\een
\end{theorem}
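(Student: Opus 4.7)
The natural strategy is a double induction: an outer induction on the subspace change index $j$, with the hypothesis that at $t=t_j-1$ the subspace estimate satisfies $\SE_{(t)}\le r\zeta$; and inside each interval $[t_j,t_{j+1}-1]$, a sequential argument across the $K$ addition proj-PCA updates followed by the $\vartheta_j$ cluster-PCA iterations. The base case ($j=1$) is given by the assumption on $\Phat_0$. At every intermediate time I would prove two coupled facts: first, a ``sparse recovery lemma'' that turns any current bound on $\SE_{(t-1)}$ into exact support recovery ($\hat T_t=T_t$) and the stated $\|e_t\|_2$ bound; and second, a ``subspace update lemma'' that turns the stream of good $\hat L_t$'s over an interval of length $\alpha$ (or $\tilde\alpha$) into an improved bound on $\SE_{(t)}$ at the update time. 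The sparse recovery half goes through the chain: denseness plus Lemma \ref{delta_kappa} controls $\delta_{2s}(\Phi_{(t)})$; Theorem \ref{candes_csbound} then bounds the CS error in terms of $\|\Phi_{(t)}L_t\|_2\le\SE_{(t-1)}\|L_t\|_2$; the choice $7\rho\xi\le\omega\le S_{\min}-7\rho\xi$ forces $\hat T_t=T_t$; and the LS step, combined with the RIC bound $\|((\Phi_t)_{T_t}'(\Phi_t)_{T_t})^{-1}\|_2\le 1/(1-\delta_s)$, yields the stated two-term bound $0.18\sqrt{c}\gamma_\new+1.24\sqrt\zeta$.

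For the addition phase on $\mathcal{I}_{j,k}$, I would mirror the analysis of \cite{rrpcp_perf}: view $\frac{1}{\alpha}\sum_{t\in\mathcal{I}_{j,k}}(I-\Phat_{j-1}\Phat_{j-1}')\hat L_t\hat L_t'(I-\Phat_{j-1}\Phat_{j-1}')$ as a perturbation of its ``signal'' part coming from $P_{j,\new}a_{t,\new}$, with three perturbation pieces: a noise-noise piece $\sum e_te_t'$, and two signal-noise cross terms $\sum L_te_t'$ and its transpose. Proj-PCA is essential here because projecting by $I-\Phat_{j-1}\Phat_{j-1}'$ replaces $L_t$ in the cross terms by $(I-\Phat_{j-1}\Phat_{j-1}')L_t$, whose norm is controlled by the previous $\SE$ bound. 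Each cross term is bounded using Corollary \ref{hoeffding_rec} (using the conditional independence of $a_t$ given past events and the boundedness $\|a_t\|_\infty\le\gamma_*$), while the signal part has minimum eigenvalue at least roughly $\lambda_{\new,k}^-$. Applying Lemma \ref{sin_theta_weyl} with $A$ equal to the signal eigenblock and $A_\perp$ the rest then delivers the geometric bound $\SE\le 0.6^{k-1}+r\zeta+0.4c\zeta$ at $t=t_j+k\alpha-1$, and a careful choice of $\alpha_{\add}(\zeta)$ absorbs the Hoeffding failure probabilities into the global $2n^{-10}$ budget.

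The cluster-PCA phase is the genuinely new piece. Having reached $\SE\le(r+c)\zeta$ at $t=\tilde t_j$, I proceed by an inner induction on the cluster index $i=1,\dots,\vartheta_j$, asserting after step $i$ that $\|(I-\sum_{m\le i}\hat G_{j,m}\hat G_{j,m}')G_{j,m}\|_2\le \tilde c_{j,m}\zeta$ for all $m\le i$. The analysis is again via Lemma \ref{sin_theta_weyl} applied to $\frac{1}{\tilde\alpha}\sum_{t\in\tilde{\mathcal{I}}_{j,i}}(I-\sum_{m<i}\hat G_{j,m}\hat G_{j,m}')\hat L_t\hat L_t'(\cdot)'$, but now the eigen-gap is \emph{internal} to $\Lambda_t$: it is the gap between $\lambda_{j,i}^-$ and $\lambda_{j,i+1}^+$, which the clustering assumption makes $\ge (1-\tilde h_{\max})\lambda_{j,i}^-$, while the ``unwanted'' variance from other clusters inside the block $G_{j,i}$ itself is of order $\tilde g_{\max}\lambda_{j,i}^-$. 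These are exactly the quantities encoded in $f_{dec}$ and $f_{inc}$; the hypothesis $f_{dec}-f_{inc}/(\tilde c_{\min}\zeta)>0$ is precisely what is needed for the $\sin\theta$ denominator to dominate the numerator and drive each cluster's error below $\tilde c_{j,i}\zeta$. Summing over $i$ via $\sum_i\tilde c_{j,i}=r_j\le r$ yields $\SE\le r\zeta$ at $t=\tilde t_j+\vartheta_j\tilde\alpha-1$, completing the outer induction and propagating the hypothesis to $j+1$.

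The main obstacle I expect is handling the cluster-PCA step's perturbation terms with sufficient tightness. Unlike the addition step, where the ``signal'' eigenvalues $\lambda_{\new,k}^-$ may be as small as $\lambda^-$ but the perturbation scales with $\SE_{(t-1)}$ and $\gamma_{\new,k}$ (both controlled by slow subspace change), the cluster-PCA cross terms involve $L_t$ whose projection onto $(I-\sum_{m<i}\hat G_{j,m}\hat G_{j,m}')$ is only controlled by $\tilde\kappa_{2s}^+$-style denseness plus the previous induction bounds, and the ``signal gap'' shrinks with $\tilde h_{\max}$. Making $f_{dec}-f_{inc}/(\tilde c_{\min}\zeta)>0$ propagate cleanly through all $\vartheta_j$ sub-iterations, while simultaneously keeping the Hoeffding exceptional sets on the order $n^{-10}$ summed over all $j,i,t$ (which is why $\alpha_{\del}(\zeta)$ must be chosen carefully), is where the bookkeeping is heaviest. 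Everything else is the now-standard ReProCS loop: sparse recovery $\Rightarrow$ accurate $\hat L_t$ $\Rightarrow$ tighter subspace estimate $\Rightarrow$ even more accurate sparse recovery on the next interval.
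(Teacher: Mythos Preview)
Your proposal is correct and follows essentially the same approach as the paper: the double induction is formalized there via a chain of nested events $\Gamma_{j,k_1,k_2}^e$ and Lemma \ref{subset_lem}, with Lemma \ref{lem_add} (addition step, imported from \cite{rrpcp_perf} with $\zeta_*^+=r\zeta$) and Lemma \ref{lem_del} (cluster-PCA step) each split into a sparse-recovery part (Lemmas \ref{cslem}, \ref{lem_et}) and a high-probability subspace-error part (Lemmas \ref{defnPCA}, \ref{lem_bound_terms}) proved exactly as you describe via Lemma \ref{sin_theta_weyl} and the Hoeffding corollaries. One small refinement: the denseness coefficient that actually enters the cluster-PCA perturbation bounds is $\kappa_{s,e}^+$ (denseness of $\Phi_{j,K}P_j$, which controls $e_t$), not $\tilde\kappa_{2s}^+$; the control of $\Psi_{k-1}L_t$ itself comes from a purely linear-algebraic lemma (Lemma \ref{bound_R}) that gives $\|\Psi_{k-1}G_{\text{det},k}\|_2\le r\zeta$ and $\|G_{\text{undet},k}'E_k\|_2\le r^2\zeta^2/\sqrt{1-r^2\zeta^2}$ directly from the inner induction hypothesis, with no extra denseness needed.
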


The above result says the following. Assume that the initial subspace error is small enough. If the assumptions given in the theorem hold, then, w.h.p., we will get exact support recovery at all times. Moreover, the sparse recovery error (and the error in recovering $L_t$) will always be bounded by $0.18\sqrt{c} \gamma_\new$ plus a constant times $\sqrt{\zeta}$. Since $\zeta$ is very small, $\gamma_\new \ll S_{\min}$, and $c$ is also small, the normalized reconstruction error for $S_t$ will be small at all times, thus making this a meaningful result. In the second conclusion, we bound the subspace estimation error, $\SE_{(t)}$. When a subspace change occurs, this error is initially bounded by one. The above result shows that, w.h.p., with each adddition proj-PCA step, this error decays roughly exponentially and falls below $(r+c)\zeta$ within $K$ steps. After the cluster-PCA step, this error falls below $r\zeta$. By assumption, this occurs before the next subspace change time. Because of the choice of $\zeta$, both $(r+c)\zeta$ and $r \zeta$ are below $0.01 \sqrt{\zeta}$. The third conclusion shows that the sparse recovery error as well as the error in recovering $L_t$ decay in a similar fashion. 

\subsection{Discussion}\label{discuss}
Notice from Definition \ref{defn_alpha} that $K = K(\zeta)$ is larger if $\zeta$ is smaller. Also, both $\alpha_\add(\zeta)$ and $\alpha_\del(\zeta)$ are inversely proportional to $\zeta$. Thus, if we want to achieve a smaller lowest error level, $\zeta$, we need to compute both addition proj-PCA and cluster-PCA's over larger durations, $\alpha$ and $\tilde\alpha$ respectively, and we will need more number of addition proj-PCA steps $K$. Because of slow subspace change, this means that we also require a larger delay between subspace change times, i.e. larger $t_{j+1}-t_j$.

\subsubsection{Comparison with ReProCS}
The ReProCS algorithm of \cite{rrpcp_perf} is Algorithm \ref{ReProCS_del} with step 3c removed and replaced by $\Phat_j \leftarrow [\Phat_{j-1}, \Phat_{j,\new,K}]$.
Let us compare the above result with that for ReProCS for the subspace change model of Assumption \ref{model_lt} \cite[Corollary 43]{rrpcp_perf}.
First, ReProCS requires $\kappa_{2s}([P_0, P_{1,\new}, \dots P_{J,\new}]) \le 0.3$ whereas ReProCS-cPCA only requires $\max_j \kappa_{2s}(P_j) \le 0.3$. Moreover, ReProCS requires $\zeta$ to satisfy $\zeta  \leq  \min(\frac{10^{-4}}{(r_0+(J-1)c)^2},\frac{1.5 \times 10^{-4}}{(r_0+(J-1)c)^2 f},\frac{1}{(r_0+(J-1)c)^{3}\gamma_*^2})$ whereas in case of ReProCS-cPCA the denominators in the bound on $\zeta$ only contain $r + c = r_0 + 2c$ (instead of $r_0+(J-1)c$).

Because of the above, in Theorem \ref{thm2} for ReProCS-cPCA, the only place where $J$ (the number of subspace change times) appears is in the definitions of $\alpha_\add$ and $\alpha_\del$. Notice that $\alpha_\add$ and $\alpha_\del$ govern the delay  between subspace change times, $t_{j+1}-t_j$. Thus, with ReProCS-cPCA, $J$ can keep increasing, as long as $t_{j+1}-t_j$ also increases accordingly. Moreover, notice that the dependence of $\alpha_\add$ and $\alpha_\del$ on $J$ is only logarithmic and thus $t_{j+1}-t_j$ needs to only increase in proportion to $\log J$. 
On the other hand, for ReProCS (see \cite[Corollary 43]{rrpcp_perf}), $J$ appears in the denseness assumption, in the bound on $\zeta$ and in the definition of $\alpha_\add$. Thus, ReProCS needs a bound on $J$ that is indirectly imposed by the denseness assumption. 

The main extra assumptions that ReProCS-cPCA needs are (i) the clustering assumption (Assumption \ref{assu1} with $\tilde{h}_{\max}, \tilde{g}_{\max}$ being small enough to satisfying $f_{dec}(\tilde{g}_{\max},\tilde{h}_{\max}) - \frac{f_{inc}(\tilde{g}_{\max},\tilde{h}_{\max})}{\tilde{c}_{\min} \zeta} > 0$; and (ii) $\max_j \kappa_{s}((I-\hat{P}_{j-1} {\hat{P}_{j-1}}' - \hat{P}_{j,\new,K} {\hat{P}_{j,\new,K}}')P_j) < \kappa_{s,e}^+$. The second assumption is similar to the denseness assumption on $D_{j,\new,k}$ which is required by both ReProCS and ReProCS-cPCA. This is discussed in \cite{rrpcp_perf}. The clustering assumption is a practically valid one. We verified it for a video of moving lake waters shown in \url{http://www.ece.iastate.edu/~chenlu/ReProCS/ReProCS.htm} as follows. We first ``low-rankified" it to 90\% energy as explained in \cite[Sec X-B]{rrpcp_perf}. Note that, with one sequence, it is not possible to estimate $\Lambda_t$ (this would require an ensemble of sequences) and thus it is not possible to check if all $\Lambda_t$'s in $[\tilde{t}_j, t_{j+1}-1]$ are similar enough. However, by assuming that $\Lambda_t$ is the same for a long enough sequence, one can estimate it using a time average and then verify if its eigenvalues are sufficiently clustered. When this was done, we observed that the clustering assumption holds with $\tilde{g}_{\max} = 7.2$ and $\tilde{h}_{\max} = 0.34$. 

\subsubsection{Comparison with PCP}


We provide a qualitative comparison with the PCP result of \cite{rpca}. A direct comparison is not possible since the proof techniques used are very different and since we solve a recursive version of the problem where as PCP solves a batch one. Moreover, PCP provides guarantees for exact recovery of ${\cal S}_t$ and ${\cal L}_t$. In our result, we obtain guarantees for exact support recovery of the $S_t$'s (and hence of ${\cal S}_t$) and bounded error recovery of its nonzero values and of ${\cal L}_t$. Also, the PCP algorithm assumes no model knowledge, whereas our algorithm does assume knowledge of model parameters. Of course, in Sec \ref{prac_algo}, we have explained how to set the parameters in practice when the model is not known. 

Consider the denseness assumptions. Let ${\cal L}_t = U \Sigma V'$ be its SVD. Then, for $t \in [t_{j}, t_{j+1}-1]$, $U = [P_{0}, P_{1,\new}, P_{2,\new}, \dots P_{j,\new}]$ and $V = [a_1, a_2 \dots a_{t}]' \Sigma^{-1}$. The result for PCP \cite{rpca} assumes denseness of $U$ and of $V$: it requires $\kappa_1(U) \le \sqrt{\mu r / n}$ and  $\kappa_1(V) \le \sqrt{\mu r / n}$ for a constant $\mu \ge 1$. Moreover, it also requires $\|UV'\|_{\max} \le \sqrt{\mu r}/ n$. On the other hand, ReProCS-cPCA only requires $\kappa_{2s}(P_{j}) \le 0.3$ and  $\kappa_{2s}(P_{j,\new}) \le 0.15$. It does not need denseness of the entire $U$; it does not assume anything about denseness of $V$; and it does not need a bound on $\|UV'\|_{\max}$.

Another difference is that the result for PCP assumes that any element of the $n \times t$ matrix ${\cal S}_t$ is nonzero w.p. $\varrho$, and zero w.p. $1-\varrho$, independent of all others (in particular, this means that the support sets of the different $S_t$'s are independent over time). 
Our result for ReProCS-cPCA does not put any such assumption. However it does require denseness of the matrix $D_{j,\new,k}$ whose columns span the unestimated part of $\Span(P_{j,\new})$ for $t \in \mathcal{I}_{j,k+1}$. As demonstrated in Sec. \ref{sims}, this reduces ($\kappa_s(D_{j,\new,k})$ increases) if the support sets of $S_t$'s change very little over time. However, as long as, for most $k$,  $\kappa_s(D_{j,\new,k})$ is anything smaller than one, which happens as long as there is at least one support change during $\mathcal{I}_{j,k}$, the subspace error does decay down to a small enough value within a finite number of steps. The number of steps required for this increases as $\kappa_s(D_{j,\new,k})$ increases. Since $\kappa_s(D_{j,\new,k})$ cannot be computed in polynomial time, for the above discussion, we computed $\|{I_{T_t}}' D_{j,\new,k}\|_2/\|D_{j,\new,k}\|_2$ at $t=t_j+k\alpha-1$ for $k=0,1, \dots K$. In fact, our proof also only needs a bound on this latter quantity. 

Also, some additional assumptions that ReProCS-cPCA needs are (a) accurate knowledge of the initial subspace and slow subspace change; (b) denseness of $Q_{j,\new,k}$; (c) the independence of $a_t$'s over time; (d) condition number of the average covariance matrix of $a_{t,\new}$ is not too large; and (e) the clustering assumption. Assumptions (a), (b), (c) are discussed in detail in \cite{rrpcp_perf} and (a) is also verified for real data. 
As explained in \cite{rrpcp_perf}, (c) can possibly be replaced by a weaker random walk model assumption on $a_t$'s if we use the matrix Azuma inequality \cite{tail_bound} instead of matrix Hoeffding. Assumption (e) is discussed above.  (d) is also an assumption made for simplicity. It can be removed if a clustering assumption similar to Assumption \ref{assu1} holds for $(\Lambda_t)_\new = \text{Cov}(a_{t,\new})$ during $t \in [t_j, \tilde{t}_j-1]$ and we use an approach similar to cluster-PCA. If there are $\vartheta_{\new,j}$ clusters, we will need $\vartheta_{\new,j}$ proj-PCA steps to estimate $\Phat_{\new,k}$ (instead of the current one step). At the $l^{th}$ step, we use proj-PCA with $P$ being  $\Phat_{j-1}$ concatenated with the basis matrix estimates for the last $l-1$ clusters to recover the $l^{th}$ cluster.


\subsection{Special Case when $f$ is small} \label{f_small_sec}
If in a problem, $L_t$ has small magnitude for all times $t$, then $f$, which is the maximum condition number of $\text{Cov}(L_t)$ for any $t$, can be small. If this is the case, then the clustering assumption trivially holds with $\vartheta_j=1$, $\tilde{c}_{j,1} = r_j$, $\tilde{g}_{\max} =\tilde{g}_{j,1} = f$ and $\tilde{h}_{\max}={h}_{j,1} = 0$. Thus, $\vartheta_{\max} =1$. In this case, the following corollary holds.
\begin{corollary} \label{f_small}
Assume that the initial subspace estimate is accurate enough as given in Theorem \ref{thm2} with $\zeta$ as chosen there. Also assume that the first four conditions of Theorem \ref{thm2} hold. Then, if $f$ is small enough so that $f_{inc}(f,0) \le f_{dec}(f,0) \tilde{c}_{\min} \zeta$, then, all conclusions of Theorem \ref{thm2} hold.
\end{corollary}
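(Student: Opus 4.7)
My plan is to derive Corollary~\ref{f_small} as a direct specialization of Theorem~\ref{thm2}, so the whole task reduces to verifying that under the corollary's hypothesis every premise of the theorem is met.

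First I would record what ``small $f$'' buys us for Assumption~\ref{assu1}. The strategy is to take a single cluster per change interval: put $\vartheta_j = 1$ and $\group_{j,(1)} = \{1, 2, \dots, r_j\}$, so that $\tilde{c}_{j,1} = r_j$. With only one cluster the inter-cluster separation condition is vacuously satisfied, so Assumption~\ref{assu1} reduces to a statement about the quantities $\tilde g_{j,1}$ and $\tilde h_{j,1}$. By Definition~\ref{defn_at} we have $\lambda_{j,1}^- \ge \lambda^-$ and $\lambda_{j,1}^+ \le \lambda^+$, hence $\tilde g_{j,1} = \lambda_{j,1}^+/\lambda_{j,1}^- \le \lambda^+/\lambda^- = f$. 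Furthermore, by the convention $\lambda_{j,\vartheta_j+1}^+ := 0$ we get $\tilde h_{j,1} = 0/\lambda_{j,1}^- = 0$. Taking the maxima over $j$ yields $\tilde g_{\max} \le f$, $\tilde h_{\max} = 0$, $\vartheta_{\max} = 1$, and $\tilde c_{\min} = \min_j r_j$.

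Next I would check that the corollary's inequality $f_{inc}(f,0) \le f_{dec}(f,0)\,\tilde c_{\min}\zeta$ implies the fifth hypothesis of Theorem~\ref{thm2}, namely $f_{dec}(\tilde g_{\max},\tilde h_{\max}) - f_{inc}(\tilde g_{\max},\tilde h_{\max})/(\tilde c_{\min}\zeta) > 0$. After a brief monotonicity check of $f_{dec}$ and $f_{inc}$ in their first argument (using Definition~\ref{defzetap})---so that $\tilde g_{\max} \le f$ may be used in place of equality---this is just a rearrangement of the corollary's assumption. Since conditions (1)--(4) of Theorem~\ref{thm2} are imposed verbatim by the corollary, and condition (5) has now been checked via the single-cluster construction, all five hypotheses of Theorem~\ref{thm2} are in force.

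Finally, having verified the hypotheses, I would invoke Theorem~\ref{thm2} directly: all three conclusions (exact support recovery, the $\SE_{(t)}$ decay, and the $\|e_t\|_2$ bounds) transfer with no modification, because the statements of those conclusions depend only on the numerical constants $\zeta, r, c, \gamma_{\new}, \gamma_*$ and on the time intervals $\mathcal{I}_{j,k}, \tilde{\mathcal{I}}_{j,k}$, none of which are altered by the specialization to $\vartheta_j = 1$. The only potentially subtle point---and the one I would double-check carefully---is the monotonicity claim for $f_{dec}$ and $f_{inc}$ used to pass from $\tilde g_{\max} \le f$ to the clustering hypothesis of the theorem; if the definitions in~\ref{defzetap} are not monotone in the expected direction, one would instead redefine the cluster so that $\tilde g_{j,1}$ is replaced by its upper bound $f$ (which is always permissible by enlarging the range of the cluster) and repeat the argument verbatim.
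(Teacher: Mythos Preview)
Your proposal is correct and follows the same approach as the paper: the paper's justification (given in the sentence preceding the corollary) is precisely to take the trivial single-cluster partition $\vartheta_j=1$, $\tilde c_{j,1}=r_j$, which yields $\tilde g_{\max}=f$, $\tilde h_{\max}=0$, and then invoke Theorem~\ref{thm2}. Your write-up is slightly more careful in noting $\tilde g_{\max}\le f$ rather than equality and in flagging the monotonicity of $f_{inc},f_{dec}$ in $\tilde g$ (which indeed follows from Definition~\ref{defzetap}), but otherwise the argument is identical.
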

Notice that the above corollary does not need Assumption \ref{assu1} to hold.

\section{Definitions, Proof Outline and Connection between addition and deletion steps} \label{detailed}
In Sec \ref{defs}, we define all the quantities that are needed for the proof. The proof outline is given in Sec \ref{outline}. We discuss how the proof strategy for the cluster-PCA (for deletion) step is related to that of addition proj-PCA in Sec \ref{connect}.

\subsection{Definitions} \label{defs}
Certain quantities are defined earlier in Assumptions \ref{model_lt} and \ref{assu1}, in Definitions \ref{defn_at} and \ref{defn_intervals}, in Algorithm \ref{ReProCS_del} and in Theorem \ref{thm2}.

\begin{definition}
In the sequel, we let
\ben
\item $c:= c_{\max}$ and $r := r_{\max} = r_0 + c$ and so $r_j = r_0 + \sum_{i=1}^{j} (c_{i,\new} - c_{i,\old}) \leq r$, 

\item $\phi^+ := 1.1735$ 

\een
\end{definition}

\begin{definition}\label{defn_alpha}
We define here the parameters used in Theorem \ref{thm2}.
\ben
\item Define $K(\zeta) := \left\lceil\frac{\log(0.6c\zeta)}{\log {0.6}} \right\rceil$
\item Define $\xi_0(\zeta)  :=   \sqrt{c} \gamma_{\new} + 1.06 \sqrt{\zeta}$ 
\item Define  $\rho  :=  \max_{t}\{\kappa_1 (\hat{S}_{t,\text{cs}} - S_t)\}$. Notice that $\rho \le 1$.
\item Define the condition number of the average of $\text{Cov}(a_{t,\new})$ over $t \in \mathcal{I}_{j,k}$ as
\bea
g_{j,k} \sdefn \frac{{\lambda_{j,\new,k}}^+}{{\lambda_{j,\new,k}}^-} \ \text{where} \nn \\
{\lambda_{j,\new,k}}^+ \sdefn \lambda_{\max}( \frac{1}{\alpha} \sum_{t \in \mathcal{I}_{j,k}} (\Lambda_t)_{\new} ),  \ \
{\lambda_{j,\new,k}}^- := \lambda_{\min}( \frac{1}{\alpha} \sum_{t \in \mathcal{I}_{j,k}} (\Lambda_t)_{\new} ), \nn
\eea

\item Let $K = K(\zeta)$. We define $\alpha_\add(\zeta)$ as in \cite{rrpcp_perf} the smallest value of $\alpha$ so that $(p_K(\alpha, \zeta))^{KJ} \ge 1-  n^{-10}$, where $p_K(\alpha,\zeta)$ is defined in \cite[Lemma 35]{rrpcp_perf}. An explicit value for it \cite{rrpcp_perf} is 
$$ \alpha_\add(\zeta)  = \lceil (\log 6 K J + 11 \log n) \frac{8 \cdot 24^2} {(\zeta \lambda^-)^2}
\max(\min(1.2^{4K} \gamma_{\new}^4, \gamma_*^4), \frac{16}{c^2}, 4(0.186 \gamma_\new^2 + 0.0034 \gamma_\new + 2.3)^2 )
\rceil$$
In words, $\alpha_{\text{add}}$ is the smallest value of the number of data points, $\alpha$, needed for an addition proj-PCA step to ensure that Theorem \ref{thm2} holds w.p. at least $(1 -   2n^{-10})$.

\item We define $\alpha_{\del}(\zeta)$ as the smallest value of $\alpha$ so that $\tilde{p}(\tilde{\alpha}, \zeta)^{\vartheta_{\max} J} \ge 1-  n^{-10}$ where $\tilde{p}(\tilde{\alpha},\zeta)$ is defined in Lemma \ref{tilde_zeta}.  We can compute an explicit value for it by using the fact that for any $x \le 1$ and $r \ge 1$, $(1-x)^r \ge 1-rx$ and that $\sum_{i=1}^6 e^{-\frac{\alpha}{d_i^2}} \le 6 e^{-\frac{\alpha}{\max_{i=1,2\dots 6} d_i^2}}$. We get
\bea
\alpha_\del(\zeta) : = \lceil (\log 6 \vartheta_{\max} J + 11 \log n) \frac{8 \cdot 10^2}{( \zeta \lambda^-)^2} \max( 4.2^2, 4 b_7^2 ) \rceil \nn 
\eea
where $b_7 := (\sqrt{r} \gamma_* + \phi^+ \sqrt{\zeta})^2$ and $\phi^+=1.1732$.
In words, $\alpha_{\text{del}}$ is the smallest value of the number of data points, $\tilde\alpha$, needed for a deletion proj-PCA step to ensure that Theorem \ref{thm2} holds w.p. at least $(1 -   2n^{-10})$.
\een
\end{definition}

\begin{definition}
\label{defzetap}
Define the following.
\ben
\item  $\zeta_{*}^+ := r \zeta$

\item define the series $\{{\zeta_{k}}^+\}_{k=0,1,2,\cdots K}$ as follows
\bea
\zeta_0^+ := 1,  \ \zeta_k^+ :=\frac{b + 0.125 c \zeta}{1 - (\zeta_*^+)^2 - (\zeta_*^+)^2 f - 0.25 c \zeta - b}, \ \text{for} \ k \geq 1,
\eea
where $b :=  C \kappa_s^+ g^+  \zeta_{k-1}^+ + \tilde{C} (\kappa_s^+)^2 g^+ (\zeta_{k-1}^+)^2 + C' f (\zeta_*^+)^2$, $\kappa_s^+ : =0.15$, $C := (\frac{2\kappa_s^+ \phi^+}{\sqrt{1-(\zeta_*^+)^2}} + \phi^+ )$,  $C' := ((\phi^+)^2 +  \frac{2\phi^+ }{\sqrt{1-(\zeta_*^+)^2}}
+ 1 + \phi^+ + \frac{\kappa_s^+ \phi^+}{\sqrt{1-(\zeta_*^+)^2}} + \frac{\kappa_s^+(\phi^+)^2 }{\sqrt{1-(\zeta_*^+)^2}} )$, $\tilde{C} := ( (\phi^+)^2  +  \frac{\kappa_s^+ (\phi^+)^2 }{\sqrt{1-(\zeta_*^+)^2}})$.

\item define the series $\{{\tilde\zeta_{k}}^+\}_{k=1,2,\cdots,\vartheta_j}$ as follows
\bea
{\tilde\zeta_{k}}^+: = \frac{f_{inc}(\tilde{g}_k,\tilde{h}_k)}{f_{dec}(\tilde{g}_k,\tilde{h}_k)} \nn
\eea
where  $f_{inc}(\tilde{g},\tilde{h}): =  (r+c) \zeta[ 3\kappa_{s,e}^+  \phi^+ \tilde{g} +  [\kappa_{s,e}^+ \phi^+ + \kappa_{s,e}^+ (1+ 2\phi^+)\frac{r^2\zeta^2}{\sqrt{1-r^2\zeta^2}}] \tilde{h} +  [\frac{r^2}{r+c}\zeta + 4r \zeta \kappa_{s,e}^+ \phi^+ + 2(r+c) \zeta(1+ {\kappa_{s,e}^+}^2) {\phi^+}^2] f+ 0.2 \frac{1}{r+c}]$,
and
$f_{dec}(\tilde{g},\tilde{h}):= 1- \tilde{h} - 0.2 \zeta - r^2 \zeta^2 f - r^2 \zeta^2 -  f_{inc}(\tilde{g},\tilde{h})$. Notice that $f_{inc}(\tilde{g},\tilde{h})$ is an increasing function of $\tilde{g},\tilde{h}$ and $f_{dec}(\tilde{g},\tilde{h})$ is a decreasing function of $\tilde{g},\tilde{h}$.
\een
As we will see, $\zeta_{*}^+$, $\zeta_{k}^+$, $\tilde\zeta_{k}^+$ are the high probability upper bounds on $\zeta_{j,*}$, $\zeta_{j,k}$, $\tilde\zeta_{j,k}$ (defined in Definition \ref{def_SEt}) under the assumptions of Theorem \ref{thm2}.
\end{definition}

\begin{definition}\label{defn_Phi}
For the addition step, define
\ben
\item $\Phi_{j,k} := I-\Phat_{j-1} {\Phat_{j-1}}' - \Phat_{j,\new,k} {\Phat_{j,\new,k}}'$ and $\Phi_{j,0} := I-\Phat_{j-1} {\Phat_{j-1}}'$.
\item $\phi_k := \max_j \max_{T:|T|\leq s}\|({(\Phi_{j,k})_T}'(\Phi_{j,k})_T)^{-1}\|_2$. It is easy to see that $\phi_k \le \frac{1}{1-\max_j \delta_s(\Phi_{j,k})}$.
\item $D_{j,\new,k} := \Phi_{j,k} P_{j,\new}$ and $D_{j,\new} := D_{j,\new,0} = \Phi_{j,0} P_{j,\new}$. 
\een
For the cluster-PCA step (for deletion), define
\ben
\item $\Psi_{j,k} : =  I - \sum_{i=0}^{k} \hat{G}_{j,i} \hat{G}_{j,i}'$. 

\item $G_{j,\text{det},k} := [G_{j,1} \cdots, G_{j,k-1}]$ and $\hat{G}_{j,\text{det},k} := [\hat{G}_{j,1} \cdots, \hat{G}_{j,k-1}]$. Notice that $\Psi_{j,k} =  I - \hat{G}_{j,\text{det},k+1}\hat{G}_{j,\text{det},k+1}'$.
\item $G_{j,\text{undet},k} := [G_{j,k+1} \cdots, G_{j,\vartheta_j}]$.

\item $D_{j,k} := \Psi_{j,k-1} G_{j,k}$,  $D_{j,\text{det},k} :=  \Psi_{j,k-1} G_{j,\text{det},k}$ and $D_{j,\text{undet},k} := \Psi_{j,k-1}G_{j,\text{undet},k} $. 

\een
Here, $G_{j,\text{det},k}$ contains the directions that are already detected before the $k^{th}$ step of cluster-PCA; $G_{j,k}$ contains the directions that are being detected in the current step; $G_{j,\text{undet},k}$ contains the as yet undetected directions.
\end{definition}

\begin{definition}
Let $\kappa_{s,*} := \max_j \kappa_{s}(P_{j-1})$, $\kappa_{s,\new} := \max_j \kappa_{s}(P_{j,\new})$, $\kappa_{s,k}:= \max_j \kappa_s (D_{j,\new,k})$, $\tilde{\kappa}_{s,k} := \max_j \kappa_{s}((I-P_{j,\new}{P_{j,\new}}') \Phat_{j,\new,k})$, $\kappa_{s,e} := \max_j \kappa_s(\Phi_K P_{j})$.
\end{definition}

\begin{definition}\label{defHk_del} \
\ben
\item Let $D_{j,k} \overset{QR}{=} E_{j,k} R_{j,k}$ denote its QR decomposition. Here, $E_{j,k}$ is a basis matrix while $R_{j,k}$ is upper triangular. \footnote{Notice that $0< \sqrt{1-r^2 \zeta^2} \leq \sigma_i(R_{j,k})$ by Lemma \ref{bound_R}, therefore, $R_{j,k}$ is invertible.}

\item Let $E_{j,k,\perp}$ be a basis matrix for the orthogonal complement of $\Span(E_{j,k}) = \Span(D_{j,k})$. To be precise, $E_{j,k,\perp}$ is a $n \times (n-\tilde{c}_{j,k})$ basis matrix that satisfies ${E_{j,k,\perp}}' E_{j,k} = 0$.

\item Using $E_{j,k}$ and $E_{j,k,\perp}$, define $\tilde{A}_{j,k}$, $\tilde{A}_{j,k,\perp}$, $\tilde{H}_{j,k}$, $\tilde{H}_{j,k,\perp}$ and $\tilde{B}_{j,k}$ as 
    \bea
    \tilde{A}_{j,k} &:=& \frac{1}{\tilde{\alpha}} \sum_{t \in \tilde{I}_{j,k}} {E_{j,k}}' \Psi_{j,k-1} L_t{L_t}' \Psi_{j,k-1} E_{j,k} \nn \\
    \tilde{A}_{j,k,\perp} &:=& \frac{1}{\tilde{\alpha}} \sum_{t \in \tilde{I}_{j,k}} {E_{j,k,\perp}}' \Psi_{j,k-1} L_t{L_t}' \Psi_{j,k-1} E_{j,k,\perp} \nn \\
    \tilde{H}_{j,k} &:=& \frac{1}{\tilde{\alpha}} \sum_{t \in \tilde{I}_{j,k}} {E_{j,k}}'\Psi_{j,k-1} (e_t{e_t}' - L_t{e_t}' - e_t {L_t}') \Psi_{j,k-1} E_{j,k} \nn \\
    \tilde{H}_{j,k,\perp} &:=& \frac{1}{\tilde{\alpha}} \sum_{t \in \tilde{I}_{j,k}} {E_{j,k,\perp}}' \Psi_{j,k-1} (e_t{e_t}' - L_t{e_t}' - e_t {L_t}') \Psi_{j,k-1} E_{j,k,\perp} \nn \\
    \tilde{B}_{j,k} &:=& \frac{1}{\tilde{\alpha}} \sum_{t \in \tilde{I}_{j,k}} {E_{j,k,\perp}}' \Psi_{j,k-1} \hat{L}_t{\hat{L}_t}' \Psi_{j,k-1} E_{j,k} = \frac{1}{\tilde{\alpha}} \sum_{t \in \tilde{I}_{j,k}} {E_{j,k,\perp}}'\Psi_{j,k-1} (L_t-e_t)({L_t}' - {e_t}') \Psi_{j,k-1} E_{j,k} \nn
    \eea
\item Define
\bea
&&\tilde{\mathcal{A}}_{j,k} := \left[ \begin{array}{cc} E_{j,k} & E_{j,k,\perp} \\ \end{array} \right]
\left[\begin{array}{cc}\tilde{A}_{j,k} \ & 0 \ \\ 0 \ & \tilde{A}_{j,k,\perp}  \\ \end{array} \right]
\left[ \begin{array}{c} {E_{j,k}}' \\ {E_{j,k,\perp}}' \\ \end{array} \right]\nn\\
&&\tilde{\mathcal{H}}_{j,k} := \left[ \begin{array}{cc} E_{j,k} & E_{j,k,\perp} \\ \end{array} \right]
\left[\begin{array}{cc} \tilde{H}_{j,k} \ & {\tilde{B}_{j,k}}' \ \\ \tilde{B}_{j,k} \ &  \tilde{H}_{j,k,\perp} \\ \end{array} \right]
\left[ \begin{array}{c} {E_{j,k}}' \\ {E_{j,k,\perp}}' \\ \end{array} \right]
\label{defn_tilde_Hk}
\eea

\item From the above, it is easy to see that $$\tilde{\mathcal{A}}_{j,k} + \tilde{\mathcal{H}}_{j,k} =\frac{1}{\tilde{\alpha}} \sum_{t \in \tilde{\mathcal{I}}_{j,k}} \Psi_{j,k-1} \hat{L}_t {\hat{L}_t}' \Psi_{j,k-1}.$$

\item Recall from Algorithm \ref{ReProCS_del} that
$$\tilde{\mathcal{A}}_{j,k} + \tilde{\mathcal{H}}_{j,k} = \frac{1}{\tilde{\alpha}} \sum_{t \in \tilde{\mathcal{I}}_{j,k}} \Psi_{j,k-1} \hat{L}_t {\hat{L}_t}' \Psi_{j,k-1} \overset{EVD}{=} \left[ \begin{array}{cc} \hat{G}_{j,k} & \hat{G}_{j,k,\perp} \\ \end{array} \right]
\left[\begin{array}{cc} \Lambda_{j,k} \ & 0 \ \\ 0 \ & \ \Lambda_{j,k,\perp} \\ \end{array} \right]
\left[ \begin{array}{c} \hat{G}_{j,k}' \\ \hat{G}_{j,k,\perp}' \\ \end{array} \right] $$
is the EVD of $\tilde{\mathcal{A}}_{j,k} + \tilde{\mathcal{H}}_{j,k}$. 
Here $\Lambda_k$ is a $\tilde{c}_{j,k} \times \tilde{c}_{j,k}$ diagonal matrix.


\een
\end{definition}

\begin{definition}
Let $\Phat_{j,*}: = \Phat_{j-1}= \Phat_{(t_j-1)}$. Recall that $P_{j,*}: = P_{(t_j-1)}= P_{j-1} $. In the sequel, we use the subscript $*$ to denote the quantity at $t=t_j-1$.
\end{definition}

\begin{definition}[Subspace estimation errors]\label{def_SEt}\  
\ben
\item Recall that the subspace error at time $t$ is $\SE_{(t)} := \|(I - \Phat_{(t)} \Phat_{(t)}') P_{(t)} \|_2$.

\item Define
$$\zeta_{j,*} :=\|(I - \hat{P}_{j,*} {\hat{P}_{j,*}}')P_{j,*}\|_2. $$
This is the subspace error at $t=t_j-1$, i.e. $\zeta_{j,*} = \SE_{(t_j-1)}$. 

\item For $k=0,1,2,\cdots,K$, define
$$\zeta_{j,k} := \|(I - \Phat_{j-1} \Phat_{j-1}' - \Phat_{j,\new,k} \Phat_{j,\new,k}') P_{j,\new}\|_2.$$
This is the error in estimating $\text{span}(P_{j,\new})$ after the $k^{th}$ iteration of the addition step.
\item For $k=1,2,\cdots,\vartheta_j$, define
$$\tilde{\zeta}_{j,k} : = \|(I - \sum_{i=1}^{k} \hat{G}_{j,i} \hat{G}_{j,i}')G_{j,k}\|_2. $$
This is the error in estimating $\Span(G_{j,k})$ after the $k^{th}$ iteration of the cluster-PCA step. 
\een
\end{definition}

\begin{remark}[Notational issue]
Notice that $\zeta$ is a given scalar satisfying the bound given in Theorem \ref{thm2}, while $\zeta_{j,k}, \zeta_{j,*}$ and $\tilde{\zeta}_{j,k}$ are as defined above. Since the basis matrix estimates are functions of the $\Lhat_t$'s, which in turn are depend on the $L_t$'s and $L_t = P_{(t)}a_t$, thus, $\zeta_{j,k}, \zeta_{j,*}$ and $\tilde{\zeta}_{j,k}$ are functions of the $a_t$'s. Thus, $\zeta_{j,k}, \zeta_{j,*}$ and $\tilde{\zeta}_{j,k}$ are, in fact, random variables.
\end{remark}

\begin{remark}\label{SE_rem} \  
\ben

\item Notice that $\zeta_{j,0} = \|D_{j,\new}\|_2$, $\zeta_{j,k} = \|D_{j,\new,k}\|_2$ and $\tilde{\zeta}_{j,k} = \|(I - \hat{G}_k \hat{G}_k') D_{j,k}\|_2 = \|\Psi_{j,k}G_{j,k}\|_2$.

\item Notice from the algorithm that (i) $\Phat_{j,\new,k}$ is perpendicular to $\Phat_{j,*}=\Phat_{j-1}$; and (ii) $\hat{G}_{j,k}$ is perpendicular to $[\hat{G}_{j,1},\hat{G}_{j,2},\dots \hat{G}_{j,k-1}]$.

\item For $t\in \mathcal{I}_{j,k}$, $P_{(t)} = P_j = [(P_{j-1} \setminus P_{j,\old}), \ P_{j,\new}]$, $\hat{P}_{(t)} = [\hat{P}_{j-1} \ \hat{P}_{j,\new,k}]$ and
$$SE_{(t)} = \|(I - \hat{P}_{j-1} {\hat{P}_{j-1}}' - \hat{P}_{j,\new,k}{\hat{P}_{j,\new,k}}')P_j\|_2 \leq  \|(I - \hat{P}_{j-1} {\hat{P}_{j-1}}' - \hat{P}_{j,\new,k}{\hat{P}_{j,\new,k}}')[ P_{j-1} \ P_{j,\new}]\|_2 \leq \zeta_{j,*} + \zeta_{j,k}$$
for $k=1,2 \dots K$. The last inequality uses the first item of this remark.

\item For $t \in \tilde{\mathcal{I}}_{j,k}$, $P_{(t)} = P_j$, $\hat{P}_{(t)} = [\hat{P}_{j-1} \ \hat{P}_{j,\new,K}]$ and
$$SE_{(t)} = SE_{(t_j + K\alpha-1)} \leq \zeta_{j,*} + \zeta_{j,K}$$

\item For $t \in \tilde{\mathcal{I}}_{j,\vartheta_j+1}$, $P_{(t)} = P_j = [G_{j,1},\cdots,G_{j,\vartheta_j}]$, $\hat{P}_{(t)} = \hat{P}_j = [\hat{G}_{j,1},\cdots,\hat{G}_{j,\vartheta_j}]$, and
$$SE_{(t)} = \zeta_{j+1,*} \leq \sum_{k=1}^{\vartheta_j} \tilde{\zeta}_{j,k}$$
The last inequality uses the first item of this remark.

\een
\end{remark}

\begin{remark} \label{etdef_rem}
Recall that $e_t:= \Shat_t-S_t$. Notice from Algorithm \ref{ReProCS_del} that
\ben
\item $e_t = L_t - \Lhat_t$.
\item If $\That_t = T_t$, then $e_t = {I_{T_t}} [{(\Phi_{(t)})_{T_t}}'(\Phi_{(t)})_{T_t}]^{-1} {I_{T_t}}' \Phi_{(t)}P_{(t)} a_t$. This follows using the definition of $\Shat_t$ given in step 1d of the algorithm and the fact that $(\Phi_{(t)})_T' \Phi_{(t)} =(\Phi_{(t)} I_T)' \Phi_{(t)} = I_T' \Phi_{(t)}$ for any set $T$.
 Thus, for $t \in [t_j,t_{j+1}-1]$, 
\bea
e_t = {I_{T_t}} [{(\Phi_{(t)})_{T_t}}'(\Phi_{(t)})_{T_t}]^{-1} {I_{T_t}}' \Phi_{(t)}P_j a_t = {I_{T_t}} [{(\Phi_{(t)})_{T_t}}'(\Phi_{(t)})_{T_t}]^{-1} {I_{T_t}}' \Phi_{(t)}[P_{j,*} a_{t,*} + P_{j,\new} a_{t,\new}]
\label{etdef}
\eea
with
\bea
\Phi_{(t)} = \left\{
\begin{array}{ll}
\Phi_{j,k-1} \ & \ \ t \in \mathcal{I}_{j,k}, \ k=1,2 \dots K \\
\Phi_{j,K} \ & \ \ t \in \mathcal{\tilde{I}}_{j,k}, \ k=1,2 \dots \vartheta_j \\
\Phi_{j+1,0} \ & \ \ t \in \mathcal{\tilde{I}}_{j,\vartheta_j+1}
\end{array}
\right. \nn
\eea
\een
\end{remark}

\begin{definition}
Define the random variable
$$X_{j,k_1,k_2} := \{ a_1,a_2,\cdots, a_{t_j + k_1 \alpha + k_2 \tilde{\alpha}-1}\}.$$
Recall that $a_t$'s are mutually independent over $t$.
\end{definition}

\begin{definition}
Define the set $\check\Gamma_{j,k_1,k_2}$ as follows.
\bea
\check\Gamma_{j,k,0}:= &&
\{ X_{j,k,0} : \zeta_{j,k} \leq \zeta_{k}^+,  \ \text{and} \ \hat{T}_t = T_t \ \text{and} \  e_t \ \text{satisfies} \ (\ref{etdef}) \ \text{for all} \ t \in \mathcal{I}_{j,k}\}, \ k=1,2,\dots K, \ j=1,2,3,\dots J \nn \\  %
\check\Gamma_{j,K,k}:= &&
\{ X_{j,K,k} :\tilde{\zeta}_{j,k} \leq \tilde{c}_{j,k}\zeta,    \ \text{and} \  \hat{T}_t = T_t \ \text{and} \  e_t \ \text{satisfies} \ (\ref{etdef}) \ \text{for all} \ t \in \mathcal{\tilde{I}}_{j,k}\}, \  k=1,2,\dots \vartheta_j, \ j=1,2,3,\dots J \nn \\
\check\Gamma_{j,K, \vartheta_j+1}:= &&
\{X_{j+1,0,0}: \hat{T}_t = T_t \ \text{and} \  e_t \ \text{satisfies} \ (\ref{etdef}) \ \text{for all} \  t \in \mathcal{\tilde{I}}_{j,\vartheta_j+1}\}, \ j=1,2,3,\dots J \nn
\eea
Define the set $\Gamma_{j,k_1,k_2}$ as follows.
\bea
\Gamma_{1,0,0}: =&&  \{X_{1,0,0}: \zeta_{1,*} \leq r \zeta, \ \text{and} \ \hat{T}_t = T_t \ \text{and} \  e_t \ \text{satisfies} \ (\ref{etdef}) \ \text{for all} \ t \in [t_{\text{train}},t_1-1] \}, \nn \\
\Gamma_{j,k,0}: = &&  \Gamma_{j,k-1,0} \cap \check\Gamma_{j,k,0}, \ k=1,2,\dots K, \ j=1,2,3,\dots J \nn \\
\Gamma_{j,K,k}: = && \Gamma_{j,K,k-1} \cap \check\Gamma_{j,K,k}, \ k=1,2,\dots \vartheta_j, \ j=1,2,3,\dots J \nn \\
\Gamma_{j+1,0,0}: =&&  \Gamma_{j,K,\vartheta_j} \cap \check\Gamma_{j,K,\vartheta_j+1}, \ j=1,2,3,\dots J \nn
\eea
Recall from the notation section that the event $\Gamma_{j,k_1,k_2}^e: = \{X_{j,k_1,k_2} \in \Gamma_{j,k_1,k_2} \}.$
\end{definition}

\begin{remark}
Notice that the subscript $j$ always appears as the first subscript, while $k$ is the last one. At many places in this paper, we remove the subscript $j$ for simplicity. Whenever there is only one subscript, it refers to the value of $k$, e.g., $\Phi_0$ refers to $\Phi_{j,0}$, $\Phat_{\new,k}$ refers to $\Phat_{j,\new,k}$ and so on.
\label{remove_j}
\end{remark}

\begin{table}
\caption{Comparing and contrasting the addition proj-PCA step and proj-PCA used in the deletion step (cluster-PCA)}
\begin{center}
\begin{tabular}{|l||l|}
  \hline  
 {\bf $k^\text{th}$ iteration of addition proj-PCA} & {\bf $k^\text{th}$ iteration of cluster-PCA in the deletion step}  \\ \hline
  done at $t= t_j+k \alpha-1$              &  done at $t=t_j + K \alpha + \vartheta_j \tilde\alpha-1$ \\ \hline
  goal: keep improving estimates of $\Span(P_{j,\new})$ &   goal: re-estimate $\Span(P_{j})$ and thus ``delete" $\Span(P_{j,\old})$ \\ \hline
  compute $\Phat_{j,\new,k}$ by proj-PCA on $[\hat{L}_t: t\in \mathcal{I}_{j,k}]$    &  compute $\hat{G}_{j,k}$ by proj-PCA on $[\hat{L}_t: t\in \tilde{\mathcal{I}}_{j,k}]$  \\
    with $P = \hat{P}_{j-1}$                                                         & with $P = \hat{G}_{j,\text{det},k} = [\hat{G}_{j,1}, \cdots, \hat{G}_{j,k-1}]$   \\ \hline
   start with $\|(I - \Phat_{j-1} {\Phat_{j-1}}')P_{j-1}\|_2 \leq r\zeta$  and $\zeta_{j,k-1} \leq \zeta_{k-1}^+ \le 0.6^{k-1} + 0.4 c \zeta $       & start with $\|(I - \hat{G}_{j,\text{det},k}{\hat{G}_{j,\text{det},k}}')G_{j,\text{det},k}\|_2 \leq r\zeta$ and $\zeta_{j,K} \leq c \zeta$     \\ \hline
%
   need small $g_{j,k}$ which is the                                                   & need small $\tilde{g}_{j,k}$ which is the \\
average of the condition number of $\text{Cov}(P_{j,\new}'L_t)$ averaged over $t\in\mathcal{I}_{j,k}$ &  maximum of the condition number of $\text{Cov}(G_{j,k}'L_t)$ over $t\in \tilde{\mathcal{I}}_{j,k}$  \\ \hline
no undetected subspace & extra issue: ensure perturbation due to $\Span(G_{j,\text{undet},k})$ is small; \\
                      & need small $\tilde{h}_{j,k}$ to ensure the above   \\ \hline
  $\zeta_{j,k}$ is the subspace error in estimating $\text{span}(P_{j,\new})$ after the $k^{th}$ step & $\tilde{\zeta}_{j,k}$ is the subspace error in estimating $\text{span}(G_{j,k})$  after the $k^{th}$ step  \\ \hline
%
end with $\zeta_{j,k} \leq  \zeta_k^+ \leq 0.6^k + 0.4 c\zeta$ w.h.p. & end with $\tilde{\zeta}_{j,k} \leq \tilde{c}_{j,k} \zeta$ w.h.p. \\   \hline
  stop when $k=K$ with $K$ chosen so that $\zeta_{j,K} \leq  c\zeta$  & stop when $k = \vartheta_j$ and $\tilde{\zeta}_{j,k} \leq \tilde{c}_{j,k}\zeta$ for all $k=1,2,\cdots,\vartheta_j$ \\ \hline
  after $K^{th}$ iteration: $\Phat_{(t)} \leftarrow [\Phat_{j-1} \ \Phat_{j,\new,K}]$ and $SE_{(t)} \leq (r+c)\zeta$ & after $\vartheta_j^{th}$ iteration: $\Phat_{(t)} \leftarrow  [\hat{G}_{j,1},\cdots, \hat{G}_{j,\vartheta_j}]$ and $SE_{(t)} \leq r\zeta$  \\ \hline
\end{tabular}
\end{center}
 \label{tab_diff}
\end{table}

\subsection{Proof Outline of Theorem \ref{thm2}} \label{outline}

The first part of the proof that analyzes the projected CS step and the addition step is essentially the same as that in \cite{rrpcp_perf}. The only difference is that, now, $\zeta_*^+ = r \zeta$ instead of $\zeta_*^+ = (r_0 + (j-1)c) \zeta$. In Lemma \ref{lem_add}, the final conclusions for this part are summarized: it shows that, for all $k=1,2, \dots K$, $\zeta_k^+$ decays roughly exponentially with $k$ and it bounds the probability of $\Gamma_{j,k,0}^e$ given $\Gamma_{j,k-1,0}^e$. The second part of the proof analyzes the projected CS step and the cluster-PCA step. The final conclusion for this part is summarized in Lemma \ref{lem_del}: it bounds the probability of $\Gamma_{j,K,k}^e$ given $\Gamma_{j,K,k-1}^e$. Theorem \ref{thm2} follows essentially by applying Lemmas \ref{lem_del} and \ref{lem_add} for each $j$ and $k$ and using Lemma \ref{subset_lem}.

Lemma \ref{lem_del}, in turn, follows by combining the results of Lemma \ref{lem_et} (which shows exact support recovery and bounds the sparse recovery error for $t \in \tilde{I}_{j,k}$ conditioned on $\Gamma_{j,K,k-1}^e$), and Lemma \ref{tilde_zeta} (which bounds the subspace recovery error at the $k^{th}$ step of cluster-PCA conditioned on $\Gamma_{j,K,k-1}^e$).
%
Lemma \ref{lem_et} uses the result of Lemma \ref{RIC_bnd} which bounds the RIC of $\Phi_k$ in terms of  $\zeta_*$, $\zeta_k$ and the denseness coefficients of $P_*$ and $P_{\new}$. Lemma \ref{tilde_zeta} is obtained as follows.  In Lemma \ref{bnd_tzetakp}, we  show that, under the theorem's assumptions, $\tilde\zeta_k^+ \le  \tilde{c}_{j,k} \zeta$. In Lemma \ref{defnPCA}, we bound  $\tilde\zeta_k$ in terms of $\lambda_{\min}(A_k)$, $\lambda_{\max}(A_{k,\perp})$ and $\|\mathcal{H}_k\|_2$ using Lemma \ref{sin_theta_weyl}. Next, in Lemma \ref{lem_bound_terms}, (i) we use Lemma \ref{lem_et} and the Hoeffding corollaries (Corollaries \ref{hoeffding_nonzero} and \ref{hoeffding_rec}) to bound each of these terms and (ii) then we use Lemma \ref{defnPCA} and these bounds to bound $\tilde\zeta_k$ by $\tilde\zeta_k^+$ with a certain probability conditioned on $\Gamma_{j,K,k-1}^e$. Finally, Lemma \ref{tilde_zeta} follows by combining  Lemma \ref{bnd_tzetakp} and Lemma \ref{lem_bound_terms}.


\subsection{Connection with Addition proj-PCA} \label{connect}
Our strategy for analyzing cluster-PCA and hence for proving Theorem \ref{thm2} is a generalization of that used to analyze the $k^{th}$ addition proj-PCA step in \cite{rrpcp_perf}. We discuss this in Table \ref{tab_diff}.

\section{Proof of Theorem \ref{thm2}} \label{thmproof}
The theorem is a direct consequence of Lemmas \ref{lem_add} and \ref{lem_del} given below.


\subsection{Two Main Lemmas}

The lemma below is a slight modification of \cite[Lemma 40]{rrpcp_perf}. It summarizes the final conclusions of the addition step. 

\begin{lem}[Final lemma for addition step]\label{lem_add}
Assume that all the conditions in Theorem \ref{thm2} holds. Also assume that $\mathbf{P}(\Gamma_{j,k-1,0}^e ) > 0$.
Then
\ben
\item $\zeta_0^+=1$, $\zeta_k^+ \leq  0.6^{k}  + 0.4 c\zeta$ for all $k=1,2,\dots K$;
\item $\mathbf{P}(\Gamma_{j,k,0}^e \ | \Gamma_{j,k-1,0}^e ) \geq p_k(\alpha,\zeta) \ge p_K(\alpha,\zeta)$  for all $k=1,2,\dots K$.
\een
where $\zeta_k^+$ is defined in Definition \ref{defzetap} and $p_k(\alpha,\zeta)$ is defined in \cite[Lemma 35]{rrpcp_perf}.
\end{lem}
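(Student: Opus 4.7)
The proof plan is to essentially transcribe the proof of \cite[Lemma 40]{rrpcp_perf}, since the addition step of ReProCS-cPCA coincides with that of ReProCS. The only real difference is that here $\zeta_*^+ = r\zeta$ is a $j$-independent bound, whereas in ReProCS $\zeta_*^+ = (r_0 + (j-1)c)\zeta$ grows with $j$. Since $r = r_{\max} \le r_0 + (J-1)c$, the new bound is tighter, and the required bound on $\zeta$ in Theorem \ref{thm2} has $(r+c)$ in place of $(r_0 + (J-1)c)$, so every inequality that held in the ReProCS analysis continues to hold verbatim here. The two parts of the lemma are proved separately.

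For Part 1, I would prove the claim $\zeta_k^+ \le 0.6^k + 0.4c\zeta$ by induction on $k$. The base case $\zeta_0^+ = 1$ is immediate from Definition \ref{defzetap}. For the inductive step, substitute $\zeta_*^+ = r\zeta$ and the inductive hypothesis into the recursion
\[
\zeta_k^+ = \frac{b + 0.125 c\zeta}{1 - (\zeta_*^+)^2 - (\zeta_*^+)^2 f - 0.25 c\zeta - b},
\]
with $b = C\kappa_s^+ g^+ \zeta_{k-1}^+ + \tilde{C}(\kappa_s^+)^2 g^+ (\zeta_{k-1}^+)^2 + C' f (\zeta_*^+)^2$. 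Using the bound $\zeta \le \min\{10^{-4}/(r+c)^2,\ 1.5\times 10^{-4}/((r+c)^2 f),\ 1/((r+c)^3 \gamma_*^2)\}$ together with $\kappa_s^+ = 0.15$ and $g^+ = \sqrt{2}$, a direct computation shows that the denominator is bounded below by a quantity close to $1$, and the numerator is bounded above by $0.6\,\zeta_{k-1}^+ \cdot(\text{denominator}) + 0.16 c\zeta$. This yields $\zeta_k^+ \le 0.6\,\zeta_{k-1}^+ + 0.16 c\zeta \le 0.6^k + 0.4 c\zeta$, completing the induction.

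For Part 2, conditioned on $\Gamma_{j,k-1,0}^e$ the inductive bound $\zeta_{j,k-1} \le \zeta_{k-1}^+$ holds, and combined with the denseness assumptions on $P_{j-1}, P_{j,\new}, D_{j,\new,k-1}, Q_{j,\new,k-1}$, Lemma \ref{RIC_bnd} yields $\delta_{2s}(\Phi_{j,k-1}) < \sqrt{2}-1$. This in turn gives exact support recovery $\hat{T}_t = T_t$ and the LS form (\ref{etdef}) of $e_t$ for $t \in \mathcal{I}_{j,k}$ via Theorem \ref{candes_csbound} and the choices of $\xi$ and $\omega$ (using the slow subspace change bound $14\rho\xi_0(\zeta) \le S_{\min}$). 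With these sparse recovery guarantees in hand, the proj-PCA update step can be analyzed: the Hoeffding corollaries (Corollaries \ref{hoeffding_nonzero} and \ref{hoeffding_rec}) bound $\lambda_{\min}(A_{j,\new,k})$ from below, $\lambda_{\max}(A_{j,\new,k,\perp})$ from above, and $\|\mathcal{H}_{j,k}\|_2$ from above, each with probability depending on $\alpha$. Lemma \ref{sin_theta_weyl} then converts these concentration bounds into $\zeta_{j,k} \le \zeta_k^+$. Taking intersections, exactly as in \cite[Lemma 40]{rrpcp_perf}, gives $\mathbf{P}(\Gamma_{j,k,0}^e \mid \Gamma_{j,k-1,0}^e) \ge p_k(\alpha,\zeta)$, and $p_k(\alpha,\zeta) \ge p_K(\alpha,\zeta)$ follows from the monotonicity of $p_k$ in $k$ from \cite[Lemma 35]{rrpcp_perf}.

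The only step requiring genuine care is Part 1, because the recursion constants $C, C', \tilde{C}$ depend on $\zeta_*^+$ and the numerical verification of $\zeta_k^+ \le 0.6^k + 0.4c\zeta$ is tight; I expect to reuse the exact same arithmetic from \cite{rrpcp_perf} after substituting the new (smaller) value of $\zeta_*^+$. Part 2 is then essentially a bookkeeping exercise combining standard lemmas. The main obstacle is therefore to confirm that no downstream step used a feature of the ReProCS bound $\zeta_*^+ = (r_0+(j-1)c)\zeta$ that is incompatible with the stronger bound $\zeta_*^+ = r\zeta$; a careful inspection shows this is not the case, so the proof goes through with only cosmetic changes.
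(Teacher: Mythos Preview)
Your proposal is correct and follows essentially the same approach as the paper. The paper's proof (given in Appendix \ref{proof_lem_add}) also reduces to the three pieces you identify---exponential decay of $\zeta_k^+$ by induction on the recursion (their Lemma \ref{defn_zeta+}, the analogue of \cite[Lemma 37]{rrpcp_perf}), the sparse-recovery/support-recovery step yielding $\hat T_t=T_t$ and (\ref{etdef}) (their Lemma \ref{cslem}), and the Hoeffding-plus-$\sin\theta$ analysis of proj-PCA giving $\zeta_{j,k}\le\zeta_k^+$ with probability $p_k(\alpha,\zeta)$ (their Lemma \ref{lem1_simple})---and explicitly notes that the only change from \cite[Lemma 40]{rrpcp_perf} is replacing $\zeta_*^+=(r_0+(j-1)c)\zeta$ by $\zeta_*^+=r\zeta$ throughout.
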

%


The proof of the above lemma follows using the exact same approach as in the proof of Lemma 40 of \cite{rrpcp_perf} but with $\zeta_*^+ = r\zeta$ instead of $(r_0 + (j-1)c_{\max}) \zeta$ everywhere. We give the proof outline in Appendix \ref{proof_lem_add}. 
%

The lemma below summarizes the final conclusions for the cluster-PCA step. It is proved using lemmas given in Sec \ref{keylems}.
\begin{lem}[Final lemma for cluster-PCA]\label{lem_del}
Assume that all the conditions in Theorem \ref{thm2} hold. Also assume that $\mathbf{P}(\Gamma_{j,K,k-1}^e ) > 0$. Then,
\ben
\item for all $k=1,2,\dots \vartheta_j$, $\mathbf{P} ( \Gamma_{j,K,k}^e \ | \ \Gamma_{j,K,k-1}^e) \geq \tilde{p}(\tilde{\alpha},\zeta)$ where $\tilde{p}(\tilde{\alpha},\zeta)$ is defined in Lemma \ref{tilde_zeta};
\item $\mathbf{P} ( \Gamma_{j+1,0,0}^e \ | \ \Gamma_{j,K,\vartheta_j}^e) = 1$.
\een
\end{lem}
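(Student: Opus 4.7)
The plan is to prove the two claims separately, and in both cases the overall strategy is to combine exact-support/closed-form-$e_t$ conclusions (delegated to Lemma \ref{lem_et}) with the high-probability subspace bound (delegated to Lemma \ref{tilde_zeta}). The set $\check\Gamma_{j,K,k}$ is the conjunction of (a) the subspace bound $\tilde\zeta_{j,k}\le \tilde c_{j,k}\zeta$ and (b) the support-recovery statements $\hat T_t=T_t$ and $e_t$ of the form (\ref{etdef}) for every $t \in \tilde{\mathcal{I}}_{j,k}$, so that once the two lemmas are combined the first claim will follow from $\Gamma_{j,K,k}^e = \Gamma_{j,K,k-1}^e\cap \check\Gamma_{j,K,k}^e$ together with Lemma \ref{rem_prob}.

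For the first claim, I would condition throughout on $\Gamma_{j,K,k-1}^e$. On this event, the subspace estimate $\hat P_{(t)} = [\hat P_{j-1}\ \hat P_{j,\new,K}]$ is constant across $\tilde{\mathcal{I}}_{j,k}$ (the cluster-PCA EVD updates $\hat P_{(t)}$ only at the very last time $t = t_j+K\alpha+\vartheta_j\tilde\alpha-1$), and by Remark \ref{SE_rem} we have the deterministic bound $SE_{(t)}\le \zeta_{j,*}+\zeta_{j,K}\le (r+c)\zeta$. Combining this with $\xi=\xi_0(\zeta)$, $7\rho\xi\le \omega\le S_{\min}-7\rho\xi$, the $\|a_t\|_\infty$ model bounds, and the denseness hypothesis $\kappa_s((I-\hat P_{j-1}\hat P_{j-1}'-\hat P_{j,\new,K}\hat P_{j,\new,K}')P_j)\le \kappa_{s,e}^+$ (which controls the RIC of $\Phi_{j,K}$ on $2s$-sparse vectors through Lemma \ref{delta_kappa}), the hypotheses of Lemma \ref{lem_et} are met, so $\hat T_t=T_t$ and (\ref{etdef}) hold deterministically on $\tilde{\mathcal{I}}_{j,k}$ given the conditioning. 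Lemma \ref{tilde_zeta} then provides $\mathbf{P}(\tilde\zeta_{j,k}\le \tilde c_{j,k}\zeta \mid \Gamma_{j,K,k-1}^e)\ge \tilde p(\tilde\alpha,\zeta)$, and since the support-recovery half is deterministic under the conditioning the intersection has the same lower bound.

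For the second claim, under $\Gamma_{j,K,\vartheta_j}^e$ the cluster-PCA deletion step has completed, so $\hat P_{(t)} = \hat P_j = [\hat G_{j,1},\dots,\hat G_{j,\vartheta_j}]$ for every $t\in \tilde{\mathcal{I}}_{j,\vartheta_j+1}$, and by the last part of Remark \ref{SE_rem} we get the deterministic bound $SE_{(t)} = \zeta_{j+1,*} \le \sum_{k=1}^{\vartheta_j}\tilde\zeta_{j,k}\le r\zeta$. I would then apply the same Lemma \ref{lem_et} machinery, but now to $\Phi_{(t)} = \Phi_{j+1,0}$, whose RIC on $2s$-sparse vectors is again controlled by $\kappa_s((I-\hat P_j\hat P_j')P_j)\le \kappa_{s,e}^+$ via Lemma \ref{delta_kappa}. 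Since all remaining inputs ($\|a_t\|_\infty$ bounds, $\omega$, $\xi$) are model/algorithm constants, exact support recovery and (\ref{etdef}) hold on $\tilde{\mathcal{I}}_{j,\vartheta_j+1}$ with probability one, which is exactly $\mathbf{P}(\Gamma_{j+1,0,0}^e\mid \Gamma_{j,K,\vartheta_j}^e)=1$.

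I expect the main obstacle to be bookkeeping rather than any new probabilistic estimate: one must check that, on the two intervals $\tilde{\mathcal{I}}_{j,k}$ and $\tilde{\mathcal{I}}_{j,\vartheta_j+1}$, the error bound $\|e_t\|_\infty \le \rho\xi_0(\zeta)$-type bound produced by applying Theorem \ref{candes_csbound} with $\xi = \xi_0(\zeta)$ is small enough (after multiplying by the appropriate constants coming from Lemma \ref{RIC_bnd} specialized to the new projector $\Phi_{j,K}$ or $\Phi_{j+1,0}$) to be strictly below $\omega$ on the true support and below $S_{\min}-\omega$ off it. This is where the margin $14\rho\xi_0(\zeta)\le S_{\min}$ and the choices of $\kappa_{s,e}^+$ and $\zeta$ in Theorem \ref{thm2} must be checked carefully, in particular to guarantee that the constant $\phi^+=1.1735$ appearing in Definition \ref{defzetap} correctly bounds $\max_{T:|T|\le s}\|((\Phi_{(t)})_T'(\Phi_{(t)})_T)^{-1}\|_2$ on both intervals.
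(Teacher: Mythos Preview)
Your high-level plan is correct and is exactly the paper's approach: the first claim follows by combining Lemma~\ref{tilde_zeta} (the high-probability bound $\tilde\zeta_{j,k}\le \tilde c_{j,k}\zeta$) with the last claim of Lemma~\ref{lem_et} (deterministic exact support recovery and (\ref{etdef}) on $\tilde{\mathcal I}_{j,k}$ given $\Gamma_{j,K,k-1}^e$), and the second claim is just the $k=\vartheta_j+1$ case of Lemma~\ref{lem_et}.

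One piece of your bookkeeping is wrong and would cause trouble if you tried to carry it out. You write that the RIC of $\Phi_{j,K}$ (and later of $\Phi_{j+1,0}$) is controlled through Lemma~\ref{delta_kappa} by the assumption $\kappa_s\big((I-\hat P_{j-1}\hat P_{j-1}'-\hat P_{j,\new,K}\hat P_{j,\new,K}')P_j\big)\le \kappa_{s,e}^+$. That is not what Lemma~\ref{delta_kappa} says: for a basis matrix $P$, $\delta_s(I-PP')=\kappa_s^2(P)$, so $\delta_s(\Phi_{j,K})=\kappa_s^2([\hat P_{j-1},\hat P_{j,\new,K}])$, which is bounded by Lemma~\ref{RIC_bnd} using the denseness of $P_{j-1}$ and $P_{j,\new}$ (the constants $\kappa_{2s,*}^+$, $\kappa_{2s,\new}^+$, $\tilde\kappa_{2s}^+$) together with $\zeta_{j,*}\le r\zeta$ and $\zeta_{j,K}\le c\zeta$. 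Likewise, $\delta_s(\Phi_{j+1,0})=\kappa_s^2(\hat P_j)$ is bounded by item~1 of Lemma~\ref{RIC_bnd} using $\kappa_{2s}(P_j)\le \kappa_{2s,*}^+$ and $\zeta_{j+1,*}\le r\zeta$; there is no hypothesis in Theorem~\ref{thm2} that $\kappa_s((I-\hat P_j\hat P_j')P_j)\le \kappa_{s,e}^+$. The quantity $\kappa_{s,e}^+$ plays a completely different role: it bounds $\kappa_s(\Phi_{j,K}P_j)$, which enters only inside the perturbation analysis of cluster-PCA (the constants $b_4,b_6,b_8$ in the proof of Lemma~\ref{lem_bound_terms}), not in the RIC bound needed for sparse recovery. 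Once you route the RIC bounds through Lemma~\ref{RIC_bnd} instead, the rest of your outline goes through verbatim.
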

\begin{proof}
Notice that $\mathbf{P} ( \Gamma_{j,K,k}^e \ | \ \Gamma_{j,K,k-1}^e) = \mathbf{P} (\tilde{\zeta}_k \leq \tilde{c}_k \zeta \ \text{and} \ \That_t = T_t, \ \text{and} \ e_t \ \text{satisfies} \ (\ref{etdef}) \ \text{for all} \ t \in \tilde{I}_{j,k} \ | \ \Gamma_{j,K,k-1}^e)$ and $\mathbf{P} ( \Gamma_{j+1,0,0}^e \ | \ \Gamma_{j,K,\vartheta_j}^e)  = \mathbf{P} ( \hat{T}_t = T_t \ \text{and} \  e_t \ \text{satisfies} \ (\ref{etdef}) \ \text{for all} \ t \in \mathcal{I}_{j,\vartheta_j+1} )$.
The first claim of the lemma follows by combining Lemma \ref{tilde_zeta} and the last claim of Lemma \ref{lem_et}, both given below in Sec \ref{keylems}. The second claim follows using the last claim of Lemma \ref{lem_et}.
\end{proof}

\begin{remark}\label{Gamma_rem}
Under the assumptions of Theorem \ref{thm2}, it is easy to see that the following holds.
\ben
\item For any $k=1,2 \dots K$, $\Gamma_{j,k,0}^e$ implies that (i) $\zeta_{j,*} \le \zeta_{*}^+:= r \zeta$ and (ii) $\zeta_{j,k'} \le 0.6^{k'} + 0.4 c \zeta$ for all $k'=1,2,\dots k$
\bi
\item (i) follows from the definition of $\Gamma_{j,k,0}^e$ and $\zeta_{j,*} \le \sum_{k=1}^{\vartheta_{j-1}} \tilde{\zeta}_{j-1,k'} \le \sum_{k=1}^{\vartheta_{j-1}} \tilde{c}_{j-1,k'} \zeta =  r_{j-1} \zeta \le r \zeta = \zeta_*^+$; and (ii) follows from the definition of $\Gamma_{j,k,0}^e$ and the first claim of Lemma \ref{lem_add}. 
\ei

\item For any $k=1,2 \dots \vartheta_j+1$, $\Gamma_{j,K,k}^e$ implies (i) $\zeta_{j,*} \le \zeta_{*}^+$, (ii) $\zeta_{j,k'} \le 0.6^{k'} + 0.4 c \zeta$ for all $k'=1,2,\dots K$,  (iii) $\zeta_{j,K} \le c \zeta$, (iv) $\|\Phi_{j,K} P_j\|_2 \le (r+c) \zeta$, (v) $\tilde\zeta_{j,k'} \le \tilde{c}_{j,k'} \zeta$ for $k'=1,2,\dots k$ and (vi) $\sum_{k'=1}^{k} \tilde\zeta_{j,k'} \le  r_j \zeta \le r \zeta$.
\bi
\item (i) and (ii) follow because $\Gamma_{j,K,0}^e \subseteq \Gamma_{j,K,k}^e$, (iii) follows from (ii) using the definition of $K$, (iv) follows from (i) and (iii) using $\|\Phi_{j,K} P_j\|_2 \le \|\Phi_{j,K} [P_{j,*}, P_{j,\new}]\|_2 \le \zeta_{j,*} + \zeta_{j,K}$, and (v) follows from the definition of $\Gamma_{j,K,k}^e$.
\ei

\item $\Gamma_{J+1,0,0}^e$ implies (i) $\zeta_{j,*} \le \zeta_{*}^+$ for all $j$, (ii) $\zeta_{j,k} \le 0.6^{k} + 0.4 c \zeta$ for all $k=1,\cdots,K$ and all $j$, (iii) $\zeta_{j,K} \le c \zeta$ for all $j$.
\een
\end{remark}

\subsection{Proof of Theorem \ref{thm2}}


The theorem is a direct consequence of Lemmas \ref{lem_add} and \ref{lem_del} and Lemma \ref{subset_lem}.

Notice that $\Gamma_{j,0,0}^e \supseteq  \Gamma_{j,1,0}^e \dots \supseteq   \Gamma_{j,K,0}^e \supseteq  \Gamma_{j,K,1}^e \supseteq  \Gamma_{j,K,2}^e \dots \supseteq  \Gamma_{j,K,\vartheta}^e \supseteq  \Gamma_{j+1,0,0}^e$. Thus, by Lemma \ref{subset_lem}, $\mathbf{P}(\Gamma_{j+1,0,0}^e | \Gamma_{j,0,0}^e) = \mathbf{P}(\Gamma_{j+1,0,0}^e | \Gamma_{j,K,\vartheta}^e) \prod_{k=1}^{\vartheta} \mathbf{P}(\Gamma_{j,K,k}^e | \Gamma_{j,K,k-1}^e) \prod_{k=1}^{K} \mathbf{P}(\Gamma_{j,k,0}^e | \Gamma_{j,k-1,0}^e)$ and $\mathbf{P}(\Gamma_{J+1,0,0} | \Gamma_{1,0,0}) = \prod_{j=1}^{J}  \mathbf{P}(\Gamma_{j+1,0,0}^e | \Gamma_{j,0,0}^e)$.
Using Lemmas \ref{lem_add} and \ref{lem_del}, and the fact that $p_k(\alpha,\zeta) \ge p_K(\alpha,\zeta)$ \cite[Lemma 35]{rrpcp_perf}, we get $\mathbf{P}(\Gamma_{J+1,0,0}^e| \Gamma_{1,0,0}) \ge {p}_K(\alpha,\zeta)^{KJ} \tilde{p}(\tilde{\alpha},\zeta)^{\vartheta_{\max} J}$.
Also, $\mathbf{P}(\Gamma_{1,0,0}^e)=1$. This follows by the assumption on $\hat{P}_0$ and Lemma \ref{lem_et}. Thus, $\mathbf{P}(\Gamma_{J+1,0,0}^e) \ge {p}_K(\alpha,\zeta)^{KJ} \tilde{p}(\tilde{\alpha},\zeta)^{\vartheta_{\max} J}$.

Using the definitions of $\alpha_\add(\zeta)$ and $\alpha_\del(\zeta)$ and $\alpha \ge \alpha_\add$ and $\tilde{\alpha} \ge \alpha_\del$, $\mathbf{P}(\Gamma_{J+1,0,0}^e) \ge {p}_K(\alpha,\zeta)^{KJ} \tilde{p}(\tilde{\alpha},\zeta)^{\vartheta_{\max} J} \ge (1-n^{-10})^2 \ge 1- 2n^{-10}$.


The event $\Gamma_{J+1,0,0}^e$ implies that $\That_t=T_t$ and $e_t$ satisfies (\ref{etdef}) for all $t < t_{J+1}$. Using Remark \ref{SE_rem} and the third claim of Remark \ref{Gamma_rem}, $\Gamma_{J+1,0,0}^e$ implies that all the bounds on the subspace error hold. Using these, Remark \ref{etdef_rem}, $\|a_{t,\new}\|_2 \le \sqrt{c} \gamma_{\new,k}$ and $\|a_t\|_2 \le \sqrt{r} \gamma_*$, $\Gamma_{J+1,0,0}^e$ implies that all the bounds on $\|e_t\|_2$ hold (the bounds are obtained in in Lemmas \ref{lem_et} and \ref{cslem}).

Thus, all conclusions of the the result hold w.p. at least $1- 2n^{-10}$.

%



\section{Lemmas used to prove Lemma \ref{lem_del}}\label{keylems}

In this section, we remove the subscript $j$ at most places. The convention of Remark \ref{remove_j} applies. 

\subsection{Showing exact support recovery and getting an expression for $e_t$}

\begin{lem}[Bounding the RIC of $\Phi_k$] \label{RIC_bnd}
The following hold. 
\ben
\item $\delta_{s} (\Phi_0) = \kappa_s^2 (\Phat_*) \leq  \kappa_{s,*}^2 + 2 \zeta_*$
\item $\delta_{s}(\Phi_k)  = \kappa_s^2 ([\Phat_* \ \Phat_{\new,k}]) \leq \kappa_s^2 (\Phat_*) + \kappa_s^2 (\Phat_{\new,k})\leq \kappa_{s,*}^2 + 2\zeta_* + (\kappa_{s,\new} + \tilde{\kappa}_{s,k} \zeta_k + \zeta_*)^2$ for $k =1,2 \dots K$
\een
\end{lem}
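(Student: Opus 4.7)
The plan is to reduce both bounds to estimates on denseness coefficients $\kappa_s(\cdot)$ of certain basis matrices. The pivot is Lemma \ref{delta_kappa}, which equates $\delta_s(I-PP')$ with $\kappa_s^2(P)$ for any basis matrix $P$. Since $\Phi_0 = I - \hat{P}_*{\hat{P}_*}'$ and $\Phi_k = I - \hat{P}_*{\hat{P}_*}' - \hat{P}_{\new,k}{\hat{P}_{\new,k}}' = I - [\hat{P}_*\ \hat{P}_{\new,k}][\hat{P}_*\ \hat{P}_{\new,k}]'$ (using ${\hat{P}_{\new,k}}'\hat{P}_* = 0$, which holds because $\hat{P}_{\new,k}$ is the output of proj-PCA with projector $I-\hat{P}_*{\hat{P}_*}'$), the leftmost equality in each of the two displayed statements is immediate. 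The task thus reduces to bounding $\kappa_s(\hat{P}_*)$ and $\kappa_s([\hat{P}_*\ \hat{P}_{\new,k}])$.

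For item 1, note that $\|{I_T}'\hat{P}_*\|_2^2 = \|{I_T}'\hat{P}_*{\hat{P}_*}'I_T\|_2$. Writing $\hat{P}_*{\hat{P}_*}' = P_*{P_*}' + (\hat{P}_*{\hat{P}_*}' - P_*{P_*}')$ and applying the triangle inequality yields $\|{I_T}'\hat{P}_*\|_2^2 \leq \|{I_T}'P_*\|_2^2 + \|\hat{P}_*{\hat{P}_*}' - P_*{P_*}'\|_2 \leq \kappa_{s,*}^2 + 2\zeta_*$, where the last step uses Lemma \ref{lemma0} item 2. Taking the supremum over $|T|\leq s$ completes item 1.

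For item 2, for each $T$ the matrix ${I_T}'(\hat{P}_*{\hat{P}_*}' + \hat{P}_{\new,k}{\hat{P}_{\new,k}}')I_T$ is a sum of two p.s.d.\ matrices, so $\lambda_{\max}$ of the sum is bounded by the sum of the two $\lambda_{\max}$'s, giving $\kappa_s^2([\hat{P}_*\ \hat{P}_{\new,k}]) \leq \kappa_s^2(\hat{P}_*) + \kappa_s^2(\hat{P}_{\new,k})$. Item 1 handles the first term. For the second, decompose $\hat{P}_{\new,k} = P_{\new}{P_{\new}}'\hat{P}_{\new,k} + (I-P_{\new}{P_{\new}}')\hat{P}_{\new,k}$, then apply the triangle inequality together with the definitions of $\kappa_{s,\new}$ and $\tilde{\kappa}_{s,k}$ and the bound $\|{P_{\new}}'\hat{P}_{\new,k}\|_2 \leq 1$ to obtain
\[
\|{I_T}'\hat{P}_{\new,k}\|_2 \leq \kappa_{s,\new} + \tilde{\kappa}_{s,k}\,\|(I-P_{\new}{P_{\new}}')\hat{P}_{\new,k}\|_2.
\]

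The main obstacle is to show $\|(I-P_{\new}{P_{\new}}')\hat{P}_{\new,k}\|_2 \leq \zeta_k + \zeta_*$. Since $P_{\new}$ and $\hat{P}_{\new,k}$ are both $n\times c_{j,\new}$ basis matrices, Lemma \ref{lemma0} item 1 gives $\|(I-P_{\new}{P_{\new}}')\hat{P}_{\new,k}\|_2 = \|(I-\hat{P}_{\new,k}{\hat{P}_{\new,k}}')P_{\new}\|_2$. A further triangle inequality gives
\[
\|(I-\hat{P}_{\new,k}{\hat{P}_{\new,k}}')P_{\new}\|_2 \leq \|(I-\hat{P}_*{\hat{P}_*}'-\hat{P}_{\new,k}{\hat{P}_{\new,k}}')P_{\new}\|_2 + \|\hat{P}_*{\hat{P}_*}'P_{\new}\|_2 = \zeta_k + \|{\hat{P}_*}'P_{\new}\|_2.
\]
The cross term $\|{\hat{P}_*}'P_{\new}\|_2 \leq \zeta_*$ follows from Lemma \ref{lemma0} item 3 applied with $P\to P_*$, $\hat{P}\to \hat{P}_*$, $Q\to P_{\new}$, using the model orthogonality ${P_{\new}}'P_* = 0$ (Assumption \ref{model_lt}). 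Finally, $\tilde{\kappa}_{s,k}\leq 1$ absorbs the $\zeta_*$ contribution so that $\tilde{\kappa}_{s,k}(\zeta_k+\zeta_*)\leq \tilde{\kappa}_{s,k}\zeta_k + \zeta_*$, yielding $\kappa_s(\hat{P}_{\new,k}) \leq \kappa_{s,\new} + \tilde{\kappa}_{s,k}\zeta_k + \zeta_*$. Combining with item 1 delivers item 2.
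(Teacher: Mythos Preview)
Your proof is correct and follows essentially the same approach as the paper, which defers to \cite[Lemma 28]{rrpcp_perf} and notes that the result follows from Lemma \ref{delta_kappa} together with linear algebraic manipulations. Your argument spells out precisely those manipulations via Lemma \ref{lemma0} and the definition of $\kappa_s$, which is exactly the intended route.
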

\begin{proof}
The above lemma is the same as the last two claims of \cite[Lemma 28]{rrpcp_perf}. It follows using Lemma \ref{delta_kappa} and some linear algebraic manipulations.
\end{proof}

\begin{lem}[Sparse recovery, support recovery and expression for $e_t$]\label{lem_et} 
Assume that the conditions of Theorem \ref{thm2} hold.
\ben
\item For all $k=1,2,\dots \vartheta+1$,  $X_{j,K,k-1} \in \Gamma_{j,K,k-1}$ implies that
\ben
\item $\zeta_* \le \zeta_*^+:=r \zeta$, $\zeta_{K} \le c \zeta$, $\|\Phi_{K} P_{j}\|_2 \le (r+c) \zeta$,  
\item $\delta_s(\Phi_K) \le 0.1479$ and $\phi_K \le \phi^+:=1.1735$ 
\item for any $t \in \tilde{\mathcal{I}}_{j,k}$,
\ben
\item the projection noise $\beta_t:= (I - \Phat_{(t-1)} \Phat_{(t-1)}') L_t$ satisfies $\|\beta_t\|_2 \leq \sqrt{\zeta}$,
\item the CS error satisfies $\|\hat{S}_{t,\cs} - S_t\|_2 \leq 7 \sqrt{\zeta}$,
\item $\hat{T}_t = T_t$,
\item $e_t$ satisfies (\ref{etdef}) and $\|e_t\|_2 \leq \phi^+ \sqrt{\zeta}$.
\een

\een

\item For all $k=1,2,\dots \vartheta+1$, $\mathbf{P} ( T_t = \hat{T}_t \ \text{and} \ e_t \ \text{satisfies} \  (\ref{etdef}) \ \text{for  all } \ t \in \tilde{\mathcal{I}}_{j,k} \ | X_{j,K,k-1}) =1$ for all $X_{j,K,k-1} \in \Gamma_{j,K,k-1}$.
\item For all $k=1,2,\dots \vartheta+1$, $\mathbf{P} ( T_t = \hat{T}_t \ \text{and} \ e_t \ \text{satisfies} \  (\ref{etdef}) \ \text{for  all } \ t \in \tilde{\mathcal{I}}_{j,k} \ | \Gamma_{j,K,k-1}^e) =1$.
\een
\end{lem}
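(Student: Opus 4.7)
The plan is to prove the lemma by mimicking the chain of arguments used for the projected CS step in the addition analysis of \cite{rrpcp_perf}, but with $\Phi_{k-1}$ replaced by $\Phi_K$, since throughout $\tilde{\mathcal{I}}_{j,k}$ we have $\hat{P}_{(t-1)} = [\hat{P}_{j-1},\hat{P}_{j,\new,K}]$, so $I-\hat{P}_{(t-1)}\hat{P}_{(t-1)}' = \Phi_K$. First I would unpack the conditioning: by item 2 of Remark \ref{Gamma_rem}, $X_{j,K,k-1} \in \Gamma_{j,K,k-1}$ forces $\zeta_* \le \zeta_*^+ = r\zeta$, $\zeta_K \le c\zeta$ and $\|\Phi_K P_j\|_2 \le \zeta_*+\zeta_K \le (r+c)\zeta$, giving claim 1(a) with no probability cost.

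Next I would establish 1(b) by specializing Lemma \ref{RIC_bnd} at $k=K$: combining the denseness hypotheses $\kappa_{2s,*}\le 0.3$, $\kappa_{2s,\new}\le 0.15$, $\tilde{\kappa}_{2s,K}\le 0.15$ with the smallness of $\zeta_*,\zeta_K$ from step one (and using $\delta_s \le \delta_{2s}$), a routine numerical computation gives $\delta_s(\Phi_K) \le 0.1479$, and hence $\phi_K \le (1-0.1479)^{-1} \le \phi^+ = 1.1735$. This also implies that the constant $C_1$ in Theorem \ref{candes_csbound} evaluates to at most $7$, which is the only quantitative fact I will need from the RIC bound later.

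With the RIC bound in hand I would prove 1(c) pointwise for each $t \in \tilde{\mathcal{I}}_{j,k}$. Since $\beta_t = \Phi_K L_t = \Phi_K P_j a_t$ and $\|a_t\|_2 \le \sqrt r\gamma_*$, the bound $\|\Phi_K P_j\|_2 \le (r+c)\zeta$ from step one together with $\zeta \le 1/((r+c)^3\gamma_*^2)$ yields $\|\beta_t\|_2 \le (r+c)\zeta\sqrt{r}\gamma_* \le \sqrt\zeta$. Since $\sqrt\zeta \le \xi_0(\zeta)$, the true signal $S_t$ is feasible for the CS program in Algorithm \ref{ReProCS_del}, so Theorem \ref{candes_csbound} applies and gives the CS error bound. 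For exact support recovery, the thresholding step combined with $\|\hat{S}_{t,\cs}-S_t\|_\infty \le \rho\|\hat{S}_{t,\cs}-S_t\|_2$ and the two-sided condition $7\rho\xi \le \omega \le S_{\min}-7\rho\xi$ (the latter requires $14\rho\xi_0 \le S_{\min}$, which is one of the theorem's hypotheses) forces $\hat{T}_t = T_t$ by the standard argument: entries of $S_t$ indexed in $T_t$ survive the threshold while entries off $T_t$ do not. Substituting $\hat{T}_t=T_t$ into the LS step of Algorithm \ref{ReProCS_del} and using $M_t = L_t+S_t$ with $L_t = P_{j,*}a_{t,*}+P_{j,\new}a_{t,\new}$ yields \eqref{etdef}; the norm bound $\|e_t\|_2 \le \phi^+\sqrt\zeta$ then follows from $\|({(\Phi_K)_{T_t}}'(\Phi_K)_{T_t})^{-1}\|_2 \le \phi_K \le \phi^+$ together with $\|\Phi_K L_t\|_2 = \|\beta_t\|_2 \le \sqrt\zeta$ and the identity $\Phi_K P_j a_t = \Phi_K L_t$.

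The remaining two claims are almost free. Since every conclusion in claim 1 is deterministic in $X_{j,K,k-1}$ (no randomness beyond the assumed value), claim 2 holds with conditional probability exactly one for every $X_{j,K,k-1}\in\Gamma_{j,K,k-1}$. Claim 3 then follows by Lemma \ref{rem_prob}, which lifts probability-one conditional statements indexed by values of the r.v.~to a probability-one statement conditioned on the event $\Gamma_{j,K,k-1}^e$. The main obstacle I anticipate is purely arithmetic: verifying that the RIC bound $\delta_s(\Phi_K) \le 0.1479$ and hence $C_1 \le 7$ actually hold once all the small terms $\zeta_*,\zeta_K,r^2\zeta^2$ are plugged in; this is a matter of bookkeeping, made painless because the choice of $\zeta$ in Theorem \ref{thm2} was engineered precisely so that these terms are negligible compared with $\kappa_{s,*}^2+\kappa_{s,\new}^2$.
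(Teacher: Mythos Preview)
Your proposal is correct and follows essentially the same route as the paper: claim 1(a) via Remark \ref{Gamma_rem}, claim 1(b) via Lemma \ref{RIC_bnd}, claim 1(c) by the standard projected-CS chain (bound $\beta_t$ using $\|\Phi_K P_j\|_2 \le (r+c)\zeta$ and $\zeta \le 1/((r+c)^3\gamma_*^2)$, apply Theorem \ref{candes_csbound}, then threshold), and claims 2--3 as immediate consequences plus Lemma \ref{rem_prob}. One small caveat: your assertion that $\hat P_{(t-1)} = [\hat P_{j-1},\hat P_{j,\new,K}]$ throughout $\tilde{\mathcal I}_{j,k}$ is only literally true for $k \le \vartheta_j$; for $k=\vartheta_j+1$ you have $\Phi_{(t)}=\Phi_{j+1,0}$, but the argument goes through unchanged since $\|(I-\hat P_j\hat P_j')P_j\|_2 \le r\zeta \le (r+c)\zeta$ by Remark \ref{Gamma_rem}, and similarly your phrase ``no randomness beyond the assumed value'' should be read as ``the bounds hold for every realization of $a_t$, $t\in\tilde{\mathcal I}_{j,k}$, because $\|a_t\|_\infty\le\gamma_*$ deterministically.''
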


\begin{proof} 


Claim 1-a follows using Remark \ref{Gamma_rem}. 
Claim 1-b) follows using claim 1-a) and Lemma \ref{RIC_bnd}.
Claim 1-c) follows in a fashion similar to the proof of \cite[Lemma 30]{rrpcp_perf}. The main difference is that everywhere we use $\Phi_K L_t = \Phi_K P_j a_t$ and $\|\Phi_{K} P_{j}\|_2 \le (r+c) \zeta$.
Claim 1-c-i) uses this and the fact that for $t \in \tilde{\mathcal{I}}_{j,k}$, $\Phi_{(t)} = \Phi_K$, and $\sqrt\zeta \le \sqrt{\gamma_*^2/(r+c)^3}$.
Claim 1-c-ii) uses c-i), $\sqrt{\zeta} \le \xi$ (defined in the theorem), $\delta_{2s} (\Phi_{K}) \le 0.1479$, and Theorem \ref{candes_csbound}.
Claim 1-c-iii) uses c-ii), the definition of $\rho$, the choice of $\omega$ and the lower bound on $S_{\min}$ given in the theorem.
Claim 1-c-iv) uses claim c-iii) and Remark \ref{etdef_rem}. To get the bound on $\|e_t\|_2$ we use the first expression of (\ref{etdef}),  $\phi_K \le \phi^+:=1.1735$, and   $\sqrt\zeta \le \sqrt{\gamma_*^2/(r+c)^3}$.


Claim 2) is just a rewrite of claim 1). Claim 3) follows from claim 2) by Lemma \ref{rem_prob}.

\end{proof}

\subsection{A lemma needed for bounding the subspace error, $\tilde\zeta_k$}


\begin{lem}\label{bound_R}
Assume that $\tilde{\zeta}_{k'} \leq \tilde{c}_{k'} \zeta$ for $k'=1,\cdots, k-1$. Then
\ben
\item $\|D_{\text{det},k}\|_2 = \|\Psi_{k-1} G_{\text{det},k}\|_2 \leq r \zeta$.
\item $\|G_{\text{det},k} {G_{\text{det},k}}' - \hat{G}_{\text{det},k}{\hat{G}_{\text{det},k}}'\|_2 \leq 2 r\zeta$.
\item $0< \sqrt{1-r^2 \zeta^2} \leq \sigma_i(D_k) = \sigma_i(R_k) \leq 1$. Thus, $\|D_k\|_2 = \|R_k\|_2 \le 1$ and $\|D_k^{-1}\|_2 = \|R_k^{-1}\|_2 \le 1/\sqrt{1-r^2 \zeta^2} $.
\item $\|{D_{\text{undet},k}}'E_k \|_2 = \|{G_{\text{undet},k}}'E_k \|_2 \leq \frac{r^2 \zeta^2}{\sqrt{1-r^2\zeta^2}}$. 
\een
\end{lem}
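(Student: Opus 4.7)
The four claims are tightly interlinked, each relying on the inductive hypothesis $\tilde\zeta_{k'}\le \tilde{c}_{k'}\zeta$ for $k'<k$, funneled through Lemma \ref{lemma0}. Two structural facts about the cluster-PCA setup drive everything. First, by construction in Algorithm \ref{ReProCS_del}, the blocks $\hat{G}_1,\dots,\hat{G}_{k-1}$ are mutually orthonormal, so $\hat{G}_{\text{det},k}\hat{G}_{\text{det},k}' = \sum_{i=1}^{k-1}\hat{G}_i\hat{G}_i'$ is an honest orthogonal projector and $\Psi_{k-1}=I-\hat{G}_{\text{det},k}\hat{G}_{\text{det},k}'$ projects onto its orthogonal complement. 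Second, since $G_1,\dots,G_{\vartheta_j}$ is a partition of the columns of the basis matrix $P_j$, we have $G_m'G_{m'}=0$ for $m\ne m'$; in particular both $G_{\text{det},k}$ and $G_{\text{undet},k}$ are orthogonal to $G_k$.

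For Claim 1, the key observation is monotonicity of the projectors: since $\Span(\hat{G}_{\text{det},m+1})\subseteq \Span(\hat{G}_{\text{det},k})$ for $m\le k-1$, the orthogonal complement satisfies the reverse containment, so $\|\Psi_{k-1}v\|_2\le \|\Psi_m v\|_2$ for every $v$. Hence $\|\Psi_{k-1}G_m\|_2 \le \|\Psi_m G_m\|_2 = \tilde{\zeta}_m \le \tilde{c}_m\zeta$ for each $m=1,\dots,k-1$. The block-column inequality $\|[A_1,\dots,A_{k-1}]\|_2\le \sqrt{\sum_m\|A_m\|_2^2}$ combined with $\sum_{m=1}^{k-1}\tilde{c}_m\le r_j\le r$ then yields $\|D_{\text{det},k}\|_2\le \zeta\sqrt{\sum_m \tilde{c}_m^2}\le \zeta\sum_m \tilde{c}_m\le r\zeta$. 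Claim 2 is then immediate from Lemma \ref{lemma0} item 2 with $P=G_{\text{det},k}$ and $\hat{P}=\hat{G}_{\text{det},k}$ (both basis matrices of matching size $n\times\sum_{m=1}^{k-1}\tilde{c}_m$), since $(I-\hat{G}_{\text{det},k}\hat{G}_{\text{det},k}')G_{\text{det},k}=D_{\text{det},k}$.

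For Claim 3, the equality $\sigma_i(D_k)=\sigma_i(R_k)$ follows from $D_k'D_k = R_k'E_k'E_kR_k = R_k'R_k$ (QR with $E_k'E_k=I$); the upper bound is $\|D_k\|_2\le \|\Psi_{k-1}\|_2\|G_k\|_2\le 1$. For the lower bound, expand
\[
D_k'D_k = G_k'\Psi_{k-1}G_k = I - (\hat{G}_{\text{det},k}'G_k)'(\hat{G}_{\text{det},k}'G_k),
\]
giving $\sigma_{\min}(D_k)^2 = 1 - \|\hat{G}_{\text{det},k}'G_k\|_2^2$. Applying Lemma \ref{lemma0} item 3 with $P=G_{\text{det},k}$, $\hat{P}=\hat{G}_{\text{det},k}$, $Q=G_k$ (valid since $G_k'G_{\text{det},k}=0$ and, by Claim 1, $\|(I-\hat{G}_{\text{det},k}\hat{G}_{\text{det},k}')G_{\text{det},k}\|_2\le r\zeta$) bounds $\|\hat{G}_{\text{det},k}'G_k\|_2\le r\zeta$, yielding $\sigma_{\min}(D_k)\ge\sqrt{1-r^2\zeta^2}$.

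Claim 4 is the most delicate. First, $\Span(E_k)=\Span(D_k)\subseteq\Span(\Psi_{k-1})$, so $\Psi_{k-1}E_k=E_k$, which collapses $D_{\text{undet},k}'E_k = G_{\text{undet},k}'\Psi_{k-1}E_k = G_{\text{undet},k}'E_k$. Substituting $E_k = D_k R_k^{-1} = \Psi_{k-1}G_k R_k^{-1}$ and expanding $\Psi_{k-1}=I-\hat{G}_{\text{det},k}\hat{G}_{\text{det},k}'$, the $G_{\text{undet},k}'G_k R_k^{-1}$ term vanishes by orthogonality, leaving
\[
\|G_{\text{undet},k}'E_k\|_2 \le \|\hat{G}_{\text{det},k}'G_{\text{undet},k}\|_2\cdot\|\hat{G}_{\text{det},k}'G_k\|_2\cdot\|R_k^{-1}\|_2.
\]
Two invocations of Lemma \ref{lemma0} item 3 (with $Q=G_{\text{undet},k}$ and $Q=G_k$, both orthogonal to $G_{\text{det},k}$) bound the first two factors by $r\zeta$, and Claim 3 gives $\|R_k^{-1}\|_2\le 1/\sqrt{1-r^2\zeta^2}$. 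The main obstacle I anticipate is Claim 1 — specifically, verifying the nested-subspace monotonicity for the $\Psi_{k-1}$ projectors from the algorithm-level orthonormality of the $\hat{G}_i$'s, and choosing the aggregation inequality tight enough so that the bound collapses to exactly $r\zeta$ (rather than $\sqrt{r}$ times larger). Once Claim 1 is in hand, Claims 2–4 are largely mechanical substitutions into Lemma \ref{lemma0}.
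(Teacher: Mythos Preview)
Your proposal is correct and follows essentially the same route as the paper's proof: Claim~1 via monotonicity $\|\Psi_{k-1}G_m\|_2\le\|\Psi_m G_m\|_2=\tilde\zeta_m$ and summation; Claims~2--4 via Lemma~\ref{lemma0} with $P=G_{\text{det},k}$, $\hat P=\hat G_{\text{det},k}$ and the appropriate $Q$. The only cosmetic differences are that in Claim~1 the paper uses the cruder triangle inequality $\|[A_1,\dots,A_{k-1}]\|_2\le\sum_m\|A_m\|_2$ rather than your $\sqrt{\sum_m\|A_m\|_2^2}$ (both collapse to $r\zeta$), and in Claim~3 the paper invokes item~4 of Lemma~\ref{lemma0} directly rather than unpacking $D_k'D_k=I-(\hat G_{\text{det},k}'G_k)'(\hat G_{\text{det},k}'G_k)$ and using item~3; your concern about Claim~1 is unfounded, as the monotonicity and the aggregation both go through exactly as you wrote.
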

\begin{proof} The first claim essentially follows by using the fact that $\hat{G}_1,\cdots,\hat{G}_{k-1}$ are mutually orthonormal and triangle inequality. Recall that $\Psi_{k-1} = (I - \hat{G}_{\text{det},k} {\hat{G}_{\text{det},k}}')$. The last three claims use this and the first claim and apply Lemma \ref{lemma0}. The last claim also uses the definition of $D_k$ and its QR decomposition. The complete proof is given in Appendix \ref{proof_lem_bound_R}. \end{proof}

\subsection{Bounding on the subspace error, $\tilde\zeta_k$}

\begin{lem}[Bounding $\tilde{\zeta_k}^+$] \label{bnd_tzetakp}
If
\beq
f_{dec}(\tilde{g}_{\max},\tilde{h}_{\max}) - \frac{f_{inc}(\tilde{g}_{\max},\tilde{h}_{\max})}{\tilde{c}_{\min} \zeta} > 0  \label{Func_del}
\eeq
then $f_{dec}(\tilde{g}_k,\tilde{h}_k) >0$ and $\tilde\zeta_k^+ \le \tilde{c}_k \zeta$.
\end{lem}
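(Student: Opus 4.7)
The plan is to prove the lemma directly by exploiting the monotonicity properties of $f_{inc}$ and $f_{dec}$ noted in Definition~\ref{defzetap}. Recall that $\tilde{g}_k \leq \tilde{g}_{\max}$, $\tilde{h}_k \leq \tilde{h}_{\max}$, and $\tilde{c}_k \geq \tilde{c}_{\min}$ by the definitions of the $\max$ and $\min$ quantities in Assumption~\ref{assu1}. Since $f_{inc}(\cdot,\cdot)$ is increasing in both arguments and $f_{dec}(\cdot,\cdot)$ is decreasing in both arguments, we immediately get
\begin{equation*}
f_{inc}(\tilde{g}_k,\tilde{h}_k) \leq f_{inc}(\tilde{g}_{\max},\tilde{h}_{\max}) \quad \text{and} \quad f_{dec}(\tilde{g}_k,\tilde{h}_k) \geq f_{dec}(\tilde{g}_{\max},\tilde{h}_{\max}).
\end{equation*}

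Next I would establish positivity of the denominators. The hypothesis (\ref{Func_del}) can be rewritten as $f_{dec}(\tilde{g}_{\max},\tilde{h}_{\max}) > \frac{f_{inc}(\tilde{g}_{\max},\tilde{h}_{\max})}{\tilde{c}_{\min} \zeta}$. Since $f_{inc}(\tilde{g}_{\max},\tilde{h}_{\max}) \geq 0$ (this is easy to verify from its explicit form, as all of $\kappa_{s,e}^+$, $\phi^+$, $\zeta$, $\tilde{g}$, $\tilde{h}$, $f$ are nonnegative) and $\tilde{c}_{\min}\zeta > 0$, this forces $f_{dec}(\tilde{g}_{\max},\tilde{h}_{\max}) > 0$. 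Combined with the monotonicity bound above, this yields $f_{dec}(\tilde{g}_k,\tilde{h}_k) \geq f_{dec}(\tilde{g}_{\max},\tilde{h}_{\max}) > 0$, which is the first conclusion of the lemma and also ensures $\tilde\zeta_k^+$ is well-defined.

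Finally, I would chain these two observations to bound $\tilde\zeta_k^+$. Since numerator increases and denominator decreases (while remaining positive) as one moves from $(\tilde{g}_k,\tilde{h}_k)$ to $(\tilde{g}_{\max},\tilde{h}_{\max})$,
\begin{equation*}
\tilde\zeta_k^+ = \frac{f_{inc}(\tilde{g}_k,\tilde{h}_k)}{f_{dec}(\tilde{g}_k,\tilde{h}_k)} \leq \frac{f_{inc}(\tilde{g}_{\max},\tilde{h}_{\max})}{f_{dec}(\tilde{g}_{\max},\tilde{h}_{\max})} \leq \tilde{c}_{\min}\zeta \leq \tilde{c}_k \zeta,
\end{equation*}
where the second inequality is just a rearrangement of the hypothesis (\ref{Func_del}) and the final inequality uses $\tilde{c}_{\min} \leq \tilde{c}_k$. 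This gives the second conclusion.

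There is essentially no obstacle here: the entire argument is a monotonicity/rearrangement calculation, and the only minor point to verify carefully is the nonnegativity of $f_{inc}$ so that the hypothesis forces $f_{dec}(\tilde{g}_{\max},\tilde{h}_{\max}) > 0$ (needed to convert the additive hypothesis into the multiplicative bound used in the final display). The substance of the cluster-PCA analysis lives elsewhere, in showing that $\tilde\zeta_k$ is actually bounded by $\tilde\zeta_k^+$ with high probability (Lemma~\ref{lem_bound_terms}); this lemma merely converts the bound $\tilde\zeta_k^+$ into the clean form $\tilde{c}_k\zeta$ that is needed to apply Remark~\ref{SE_rem} and close the induction on $k$.
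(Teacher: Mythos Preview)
Your proposal is correct and follows essentially the same approach as the paper's own proof: both rely on the monotonicity of $f_{inc}$ (non-decreasing) and $f_{dec}$ (non-increasing) in $\tilde{g},\tilde{h}$, together with the definitions of $\tilde{g}_{\max}$, $\tilde{h}_{\max}$, and $\tilde{c}_{\min}$. You have simply spelled out the chain of inequalities and the positivity check for $f_{dec}$ more explicitly than the paper does.
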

\begin{proof} Recall that $f_{inc}(.)$, $f_{dec}(.)$ are defined in Definition \ref{defzetap} and $\tilde{\zeta_k}^+ := \frac{f_{inc}(\tilde{g},\tilde{h})}{f_{dec}(\tilde{g},\tilde{h})}$.
Notice that $f_{inc}(.)$ is a non-decreasing function of $\tilde{g},\tilde{h}$, and $f_{dec}(.)$ is a non-increasing function. Using the definition of $\tilde{g}_{\max},\tilde{h}_{\max}, \tilde{c}_{\min}$ given in Assumption \ref{assu1}, the result follows.
\end{proof}

\begin{remark} \label{f_inc_rem}
If we ignore the small terms of $f_{inc}(.)$ and $f_{dec}(.)$, the above condition simplifies to requiring that $\frac{3\kappa_{s,e}^+ \phi^+ \tilde{g}_{\max} + \kappa_{s,e}^+ \phi^+ \tilde{h}_{\max}}{1-\tilde{h}_{\max}} \le \frac{\tilde{c}_{\min}}{r+c}$. Since $\tilde{g}_{\max} \ge 1$, the first term of the numerator is the largest one. To ensure that this condition holds we need $\kappa_{s,e}^+$ to be very small. However, as explained in Sec \ref{proof_lem_bound_terms}, if we also assume denseness of $D_{k}$, i.e. if we assume $\kappa_s(D_{k}) \le \kappa_{s,D}^+$ for a small enough $\kappa_{s,D}^+$, then the first term of the numerator can be replaced by $\max( 3\kappa_{s,e}^+ \kappa_{s,D}^+  \phi^+ \tilde{g}_{\max}, \kappa_{s,e}^+ \phi^+ \tilde{h}_{\max})$. This will relax the requirement on $\kappa_{s,e}^+$, e.g. now $\kappa_{s,e}^+ =\kappa_{s,D}^+= 0.3$ will work.
\end{remark}

\begin{lem}[Bounding $\tilde{\zeta_k}$] \label{defnPCA}
If $\lambda_{\min}(\tilde{A}_k) - \lambda_{\max}(\tilde{A}_{k,\perp}) - \|\tilde{\mathcal{H}}_k\|_2 >0$, then
\beq
\tilde{\zeta_k} \leq  \frac{\|\tilde{\mathcal{H}}_k\|_2}{\lambda_{\min} (\tilde{A}_k) - \lambda_{\max} (\tilde{A}_{k,\perp}) -  \|\tilde{\mathcal{H}}_k\|_2}
\label{zetakbnd_del}
\eeq
\end{lem}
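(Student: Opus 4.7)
The plan is to apply Lemma \ref{sin_theta_weyl} directly to the pair $(\tilde{\mathcal{A}}_k,\tilde{\mathcal{H}}_k)$ built in Definition \ref{defHk_del}, after first reducing the target quantity $\tilde{\zeta}_k = \|\Psi_k G_k\|_2$ to a $\sin\theta$-style expression of the form $\|(I - \hat{G}_k\hat{G}_k')E_k\|_2$ that Lemma \ref{sin_theta_weyl} is designed to bound.

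First I would carry out the reduction, which is the main (though still minor) obstacle. Since $\hat{G}_k$ is the output of proj-PCA with $P = \hat{G}_{\mathrm{det},k}$, we have $\hat{G}_k \perp \hat{G}_{\mathrm{det},k}$, so $\Psi_k = \Psi_{k-1} - \hat{G}_k \hat{G}_k'$ and $\Psi_k \hat{G}_{\mathrm{det},k} = 0$. Writing $G_k = \Psi_{k-1}G_k + \hat{G}_{\mathrm{det},k}\hat{G}_{\mathrm{det},k}'G_k = D_k + \hat{G}_{\mathrm{det},k}\hat{G}_{\mathrm{det},k}'G_k$ and using $\Psi_k \hat{G}_{\mathrm{det},k} = 0$ gives $\Psi_k G_k = \Psi_k D_k$. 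Using the QR factorization $D_k = E_k R_k$ from Definition \ref{defHk_del} together with the identity $\Psi_{k-1}E_k = E_k$ (which holds because $\mathrm{span}(E_k) = \mathrm{span}(D_k) \subseteq \mathrm{range}(\Psi_{k-1})$), we get $\Psi_k E_k = (I - \hat{G}_k\hat{G}_k')E_k$, and therefore
$$\tilde{\zeta}_k = \|\Psi_k D_k\|_2 = \|(I - \hat{G}_k\hat{G}_k')E_k R_k\|_2 \le \|(I - \hat{G}_k\hat{G}_k')E_k\|_2 \cdot \|R_k\|_2 \le \|(I - \hat{G}_k\hat{G}_k')E_k\|_2,$$
where the final inequality uses $\|R_k\|_2 \le 1$ from claim 3 of Lemma \ref{bound_R}.

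Next I would apply Lemma \ref{sin_theta_weyl} with the identifications $[E\ E_\perp] \leftarrow [E_k\ E_{k,\perp}]$, $[F\ F_\perp] \leftarrow [\hat{G}_k\ \hat{G}_{k,\perp}]$, $A \leftarrow \tilde{A}_k$, $A_\perp \leftarrow \tilde{A}_{k,\perp}$, $\Lambda \leftarrow \Lambda_k$, $\mathcal{H} \leftarrow \tilde{\mathcal{H}}_k$, and $c \leftarrow \tilde{c}_{j,k}$. All the structural hypotheses of that lemma are already baked into Definition \ref{defHk_del}: $[E_k\ E_{k,\perp}]$ is orthonormal; $\tilde{\mathcal{A}}_k$ is Hermitian and has the required block-diagonal decomposition in the $[E_k\ E_{k,\perp}]$ basis with $\tilde{A}_k$ of size $\tilde{c}_{j,k}\times\tilde{c}_{j,k}$; and $\tilde{\mathcal{A}}_k + \tilde{\mathcal{H}}_k$ admits the stated EVD with $\Lambda_k$ of matching size. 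Under the present lemma's hypothesis $\lambda_{\min}(\tilde{A}_k) > \lambda_{\max}(\tilde{A}_{k,\perp}) + \|\tilde{\mathcal{H}}_k\|_2$, Lemma \ref{sin_theta_weyl} gives
$$\|(I - \hat{G}_k\hat{G}_k')E_k\|_2 \le \frac{\|\tilde{\mathcal{H}}_k\|_2}{\lambda_{\min}(\tilde{A}_k) - \lambda_{\max}(\tilde{A}_{k,\perp}) - \|\tilde{\mathcal{H}}_k\|_2},$$
and combining with the reduction displayed above yields (\ref{zetakbnd_del}).

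The heavy lifting --- bounding $\|\tilde{\mathcal{H}}_k\|_2$, $\lambda_{\min}(\tilde{A}_k)$, and $\lambda_{\max}(\tilde{A}_{k,\perp})$ in terms of $\tilde{g}_{j,k}$, $\tilde{h}_{j,k}$ and $\zeta$ via Ostrowski's theorem, Weyl's theorem and the Hoeffding corollaries --- is deferred to Lemma \ref{lem_bound_terms} and need not be addressed here. Thus the only real work in the present lemma is the short algebraic identity $\Psi_k G_k = (I - \hat{G}_k\hat{G}_k')E_k R_k$, after which Lemma \ref{sin_theta_weyl} does all the perturbation-theoretic work.
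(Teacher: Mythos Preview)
Your proposal is correct and follows essentially the same approach as the paper's proof. The only difference is cosmetic: the paper invokes the identity $\tilde{\zeta}_k = \|(I-\hat{G}_k\hat{G}_k')D_k\|_2$ directly from Remark \ref{SE_rem} (item 1), whereas you re-derive it from the alternate form $\tilde{\zeta}_k = \|\Psi_k G_k\|_2$; after that, both proofs use $D_k = E_k R_k$ with $\|R_k\|_2 \le 1$ and then apply Lemma \ref{sin_theta_weyl} with $E \equiv E_k$, $F \equiv \hat{G}_k$.
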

\begin{proof}
Recall that $\tilde{A}_k$, $\tilde{A}_{k,\perp}$, $\tilde{\mathcal{H}}_k$ are defined in Definition \ref{defHk_del}. The result follows by using the fact that $\tilde{\zeta}_k = \|(I - \hat{G}_k \hat{G}_k') D_{j,k}\|_2 = \|(I - \hat{G}_k \hat{G}_k') E_{k} R_{k}\|_2 \le \|(I - \hat{G}_k \hat{G}_k') E_{k}\|_2$ and applying Lemma \ref{sin_theta_weyl} with $E \equiv E_k$ and $F \equiv \hat{G}_k$.
\end{proof}

\begin{lem}[High probability bounds for each of the terms in the $\tilde{\zeta}_k$ bound and for $\tilde\zeta_k$]\label{lem_bound_terms} 
 Assume that the conditions of Theorem \ref{thm2} hold. Also, assume that $\mathbf{P}(\Gamma_{j,K,k-1}^e)>0$. Then,  for all $1 \leq k \leq \vartheta_j$,
\ben
\item $\mathbf{P}(\lambda_{\min}(\tilde{A}_{k}) \geq \lambda_{k}^-(1-r^2 \zeta^2  - 0.1 \zeta) | \Gamma_{j,K,k-1}^e) >1- \tilde{p}_1(\tilde{\alpha},\zeta)$
with $\tilde{p}_1(\tilde{\alpha},\zeta)$ given in (\ref{prob1}).
\item $\mathbf{P}(\lambda_{\max}(\tilde{A}_{k,\perp}) \leq \lambda_k^- (\tilde{h}_k+r^2 \zeta^2 f+0.1 \zeta) | \Gamma_{j,K,k-1}^e) > 1-\tilde{p}_2(\tilde{\alpha},\zeta)$ with $\tilde{p}_2(\tilde{\alpha},\zeta)$ given in (\ref{prob2}).
\item $\mathbf{P}(\|\tilde{\mathcal{H}}_{k}\|_2 \leq \lambda_k^- f_{inc}(\tilde{g}_k,\tilde{h}_k) \ |\Gamma_{j,K,k-1}^e) \geq 1 - \tilde{p}_3(\tilde{\alpha},\zeta)$ with $\tilde{p}_3(\tilde{\alpha},\zeta)$ given in (\ref{prob3}).
\item $\mathbf{P}( \lambda_{\min} (\tilde{A}_k) -   \lambda_{\max} (\tilde{A}_{k,\perp})  - \|\tilde{\mathcal{H}}_k\|_2 \ge \lambda_k^- f_{dec}(\tilde{g}_k,\tilde{h}_k) \ |\Gamma_{j,K,k-1}^e) \geq \tilde{p}(\tilde{\alpha},\zeta) :=1- \tilde{p}_{1}(\tilde{\alpha},\zeta) - \tilde{p}_{2}(\tilde{\alpha},\zeta) - \tilde{p}_{3}(\tilde{\alpha},\zeta)$.
\item If $f_{dec}(\tilde{g}_k,\tilde{h}_k) >0$, then $\mathbf{P}(\tilde{\zeta}_k \leq \tilde\zeta_k^+ \ | \Gamma_{j,K,k-1}^e) \geq \tilde{p} (\tilde{\alpha},\zeta)$ \nn
\een
\end{lem}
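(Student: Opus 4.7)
The plan is: prove claims 1--3 via matrix Hoeffding corollaries conditioned on an arbitrary $X_{j,K,k-1}\in\Gamma_{j,K,k-1}$; claim 4 by union-bounding 1--3 together with the algebraic definition of $f_{dec}$; claim 5 by feeding claim 4 into Lemma \ref{defnPCA}. At the end I convert the per-$X$ bounds to bounds conditioned on $\Gamma_{j,K,k-1}^e$ using Lemma \ref{rem_prob}. Throughout, fix any $X_{j,K,k-1}\in\Gamma_{j,K,k-1}$. Lemma \ref{lem_et} gives, for every $t\in\tilde{\mathcal I}_{j,k}$, $\hat T_t=T_t$, the closed form (\ref{etdef}) for $e_t$, and $\|e_t\|_2\le\phi^+\sqrt\zeta$ deterministically. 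Remark \ref{Gamma_rem} combined with Lemma \ref{bound_R} gives $\|D_{\text{det},k}\|_2\le r\zeta$, $\|E_k'D_{\text{undet},k}\|_2\le r^2\zeta^2/\sqrt{1-r^2\zeta^2}$, and $\sigma_{\min}(R_k)\ge\sqrt{1-r^2\zeta^2}$. All of $\hat G_{\text{det},k},\Psi_{k-1},E_k,E_{k,\perp},R_k$ are $X$-measurable, and since the $a_t$'s are mutually independent the summands remain conditionally independent given $X$, as required by Corollaries \ref{hoeffding_nonzero} and \ref{hoeffding_rec}.

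For claim 1 I use the cluster split $P_j=[G_{\text{det},k},G_k,G_{\text{undet},k}]$ to write $\Psi_{k-1}L_t=D_{\text{det},k}(G_{\text{det},k}'L_t)+E_kR_k(G_k'L_t)+D_{\text{undet},k}(G_{\text{undet},k}'L_t)$. The dominant contribution to $\mathbb E[\tilde A_k\mid X]$ is $R_k\bigl(\tfrac1{\tilde\alpha}\sum_t\mathbb E[G_k'L_tL_t'G_k\mid X]\bigr)R_k'$, whose minimum eigenvalue, by Ostrowski's theorem (Theorem \ref{ost}) and $\sigma_{\min}(R_k)^2\ge 1-r^2\zeta^2$, is at least $(1-r^2\zeta^2)\lambda_k^-$. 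The remaining off-cluster cross terms have operator norm at most of order $r^2\zeta^2 f\,\lambda_k^-$ by the two Lemma \ref{bound_R} bounds together with $\lambda^+\le f\lambda^-$. Corollary \ref{hoeffding_nonzero}, with boundedness constant arising from $\|a_t\|_\infty\le\gamma_*$ and deviation $0.1\zeta\lambda_k^-$, then yields claim 1. Claim 2 is entirely analogous: the dominant piece of $\mathbb E[\tilde A_{k,\perp}\mid X]$ comes from $G_{\text{undet},k}$ and has maximum eigenvalue at most $\lambda_{k+1}^+\le\tilde h_k\lambda_k^-$ by the clustering assumption, while the off-cluster cross terms again add a $r^2\zeta^2 f\,\lambda_k^-$ piece.

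Claim 3 is the main obstacle. I split $\tilde{\mathcal H}_k$ into the three blocks $\tilde H_k,\tilde H_{k,\perp},\tilde B_k$. Each summand is bilinear in $L_t,e_t$, with deterministic norm bound of order $(\sqrt r\gamma_*+\phi^+\sqrt\zeta)^2$ using $\|L_t\|_2\le\sqrt r\gamma_*$ and $\|e_t\|_2\le\phi^+\sqrt\zeta$. The conditional expectation of each block is obtained by substituting (\ref{etdef}); for instance $\mathbb E[e_tL_t'\mid X]=I_{T_t}[(\Phi_K)_{T_t}'(\Phi_K)_{T_t}]^{-1}I_{T_t}'\Phi_K P_j\Lambda_t P_j'$. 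The denseness bound $\|I_{T_t}'\Phi_K P_j\|_2\le\kappa_{s,e}^+\|\Phi_K P_j\|_2\le\kappa_{s,e}^+(r+c)\zeta$ produces the $\kappa_{s,e}^+$ factors, $\phi_K\le\phi^+$ produces the $\phi^+$ factors, and the three-cluster structure of $\Lambda_t$ and $P_j$ (via $\tilde g_k,\tilde h_k,f$) assembles into exactly $f_{inc}(\tilde g_k,\tilde h_k)\lambda_k^-$ of Definition \ref{defzetap}. Applying Corollary \ref{hoeffding_rec} to the non-Hermitian $\tilde B_k$ and Corollary \ref{hoeffding_nonzero} to the Hermitian $\tilde H_k,\tilde H_{k,\perp}$, then union-bounding over the three blocks, gives claim 3.

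Claim 4 is the union bound of claims 1--3 together with the identity $f_{dec}(\tilde g_k,\tilde h_k)=1-\tilde h_k-0.2\zeta-r^2\zeta^2-r^2\zeta^2 f-f_{inc}(\tilde g_k,\tilde h_k)$ read off from Definition \ref{defzetap}. Claim 5 then follows by plugging the event of claim 4 into Lemma \ref{defnPCA} and dividing numerator and denominator by $\lambda_k^-$; positivity of the denominator under (\ref{Func_del}) is supplied by Lemma \ref{bnd_tzetakp}. Finally, Lemma \ref{rem_prob} converts the per-$X$ bounds into the stated bounds conditioned on $\Gamma_{j,K,k-1}^e$. The hardest step will be claim 3: precisely tracking all cross terms produced by the three-block cluster decomposition acting against (\ref{etdef}), and verifying that $\kappa_{s,e}^+,\phi^+,\tilde g_k,\tilde h_k,f,\zeta$ assemble into the exact formula for $f_{inc}$.
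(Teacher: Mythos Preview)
Your plan matches the paper's proof: condition on a fixed $X_{j,K,k-1}\in\Gamma_{j,K,k-1}$, decompose $\Psi_{k-1}L_t$ via the three-cluster split of $P_j$, apply the Hoeffding corollaries termwise, union-bound, and then transfer to $\Gamma_{j,K,k-1}^e$ via Lemma \ref{rem_prob}; claims 4 and 5 are handled exactly as you say.

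Two execution-level discrepancies are worth flagging. For claim 1, the off-cluster cross terms $Y_t:=R_ka_{t,k}(D_{\text{det},k}a_{t,\text{det}}+D_{\text{undet},k}a_{t,\text{undet}})'E_k+(\cdot)'$ have \emph{exactly zero} conditional mean, because $\Lambda_t$ is diagonal and hence $a_{t,k}$ is uncorrelated with $a_{t,\text{det}},a_{t,\text{undet}}$. The paper drops the nonnegative quadratic remainder, applies Hoeffding separately to $\sum Z_t$ and to the zero-mean $\sum Y_t$, and picks up only the deviation $0.1\zeta\lambda^-$ from the latter; there is no $r^2\zeta^2 f$ contribution in claim 1 (that factor appears only in claim 2, from $\|D_{\text{det},k}\|_2^2\lambda^+$). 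For claim 3, the paper's grouping is one level finer than yours: it first bounds $\|\tilde{\mathcal H}_k\|_2\le\|\tfrac{1}{\tilde\alpha}\sum e_te_t'\|_2+\max(\|T2\|_2,\|T4\|_2)+\|\tilde B_k\|_2$, where $T2,T4$ are the $E_k$- and $E_{k,\perp}$-projected $L_te_t'+e_tL_t'$ sums, and then applies Hoeffding to these three pieces (using that the $T2$ bound dominates the $T4$ bound). Your direct block-wise plan on $\tilde H_k,\tilde H_{k,\perp},\tilde B_k$ would also work, but the specific coefficients in $f_{inc}$ are tuned to the paper's grouping, so when you ``verify that $\kappa_{s,e}^+,\phi^+,\tilde g_k,\tilde h_k,f,\zeta$ assemble into the exact formula for $f_{inc}$'' you will want to follow that finer split.
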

\begin{proof}
Recall that $f_{inc}(.)$, $f_{dec}(.)$ and $\tilde\zeta_k^+$ are defined in Definition \ref{defzetap}. The proof of the first three claims is given in Sec \ref{proof_lem_bound_terms}. The fourth claim follows directly from the first three using the union bound on probabilities. The fifth claim follows from the fourth using Lemma \ref{defnPCA}.
\end{proof}

\begin{lem}[High probability bound on $\tilde{\zeta_k}$]\label{tilde_zeta}
Assume that the conditions of Theorem \ref{thm2} hold. Then,
\beq
\mathbf{P} (\tilde{\zeta}_k \leq \tilde{c}_k \zeta \ | \Gamma_{j,K,k-1}^e) \geq  \tilde{p}(\tilde{\alpha},\zeta) \nn
\eeq
\end{lem}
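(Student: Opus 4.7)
The plan is to obtain the result by chaining Lemma \ref{bnd_tzetakp} with the fifth claim of Lemma \ref{lem_bound_terms}. The hypothesis on the clustering parameters in Theorem \ref{thm2} is precisely condition (\ref{Func_del}), namely $f_{dec}(\tilde{g}_{\max},\tilde{h}_{\max}) - f_{inc}(\tilde{g}_{\max},\tilde{h}_{\max})/(\tilde{c}_{\min}\zeta) > 0$. Since $f_{inc}$ is non-decreasing and $f_{dec}$ is non-increasing in the cluster quantities $(\tilde{g},\tilde{h})$, and since $\tilde{g}_k \le \tilde{g}_{\max}$, $\tilde{h}_k \le \tilde{h}_{\max}$, $\tilde{c}_k \ge \tilde{c}_{\min}$, this transfers to the $k^\text{th}$ cluster: $f_{dec}(\tilde{g}_k,\tilde{h}_k) > 0$ and $\tilde\zeta_k^+ = f_{inc}(\tilde{g}_k,\tilde{h}_k)/f_{dec}(\tilde{g}_k,\tilde{h}_k) \le \tilde{c}_k \zeta$. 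This is the content of Lemma \ref{bnd_tzetakp}.

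Next, since $f_{dec}(\tilde{g}_k,\tilde{h}_k) > 0$, the fifth claim of Lemma \ref{lem_bound_terms} is applicable and yields
\[
\mathbf{P}(\tilde\zeta_k \le \tilde\zeta_k^+ \mid \Gamma_{j,K,k-1}^e) \ge \tilde{p}(\tilde\alpha,\zeta).
\]
Combining the deterministic bound $\tilde\zeta_k^+ \le \tilde{c}_k \zeta$ from the first paragraph with this probabilistic bound gives the event inclusion $\{\tilde\zeta_k \le \tilde\zeta_k^+\} \subseteq \{\tilde\zeta_k \le \tilde{c}_k\zeta\}$, so monotonicity of conditional probability implies
\[
\mathbf{P}(\tilde\zeta_k \le \tilde{c}_k\zeta \mid \Gamma_{j,K,k-1}^e) \ge \mathbf{P}(\tilde\zeta_k \le \tilde\zeta_k^+ \mid \Gamma_{j,K,k-1}^e) \ge \tilde{p}(\tilde\alpha,\zeta),
\]
which is the claim. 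One mild bookkeeping point is to verify that the conditioning event has positive probability so that the conditional probabilities are well defined; this is handled exactly as in Lemma \ref{lem_bound_terms}, which already assumes $\mathbf{P}(\Gamma_{j,K,k-1}^e) > 0$, and the inductive application in the proof of Theorem \ref{thm2} ensures this at every step via Lemma \ref{lem_add} and the preceding applications of the present lemma.

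There is no real obstacle here: all the heavy lifting (the $\sin\theta$/Weyl bound in Lemma \ref{defnPCA}, the matrix Hoeffding applications bounding $\lambda_{\min}(\tilde{A}_k)$, $\lambda_{\max}(\tilde{A}_{k,\perp})$ and $\|\tilde{\mathcal{H}}_k\|_2$ in Lemma \ref{lem_bound_terms}, and the algebraic verification that the clustering assumption controls $\tilde\zeta_k^+$ in Lemma \ref{bnd_tzetakp}) has already been isolated into prior lemmas. The present lemma is simply the interface at which the deterministic upper bound $\tilde\zeta_k^+ \le \tilde{c}_k\zeta$ and the high-probability bound $\tilde\zeta_k \le \tilde\zeta_k^+$ are composed, so the proof is a two-line deduction.
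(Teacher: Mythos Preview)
Your proposal is correct and follows essentially the same approach as the paper: the paper's proof is the one-line statement that the result ``follows by combining Lemma \ref{bnd_tzetakp} and the last claim of Lemma \ref{lem_bound_terms},'' which is exactly the chaining you spell out. Your additional remarks on monotonicity of $f_{inc}$, $f_{dec}$ and on the positivity of $\mathbf{P}(\Gamma_{j,K,k-1}^e)$ simply unpack what the paper leaves implicit.
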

\begin{proof} This follows by combining Lemma \ref{bnd_tzetakp} and the last claim of Lemma \ref{lem_bound_terms}. \end{proof}

\subsection{Proof of Lemma \ref{lem_bound_terms}} \label{proof_lem_bound_terms}

\begin{proof}
We use $\frac{1}{\tilde{\alpha}}\sum_t$ to denote $\frac{1}{\tilde{\alpha}} \sum_{t \in \tilde{\mathcal{I}}_{j,k}}$.

For $ t \in \tilde{\mathcal{I}}_{j,k}$, let $a_{t,k} := {G_{j,k}}'L_t$, $a_{t,\text{det}} := {G_{\text{det},k}}'L_t = [G_{j,1},\cdots G_{j,k-1}]'L_t$ and $a_{t,\text{undet}}:= {G_{\text{undet},k}}'L_t = [G_{j,k+1}\cdots G_{j,\vartheta_j}]'L_t$. Then $a_t:= P_j'L_t$ can be split as
$a_t = [ a_{t,\text{det}}' \ a_{t,k}' \ a_{t,\text{undet}}']'$. 

This lemma follows using the following facts and the Hoeffding corollaries, Corollary \ref{hoeffding_nonzero} and \ref{hoeffding_rec}.
\ben
\item The statement {\em ``conditioned on r.v. $X$, the event ${\cal E}^e$ holds w.p. one for all $X \in \Gamma$" is equivalent to ``$\mathbf{P}({\cal E}^e|X)=1, \ \text{for all} \ X \in \Gamma$".} We often use the former statement in our proofs since it is often easier to interpret. 

\item The matrices $D_k$, $R_k$, $E_k$, $D_{\text{det},k}, D_{\text{undet},k}$, $\Psi_{k-1}$, $\Phi_K$ are functions of the r.v. $X_{j,K,k-1}$. All terms that we bound for the first two claims of the lemma are of the form $\frac{1}{\alpha} \sum_{t \in \mathcal{\tilde{I}}_{j,k}} Z_t$ where $Z_t= f_1(X_{j,K,k-1}) Y_t f_2(X_{j,K,k-1})$, $Y_t$ is a sub-matrix of $a_t a_t'$ and $f_1(.)$ and $f_2(.)$ are functions of $X_{j,K,k-1}$. For instance, one of the terms while bounding $\lambda_{\min}(\mathcal{A}_k)$ is $\frac{1}{\tilde{\alpha}} \sum_t R_k a_{t,k} {a_{t,k}}'{R_k}'$.

\item  $X_{j,K,k-1}$ is independent of any $a_{t}$ for $t \in  \mathcal{\tilde{I}}_{j,k}$ , and hence the same is true for the matrices  $D_k$, $R_k$, $E_k$, $D_{\text{det},k}, D_{\text{undet},k}$, $\Psi_{k-1}$, $\Phi_K$. Also, $a_{t}$'s for different $t \in \mathcal{\tilde{I}}_{j,k}$ are mutually independent. Thus, conditioned on  $X_{j,K,k-1}$, the $Z_t$'s defined above are mutually independent.
\label{X_at_indep}

\item All the terms that we bound for the third claim contain $e_t$. Using the second claim of Lemma \ref{lem_et}, conditioned on $X_{j,K,k-1}$, $e_t$ satisfies (\ref{etdef}) w.p. one whenever $X_{j,K,k-1} \in \Gamma_{j,K,k-1}$. Conditioned on $X_{j,K,k-1}$, all these terms are also of the form $\frac{1}{\alpha} \sum_{t \in \mathcal{\tilde{I}}_{j,k}} Z_t$ with $Z_t$ as defined above, whenever $X_{j,K,k-1} \in \Gamma_{j,K,k-1}$. Thus, conditioned on  $X_{j,K,k-1}$, the $Z_t$'s for these terms are mutually independent, whenever $X_{j,K,k-1} \in \Gamma_{j,K,k-1}$.

\item By Remark \ref{Gamma_rem}, $X_{j,K,k-1} \in \Gamma_{j,K,k-1}$ implies that $\zeta_{*} \le r \zeta$, $\tilde\zeta_{k'} \le c_{k'} \zeta, \ \text{for all} \ k'=1,2,\dots k-1$, $\zeta_K \le \zeta_K^+ \le c \zeta$, (iv) $\phi_K \le \phi^+$ (by Lemma \ref{lem_et}); (v) $\|\Phi_K P_j\|_2 \le (r+c)\zeta$; and (vi) all conclusions of Lemma \ref{bound_R} hold. 


\item By the clustering assumption, $ \lambda_k^- \le \lambda_{\min}(\E(a_{t,k}{a_{t,k}}')) \le \lambda_{\max}(\E(a_{t,k}{a_{t,k}}')) \le  \lambda_k^+$; $\lambda_{\max}(\E(a_{t,\text{det}}{a_{t,\text{det}}}')) \le \lambda_1^+ = \lambda^+$; and $\lambda_{\max}(\E(a_{t,\text{undet}}{a_{t,\text{undet}}}')) \le \lambda_{k+1}^+$. Also, $\lambda_{\max}(\E(a_t a_t')) \le \lambda^+$.

\item By Weyl's theorem,  for a sequence of matrices $B_t$, $\lambda_{\min}(\sum_t B_t) \ge \sum_t \lambda_{\min}(B_t)$ and $\lambda_{\max}(\sum_t B_t) \le \sum_t \lambda_{\max}(B_t)$.
\een


 Consider $\tilde{A}_k = \frac{1}{\tilde{\alpha}} \sum_t {E_k}' \Psi_{k-1} L_t{L_t}' \Psi_{k-1} E_k$. Notice that
${E_k}' \Psi_{k-1} L_t = R_k a_{t,k} + {E_k}'(D_{\text{det},k} a_{t,\text{det}} + D_{\text{undet},k} a_{t,\text{undet}})$. Let $Z_t = R_k a_{t,k} {a_{t,k}}'{R_k}'$ and let $Y_t = R_k a_{t,k}  ({a_{t,\text{det}}}'{D_{\text{det},k}}' + {a_{t,\text{undet}}}'{D_{\text{undet},k}}')E_k +  E_k'(D_{\text{det},k} a_{t,\text{det}} + D_{\text{undet},k} a_{t,\text{undet}}) {a_{t,k}}'{R_k}'$. Then
\beq
\tilde{A}_k \succeq \frac{1}{\tilde{\alpha}} \sum_t Z_t + \frac{1}{\tilde{\alpha}} \sum_t Y_t\label{lemmabound_1}
\eeq

Consider $\frac{1}{\tilde{\alpha}} \sum_t Z_t = \frac{1}{\tilde{\alpha}} \sum_t R_k a_{t,k} {a_{t,k}}'{R_k}'$. (a) As explained above, the $Z_t$'s are conditionally independent given $X_{j,K,k-1}$. (b) Using Ostrowoski's theorem and Lemma \ref{bound_R}, for all $X_{j,K,k-1} \in \Gamma_{j,K,k-1}$, $\lambda_{\min}( \E(\frac{1}{\tilde{\alpha}}\sum_t Z_t|X_{j,K,k-1})) = \lambda_{\min}( R_{k} \frac{1}{\tilde{\alpha}}\sum_t \E(a_{t,k}{a_{t,k}}') {R_{k}}') \ge \lambda_{\min} (R_{k} {R_{k}}')\lambda_{\min} (\frac{1}{\tilde{\alpha}}\sum_t \E(a_{t,k}{a_{t,k}}')) \geq (1- r^2 \zeta^2)\lambda_{k}^-$.
(c) Finally, using $\|R_k\|_2 \leq 1$ and $\|a_{t,k}\|_2 \leq \sqrt{\tilde{c}_k} \gamma_* $, conditioned on $X_{j,K,k-1}$, $0 \preceq Z_t  \preceq \tilde{c}_k \gamma_{*}^2 I  $ holds w.p. one for all $X_{j,K,k-1} \in \Gamma_{j,K,k-1}$.

Thus, applying Corollary \ref{hoeffding_nonzero} with $\epsilon = 0.1 \zeta \lambda^-$, and using $\tilde{c}_k \leq r$, for all $X_{j,K,k-1} \in  \Gamma_{j,K,k-1}$,
\beq
\mathbf{P}(\lambda_{\min} (\frac{1}{\tilde{\alpha}} \sum_t Z_t)
\geq   (1- r^2\zeta^2)\lambda_{k}^-  - 0.1 \zeta \lambda^- | X_{j,K,k-1})
\geq  1- \tilde{c}_k \exp (-\frac{\tilde{\alpha} \epsilon^2 }{8  (\tilde{c}_k  \gamma_{*}^2)^2})
\geq 1 - r \exp (-\frac{\tilde{\alpha} \cdot (0.1 \zeta \lambda^-)^2 }{8  r^2 \gamma_{*}^4})
  \label{lemma_add_A1}
\eeq


Consider $Y_t = R_k a_{t,k}  ({a_{t,\text{det}}}'{D_{\text{det},k}}' + {a_{t,\text{undet}}}'{D_{\text{undet},k}}')E_k +  E_k'(D_{\text{det},k} a_{t,\text{det}} + D_{\text{undet},k} a_{t,\text{undet}}) {a_{t,k}}'{R_k}'$. (a) As before, the $Y_t$'s are conditionally independent given $X_{j,K,k-1}$. (b) Since $\E[a_t]=0$ and $\text{Cov}[a_t]=\Lambda_t$ is diagonal, $\E(\frac{1}{\alpha}\sum_t Y_t|X_{j,K,k-1}) = 0$ whenever $X_{j,K,k-1} \in \Gamma_{j,K,k-1}$. (c) Conditioned on $X_{j,K,k-1}$, $\|Y_t\|_2 \le 2 \sqrt{\tilde{c}_k r} \gamma_*^2 r\zeta(1+ \frac{r\zeta}{\sqrt{1-r^2\zeta^2}}) \leq 2 r^2 \zeta \gamma_*^2 ( 1+ \frac{10^{-4}}{\sqrt{1-10^{-4}}}) \leq \frac{2}{r} ( 1+ \frac{10^{-4}}{\sqrt{1-10^{-4}}}) < 2.1$ holds w.p. one for all $X_{j,K,k-1} \in \Gamma_{j,K,k-1}$. This follows because $X_{j,K,k-1} \in \Gamma_{j,K,k-1}$ implies that $\|D_{\text{det},k}\|_2 \leq r\zeta$, $\|{E_k}' D_{\text{undet},k}\|_2 = \| {E_k}' G_{\text{undet},k}\|_2 \leq \frac{r^2 \zeta^2}{\sqrt{1-r^2\zeta^2}}$.
Thus, under the same conditioning, $-b I \preceq Y_t  \preceq b I$ with $b =2.1$ w.p. one.
Thus, applying Corollary \ref{hoeffding_nonzero} with $\epsilon = 0.1 \zeta \lambda^-$, we get
\beq
\mathbf{P}(\lambda_{\min} (\frac{1}{\tilde{\alpha}} \sum_t Y_t) \geq - 0.1 \zeta \lambda^- | X_{j,K,k-1} ) \geq 1- r \exp ( -\frac{\tilde{\alpha} (0.1 \zeta \lambda^-)^2} {8 \dot (4.2)^2 })  \ \text{for all $X_{j,K,k-1} \in  \Gamma_{j,K,k-1}$}
\label{lemma_add_A2}
\eeq

Combining (\ref{lemmabound_1}), (\ref{lemma_add_A1}) and (\ref{lemma_add_A2}) and using the union bound, $\mathbf{P} (\lambda_{\min}(\tilde{A}_k) \geq \lambda_{k}^-(1 - r^2\zeta^2) - 0.2 \zeta \lambda^-| X_{j,K,k-1}) \geq 1-\tilde{p}_1(\tilde{\alpha},\zeta) \ \text{for all $X_{j,K,k-1} \in  \Gamma_{j,K,k-1}$}$ where
\beq
\tilde{p}_1 (\tilde{\alpha},\zeta) := r \exp (-\frac{\tilde{\alpha} \cdot (0.1 \zeta \lambda^-)^2 }{8  r^2 \gamma_{*}^4}) + r \exp ( -\frac{\tilde{\alpha} (0.1 \zeta \lambda^-)^2} {8 \dot (4.2)^2}) \label{prob1}
\eeq
The first claim of the lemma follows by using $\lambda_{k}^- \ge \lambda^-$ and applying Lemma \ref{rem_prob} with $X \equiv X_{j,K,k-1}$ and $\calc \equiv \Gamma_{j,K,k-1}$.

Consider $\tilde{A}_{k,\perp} := \frac{1}{\alpha} \sum_t {E_{k,\perp}}' \Psi_{k-1} L_t {L_t}' \Psi_{k-1} E_{k,\perp}$. Notice that ${E_{k,\perp}}' \Psi_{k-1} L_t = {E_{k,\perp}}' (D_{\text{det},k} a_{t,\text{det}} + D_{\text{undet},k} a_{t,\text{undet}})$. Thus, $\tilde{A}_{k,\perp} = \frac{1}{\tilde{\alpha}} \sum_t Z_t$ with $Z_t={E_{k,\perp}}' (D_{\text{det},k} a_{t,\text{det}} + D_{\text{undet},k} a_{t,\text{undet}})(D_{\text{det},k} a_{t,\text{det}} + D_{\text{undet},k} a_{t,\text{undet}})'E_{k,\perp}$ which is of size $(n-\tilde{c}_k)\times (n-\tilde{c}_k)$.
(a) As before, given $X_{j,K,k-1}$, the $Z_t$'s are independent.
(b) Conditioned on $X_{j,K,k-1}$, $0 \preceq Z_t  \preceq r \gamma_*^2 I$ w.p. one for all $X_{j,K,k-1} \in  \Gamma_{j,K,k-1}$.
(c) $\E(\frac{1}{\alpha}\sum_t Z_t|X_{j,K,k-1}) \preceq (\lambda_{k+1}^+ + r^2\zeta^2 \lambda^+)I$ for all $X_{j,K,k-1} \in  \Gamma_{j,K,k-1}$.

Thus applying Corollary \ref{hoeffding_nonzero} with $\epsilon =0.1 \zeta \lambda^-$ and using $\tilde{c}_k \geq \tilde{c}_{\min}$, we get
 $$\mathbf{P}(\lambda_{\max}(\tilde{A}_{k,\perp})  \leq \lambda_{k+1}^+ + r^2 \zeta^2 \lambda^+ + 0.1 \zeta \lambda^- | X_{j,K,k-1}) \geq
 1- \tilde{p}_2(\tilde{\alpha},\zeta) \ \text{for all $X_{j,K,k-1} \in  \Gamma_{j,K,k-1}$}$$  
where
\beq
\tilde{p}_2(\tilde{\alpha},\zeta) := (n-\tilde{c}_{\min}) \exp (-\frac{\tilde{\alpha} (0.1 \zeta \lambda^-)^2}{8   r^2 \gamma_*^4 }) \label{prob2}
\eeq
The second claim follows using $\lambda_{k}^- \ge \lambda^-$, $f:=\lambda^+/\lambda^-$, $\tilde{h}_k := {\lambda_{k+1}}^+ / {\lambda_{k}}^-$ in the above expression and applying Lemma \ref{rem_prob}.

Consider the third claim. Using the expression for $\tilde{\mathcal{H}}_k$ given in Definition \ref{defHk_del}, it is easy to see that
\bea
\|\tilde{\mathcal{H}}_k \|_2 &\leq&  \max\{ \|\tilde{H}_k\|_2, \|\tilde{H}_{k,\perp}\|_2 \} + \|\tilde{B}_k\|_2 \leq \|\frac{1}{\tilde{\alpha}} \sum_t e_t {e_t}'\|_2 +  \max(\|T2\|_2, \|T4\|_2) + \|\tilde{B}_k\|_2
\label{add_calH1}
\eea
where $T2:= \frac{1}{\tilde{\alpha}} \sum_t  {E_{k}}' \Psi_{k-1}( L_t {e_t}' + e_t {L_t}')\Psi_{k-1} E_{k}$ and $T4 :=\frac{1}{\tilde{\alpha}} \sum_t {E_{k,\perp}}'\Psi_{k-1} (L_t {e_t}' + {e_t}'L_t)\Psi_{k-1} E_{k,\perp}$. The second inequality follows by using the facts that (i) $\tilde{H}_k = T1 - T2$ where $T1 := \frac{1}{\tilde{\alpha}} \sum_t {E_{k}}' \Psi_{k-1} e_t {e_t}'\Psi_{k-1} E_{k}$, (ii) $\tilde{H}_{k,\perp} = T3 - T4$ where $T3 := \frac{1}{\tilde{\alpha}} \sum_t {E_{k,\perp}}'\Psi_{k-1} e_t {e_t}'\Psi_{k-1}  E_{k,\perp}$, and (iii) $\max(\|T1\|_2, \|T3\|_2) \le \|\frac{1}{\tilde{\alpha}} \sum_t e_t {e_t}'\|_2$.

Next, we obtain high probability bounds on each of the terms on the RHS of (\ref{add_calH1}) using the Hoeffding corollaries. 

Consider $\|\frac{1}{\tilde{\alpha}} \sum_t e_t {e_t}'\|_2$. Let $Z_t = e_t {e_t}'$.
(a) As explained in the beginning of the proof, conditioned on $X_{j,K,k-1}$, the various $Z_t$'s in the summation are independent whenever $X_{j,K,k-1} \in \Gamma_{j,K,k-1}$.
(b) Conditioned on $X_{j,K,k-1}$, $0 \preceq Z_t \preceq b_1 I$ w.p. one for all $X_{j,K,k-1} \in  \Gamma_{j,K,k-1}$. Here $b_1:={\phi^+}^2 \zeta$.
(c) Using $\|\Phi_K P_j\|_2 \leq (r+c) \zeta$, $0 \preceq \frac{1}{\alpha} \sum_t \E(Z_t|X_{j,K,k-1}) \preceq b_2I, \ b_2:= (r+c)^2 \zeta^2 {\phi^+}^2 \lambda^+$ for all $X_{j,K,k-1} \in  \Gamma_{j,K,k-1}$.

Thus, applying Corollary \ref{hoeffding_nonzero} with $\epsilon = 0.1 \zeta \lambda^-$,
\beq
\mathbf{P} ( \|\frac{1}{\tilde{\alpha}} \sum_t  e_t {e_t}' \|_2 \leq b_2  +  0.1 \zeta \lambda^-| X_{j,K,k-1} ) \geq 1- n \exp(-\frac{\tilde{\alpha}( 0.1 \zeta \lambda^-)^2}{ 8 \cdot b_1^2})  \ \text{for all $X_{j,K,k-1} \in  \Gamma_{j,K,k-1}$}
\label{add_etet}
\eeq

Consider $T2$. Let $Z_t: = {E_{k}}' \Psi_{k-1} (L_t {e_t}' + e_t{L_t}')\Psi_{k-1} E_{k}$ which is of size $\tilde{c}_k \times \tilde{c}_k$. Then $T2 = \frac{1}{\tilde{\alpha}} \sum_t Z_t$.
(a) Conditioned on $X_{j,K,k-1}$, the various $Z_t$'s used in the summation are mutually independent whenever $X_{j,K,k-1} \in \Gamma_{j,K,k-1}$.
(b) Notice that ${E_{k}}'\Psi_{k-1} L_t  = R_{k} a_{t,k} + {E_k}' (D_{\text{det},k} a_{t,\text{det}} + D_{\text{undet},k}a_{t,\text{undet}})$ and
${E_{k}}'\Psi_{k-1} e_t = (R_{k}^{-1})' D_k' e_t = (R_{k}^{-1})' D_k' I_{T_t} [(\Phi_K)_{T_t}' (\Phi_K)_{T_t}]^{-1} {I_{T_t}}' \Phi_K P_j a_{t}$.
Thus conditioned on $X_{j,K,k-1}$, $\|Z_t\|_2 \leq  2 b_3$ w.p. one for all $X_{j,K,k-1} \in  \Gamma_{j,K,k-1}$. Here,
$b_3:= \frac{\sqrt{r\zeta}}{\sqrt{1-r^2\zeta^2}} \phi^+ \gamma_*$. This follows using $\|(R_{k}^{-1})'\|_2\leq 1/\sqrt{1-r^2\zeta^2}$, $\|e_t\|_2\leq \phi^+ \sqrt{\zeta}$ and $\|E_k'\Psi_{k-1}L_t\|_2 \le \|L_t\|_2 \leq \sqrt{r} \gamma_*$.
(c) Also, $\|\frac{1}{\alpha} \sum_t \E(Z_t|X_{j,K,k-1})\|_2 \leq  2 b_4$ where $b_4:=  \kappa_{s,e} (r+c) \zeta \phi^+ ( \lambda_k^+ + r \zeta \lambda^+ + \frac{r^2\zeta^2}{\sqrt{1-r^2\zeta^2}} \lambda_{k+1}^+)$.

Thus, applying Corollary \ref{hoeffding_rec} with $\epsilon = 0.1 \zeta \lambda^-$, for all $X_{j,K,k-1} \in \Gamma_{j,K,k-1}$,
\beq
\mathbf{P}( \|T2\|_2 \leq 2 b_4  + 0.1 \zeta \lambda^- | X_{j,K,k-1}) \geq 1- \tilde{c}_k\exp(-\frac{\tilde{\alpha}(0.1 \zeta \lambda^-)^2}{32 \cdot 4 b_3^2}) \nn
\eeq

Consider $T4$. Let $Z_t: = {E_{k,\perp}}'\Psi_{k-1} (L_t {e_t}' + e_t{L_t}')\Psi_{k-1} E_{k,\perp}$ which is of size $(n-\tilde{c}_k)\times (n-\tilde{c}_k)$. Then $T4 = \frac{1}{\tilde{\alpha}} \sum_t Z_t$. (a) conditioned on $X_{j,K,k-1}$, the various $Z_t$'s used in the summation are mutually independent whenever $X_{j,K,k-1} \in  \Gamma_{j,K,k-1}$.
(b) Notice that ${E_{k,\perp}}'\Psi_{k-1} L_t ={E_{k,\perp}}' (D_{\text{det},k} a_{t,\text{det}} + D_{\text{undet},k}a_{t,\text{undet}})$. Thus, conditioned on $X_{j,K,k-1}$, $\|Z_t\|_2 \leq  2b_5$ w.p. one for all $X_{j,K,k-1} \in  \Gamma_{j,K,k-1}$. Here $b_5:=   \sqrt{r\zeta} \phi^+ \gamma_*$.
(c) Also, for all $X_{j,K,k-1} \in \Gamma_{j,K,k-1}$, $\|\frac{1}{\tilde{\alpha}} \sum_t \E(Z_t|X_{j,K,k-1})\|_2 \leq 2 b_6, \ b_6:= \kappa_{s,e}  (r+c) \zeta \phi^+ (\lambda_{k+1}^+ + r \zeta \lambda^+)$.
Applying Corollary \ref{hoeffding_rec} with $\epsilon = 0.1 \zeta \lambda^-$, for all $X_{j,K,k-1} \in \Gamma_{j,K,k-1}$,
\beq
\mathbf{P}( \|T4\|_2 \leq 2b_6  +0.1 \zeta \lambda^-| X_{j,K,k-1}) \geq 1- (n-\tilde{c}_k) \exp(-\frac{\tilde{\alpha} (0.1 \zeta \lambda^-)^2}{32 \cdot4 b_5^2}) \geq 1- (n-\tilde{c}_{\min}) \exp(-\frac{\tilde{\alpha} (0.1 \zeta \lambda^-)^2}{32 \cdot4 b_5^2})\nn
\eeq

Consider $\max(\|T2\|_2,\|T4\|_2)$. Since $b_3 = b_5$ and $b_4 > b_6$, so $2b_6  + \epsilon < 2b_4  + \epsilon$. Therefore, for all $X_{j,K,k-1} \in  \Gamma_{j,K,k-1}$,
$$\mathbf{P}( \|T4 \|_2 \leq 2 b_4 + 0.1 \zeta \lambda^- | X_{j,K,k-1} ) \geq 1- (n-\tilde{c}_{k}) \exp(-\frac{\tilde{\alpha}(0.1 \zeta \lambda^-)^2}{32 \cdot 4 b_3^2})$$
By union bound, for all $X_{j,K,k-1} \in  \Gamma_{j,K,k-1}$,
\beq
\mathbf{P}( \max(\|T2 \|_2,\|T4 \|_2)\leq 2b_4 +  0.1 \zeta \lambda^- |X_{j,K,k-1}) \geq 1- n \exp(-\frac{\tilde{\alpha} (0.1 \zeta \lambda^-)^2}{32 \cdot 4b_3^2})
\label{add_maxT}
\eeq

Notice that if we also introduce an extra denseness coefficient $\kappa_{s,D} := \max_j \max_k \kappa_s(D_k)$, then $\mathbf{P}( \|T2\|_2 \leq 2 \kappa_{s,D} b_4  + 0.1 \zeta \lambda^- | X_{j,K,k-1}) \geq 1- \tilde{c}_k\exp(-\frac{\tilde{\alpha}(0.1 \zeta \lambda^-)^2}{32 \cdot 4 b_3^2})$. Thus, $\mathbf{P}( \max(\|T2 \|_2,\|T4 \|_2)\leq 2\max(\kappa_{s,D} b_4, b_6) +  0.1 \zeta \lambda^- |X_{j,K,k-1}) \geq 1- n \exp(-\frac{\tilde{\alpha} (0.1 \zeta \lambda^-)^2}{32 \cdot 4b_3^2})$. This would help to get a looser bounds on $\tilde{g}_{\max}$ and $\tilde{h}_{\max}$ in Theorem \ref{thm2}.

Consider $\|\tilde{B}_k\|_2$. Let $Z_t := {E_{k,\perp}}'\Psi_{k-1} (L_t-e_t)({L_t}'-{e_t}')\Psi_{k-1} E_{k}$ which is of size $(n-\tilde{c}_k)\times \tilde{c}_k$. Then $\tilde{B}_k = \frac{1}{\tilde{\alpha}} \sum_t Z_t$.
(a) conditioned on $X_{j,K,k-1}$, the various $Z_t$'s used in the summation are mutually independent whenever $X_{j,K,k-1} \in  \Gamma_{j,K,k-1}$.
(b) Notice that ${E_{k,\perp}}'\Psi_{k-1} (L_t-e_t) = {E_{k,\perp}}'( D_{\text{det},k}a_{t,\text{det}} + D_{\text{undet},k} a_{t,\text{undet}} - \Psi_{k-1} e_t)$ and  ${E_{k}}' \Psi_{k-1} (L_t - e_t) = R_{k} a_{t,k}+ {E_{k}}'( D_{\text{det},k} a_{t,\text{det}} + D_{\text{undet},k} a_{t,\text{undet}} - \Psi_{k-1} e_t)$. Thus,  conditioned on $X_{j,K,k-1}$,
$ \|Z_t\|_2 \leq b_7$
w.p. one for all $X_{j,K, k-1} \in  \Gamma_{j,K, k-1}$. Here $b_7 := (\sqrt{r}\gamma_* + \phi^+ \sqrt{\zeta})^2$.
(c)  $\|\frac{1}{\tilde{\alpha}} \sum_t \E(Z_t|X_{j,K,k-1})\|_2 \leq b_8$ for all $X_{j,K,k-1} \in  \Gamma_{j,K,k-1}$ where
\bea
b_8 :=&&  (r+c) \zeta  \kappa_{s,e}  \phi^+ \lambda_k^+
+  [(r+c)\zeta \kappa_{s,e} \phi^+ + (r+c)\zeta \kappa_{s,e} \frac{r^2\zeta^2}{\sqrt{1-r^2\zeta^2}}]\lambda_{k+1}^+  [r^2\zeta^2 + 2(r+c)r \zeta^2 \kappa_{s,e} \phi^+ + (r+c)^2 \zeta^2 \kappa_{s,e}^2 {\phi^+}^2] \lambda^+  \nn
\eea
Thus, applying Corollary \ref{hoeffding_rec} with $\epsilon=0.1 \zeta \lambda^-$,
\beq
\mathbf{P} (\|\tilde{B}_k\|_2 \leq b_8 + 0.1 \zeta \lambda^- | X_{j,K,k-1}) \geq 1 - n \exp(-\frac{\tilde{\alpha} (0.1 \zeta \lambda^-)^2}{32 \cdot b_7^2}) \ \text{for all $X_{j,K,k-1} \in   \Gamma_{j,K,k-1}$}
\label{Bk}
\eeq

Using (\ref{add_calH1}), (\ref{add_etet}), (\ref{add_maxT}) and (\ref{Bk}) and the union bound, for any $X_{j,K,k-1} \in  \Gamma_{j,K,k-1}$,
$$\mathbf{P} (\|\tilde{\mathcal{H}}_k\|_2 \leq b_9 + 0.2 \zeta \lambda^-|X_{j,K,k-1}) \geq 1- \tilde{p}_3(\tilde{\alpha},\zeta)$$
where $b_9 := b_2 +2b_4+ b_8$ and
\beq
\tilde{p}_3(\tilde{\alpha},\zeta):= n \exp(-\frac{\tilde{\alpha} \epsilon^2}{8 \cdot b_1^2}) + n \exp(-\frac{\tilde{\alpha} \epsilon^2}{32\cdot 4 b_3^2}) + n \exp(-\frac{\tilde{\alpha} \epsilon^2}{32 \cdot b_7^2}) \label{prob3}
\eeq
with $b_1 = {\phi^+}^2 \zeta$, $b_3 := \sqrt{r\zeta} \phi^+ \gamma_*$, $b_7 := (\sqrt{r} \gamma_* + \phi^+ \sqrt{\zeta})^2$.
Using $\lambda_{k}^- \ge \lambda^-$, $f:= \lambda^+/ \lambda^-$, $\tilde{g}_k := \lambda_k^+/\lambda_k^-$ and $\tilde{h}_k := \lambda_{k+1}^+ / \lambda_k^-$,  and then applying Lemma \ref{rem_prob}, the third claim of the lemma follows.
\end{proof}

%

\section{Simulation experiments}
\label{sims}

\subsubsection{Data Generation}\label{gendata}
The simulated data is generated as follows.
The measurement matrix $\mathcal{M}_t := [M_1, M_2,\cdots, M_t]$ is of size $2048 \times 5200$. It can be decomposed as a sparse matrix $\mathcal{S}_t:= [S_1, S_2,\cdots, S_t]$ plus a low rank matrix $\mathcal{L}_t:= [L_1, L_2,\cdots, L_t]$.

The sparse matrix $\mathcal{S}_t := [S_1,S_2,\cdots,S_t]$ is generated as follows. For $1 \leq t \leq t_{\train} = 200$, $S_t = 0$.
 For $t_{\train} < t \leq 5200$, $S_t$ has $s$ nonzero elements. The initial support $T_0 = \{1,2,\dots s\}$. Every $\Delta$ time instants we increment the support indices by 1. For example, for $t \in [t_\train +1, t_\train + \Delta-1]$, $T_t = T_0$, for $t \in [t_\train + \Delta, t_\train + 2\Delta-1]$, $T_t = \{2,3,\dots s+1\}$ and so on. Thus, the support set changes in a highly correlated fashion over time and this results in the matrix ${\cal S}_t$ being low rank. The larger the value of $\Delta$, the smaller will be the rank of ${\cal S}_t$ (for $t > t_\train+\Delta$).
 The signs of the nonzero elements of $S_t$ are $\pm 1$ with equal probability  and the magnitudes are uniformly distributed between $2$ and $3$. Thus, $S_{\min} = 2$.

The low rank matrix $\mathcal{L}_t := [L_1,L_2,\cdots,L_t]$ where $L_t := P_{(t)} a_t$ is generated as follows:
 There are a total of $J=2$ subspace change times, $t_1=301$ and $t_2= 2501$. $r_0 = 36$, $c_{1,\new} = c_{2,\new} =1$ and $c_{1,\old} = c_{2,\old} = 3$. Let $U$ be an $2048 \times (r_0 + c_{1,\new} + c_{2,\new})$ orthonormalized random Gaussian matrix.
 For $1\leq t \leq t_1 -1$, $P_{(t)} = P_0$ has rank $r_0$ with  $P_0 = U_{[1,2,\cdots,36]}$. 
 For $t_1  \leq t \leq t_2-1$, $P_{(t)} = P_1 = [P_0\setminus P_{1,\old} \ P_{1,\new}]$ has rank $r_1 = r_0 + c_{1,\new} - c_{1,\old} = 34$ with $ P_{1,\new} = U_{[37]}$ and $P_{1,\old} = U _{[9,18,36]}$.
For $t\geq t_2$, $P_{(t)} = P_2 = [P_1\setminus P_{2,\old} \ P_{2,\new}]$ has rank $r_2 = r_1 + c_{2,\new} - c_{2,\old} = 32$ with $ P_{2,\new} = U_{[38]}$ and $P_{1\old} = U _{[8,17,35]}$.
$a_t$ is independent over $t$.  The various $(a_t)_i$'s are also mutually independent for different $i$. 
 For $1\leq t < t_1$, we let $(a_t)_i$ be uniformly distributed between $-\gamma_{i,t}$ and $\gamma_{i,t}$, where  
\beq
\gamma_{i,t}=
\begin{cases} 400 & \text{if $i=1,2,\cdots,9, \forall t$,}
\\
30 &\text{if $i=10,11,\cdots,18, \forall t$.}
\\
2 &\text{if $i=19,20,\cdots,27, \forall t$.}
\\
1 &\text{if $i=28,29\cdots,36, \forall t$.}
\end{cases} \label{gamma0}
\eeq
For $t_1  \leq t < t_2$, $a_{t,*}$ is an $r_0 - c_{1,\old}$ length vector, $a_{t,\new}$ is a $c_{1,\new}$ length vector and $L_t := P_{(t)} a_t = P_1 a_t = (P_0\setminus P_{1,\old}) a_{t,*,nz} + P_{1,\new} a_{t,\new}$. Now, $(a_{t,*,nz})_i$ is uniformly distributed between $-\gamma_{i,t}$ and $\gamma_{i,t}$ for $i=1,2,\cdots,35$ and $a_{t,\new}$ is uniformly distributed between $-\gamma_{\new,t}$ and $\gamma_{\new,t}$, where
\bea
\gamma_{i,t} &=&
\begin{cases} 400 & \text{if $i=1,2,\cdots,8, \forall t$,}
\\
30 &\text{if $i=9,10,\cdots,16 \forall t$.}
\\
2 &\text{if $i=17,18,\cdots,24, \forall t$.}
\\
1 &\text{if $i=25,26,\cdots,33, \forall t$.}
\end{cases} \label{gamma0} \nn\\
\gamma_{\new,t} &=&
\begin{cases} 1.1^{k-1} & \text{if $t_1 + (k-1) \alpha \leq t \leq t_1 +k\alpha-1, k=1,2,3,4$,}
\\
1.1^{4-1} = 1.331  &\text{if $t \geq t_1 + 4\alpha$.}
\end{cases} \label{gamma1}
\eea
 For $t\geq t_2$, $a_{t,*}$ is an $r_1 - c_{2,\old}$ length vector, $a_{t,\new}$ is a $c_{2,\new}$ length vector and $L_t := P_{(t)} a_t = P_2 a_t = [P_0\setminus P_{1,\old} \ P_{1,\new}] a_{t,*} + P_{2,\new} a_{t,\new}$. Also, $(a_{t,*})_i$ is uniformly distributed between $-\gamma_{i,t}$ and $\gamma_{i,t}$ for $i=1,2,\cdots,r_1-c_{2,\old}$ and $a_{t,\new}$ is uniformly distributed between $-\gamma_{\new,t}$ and $\gamma_{\new,t}$ where
\bea
\gamma_{i,t} &=&
\begin{cases} 400 & \text{if $i=1,2,\cdots,7, \forall t$,}
\\
30 &\text{if $i=8,9,\cdots,14, \forall t$.}
\\
2 &\text{if $i=15,16,\cdots,21, \forall t$.}
\\
1.331 &\text{if $i=22, \forall t$.}
\\
1 &\text{if $i=23,24,\cdots,31, \forall t$.}
\end{cases} \label{gamma0} \\
\gamma_{\new,t} &=&
\begin{cases} 1.1^{k-1} & \text{if $t_2 + (k-1) \alpha \leq t \leq t_2 +k\alpha-1, k=1,2,\cdots,7$,}
\\
1.1^{7-1} = 1.7716  &\text{if $t \geq t_2 + 7\alpha$.}
\end{cases} \label{gamma2}
\eea
Thus for the above model, $S_{\min}=2$, $\gamma_* = 400$, $\gamma_{\new} =1$, $\lambda^+ = 53333$, $\lambda^- = 0.3333$ and $f:=\frac{\lambda^+}{\lambda^-} = 1.6 \times 10^{5}$.
One way to get the clusters of $\{1,2,\cdots,r_j\}$ is as follows.
\ben
\item For $t_1  \leq t < t_2$ with $j=1$, let $\group_{1,(1)} = \{ 1,2,\cdots, 8\}$, $\group_{1,(2)} = \{9,10,\cdots,16\}$ and $\group_{1,(3)} = \{17,18,\cdots,34\}$. Thus, $\tilde{c}_{1,1} = \tilde{c}_{1,2} = 8$,  $\tilde{c}_{1,3} = 18$, $\tilde{g}_{j,1} = \tilde{g}_{j,2} = 1$, $\tilde{g}_{j,3} = 4$, $\tilde{h}_{j,1} = 0.0056$, $\tilde{h}_{j,2} = 0.0044$.
\item For $t\geq t_2$ with $j=2$, let $\group_{1,(1)} = \{ 1,2,\cdots, 7\}$, $\group_{1,(2)} = \{8,10,\cdots,14\}$ and $\group_{1,(3)} = \{17,18,\cdots,32\}$. Thus, $\tilde{c}_{1,1} = \tilde{c}_{1,2} = 7$, $\tilde{c}_{1,3} = 16$,  $\tilde{g}_{j,1} = \tilde{g}_{j,2} = 1$, $\tilde{g}_{j,3} = 4$, $\tilde{h}_{j,1} = 0.0056$, $\tilde{h}_{j,2} = 0.0044$.
\item Therefore, $\tilde{g}_{\max} = 4$, $\tilde{h}_{\max}= 0.0056 $ and $\tilde{c}_{\min} = 7$.
\een

We used $\mathcal{L}_{t_{\train}} + \mathcal{N}_{t_{\train}}$ as the training sequence to estimate $\Phat_0$. Here  $\mathcal{N}_{t_{\train}} = [N_1, N_2 ,\cdots, N_{t_{\train}}]$ is i.i.d. random noise with each $(N_t)_i$ uniformly distributed between $-10^{-3}$ and $10^{-3}$. This is done to ensure that $\Span(\Phat_0) \neq \Span(P_0)$ but only approximates it.

\subsubsection{Results}

\begin{figure}
\hspace{-2mm}
\centerline{
\subfigure[subspace error, $\SE_{(t)}$]{
\epsfig{file = 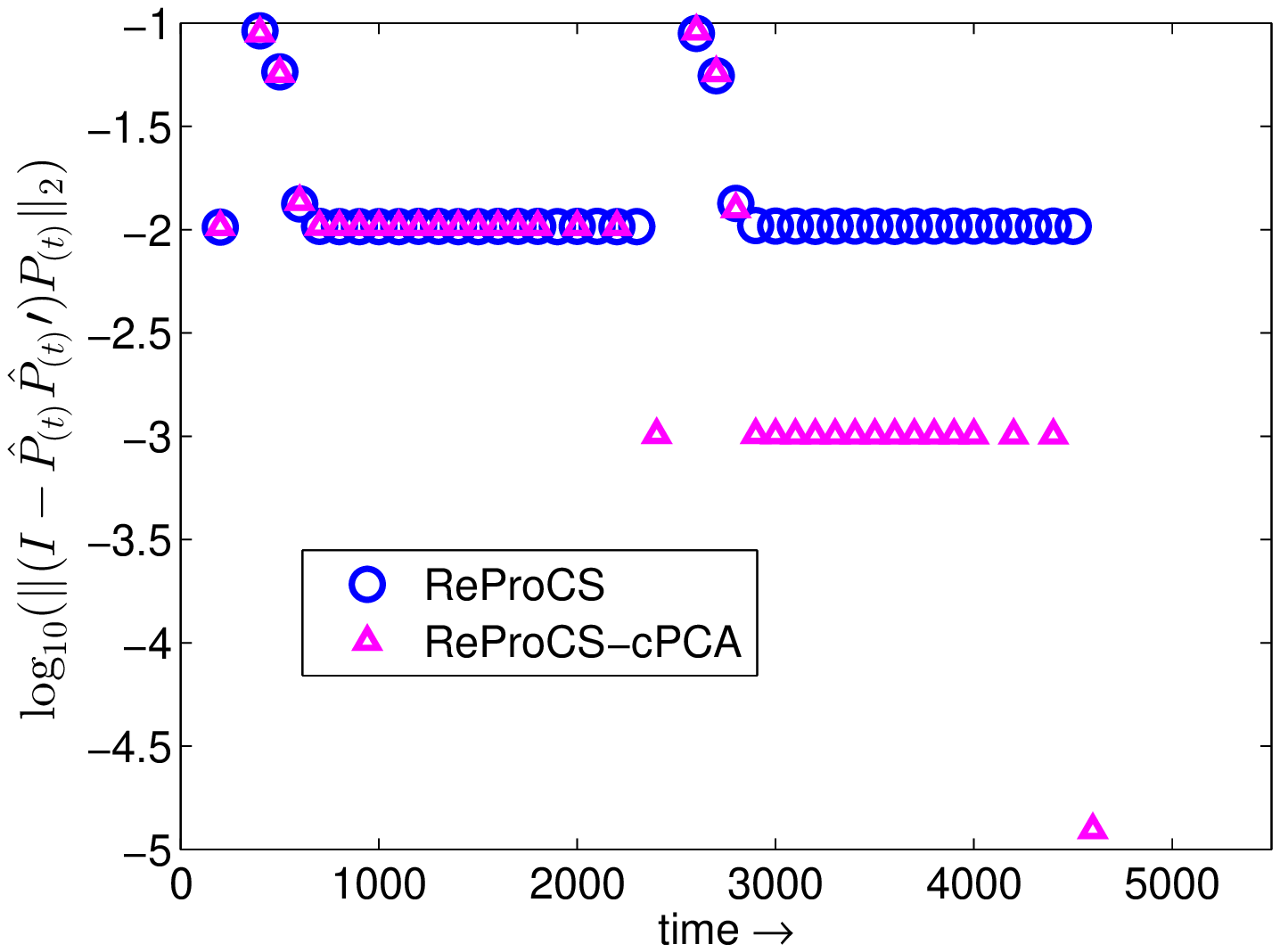, width =6cm, height=5cm}
}
\subfigure[recon error of $S_t$]{
\epsfig{file = 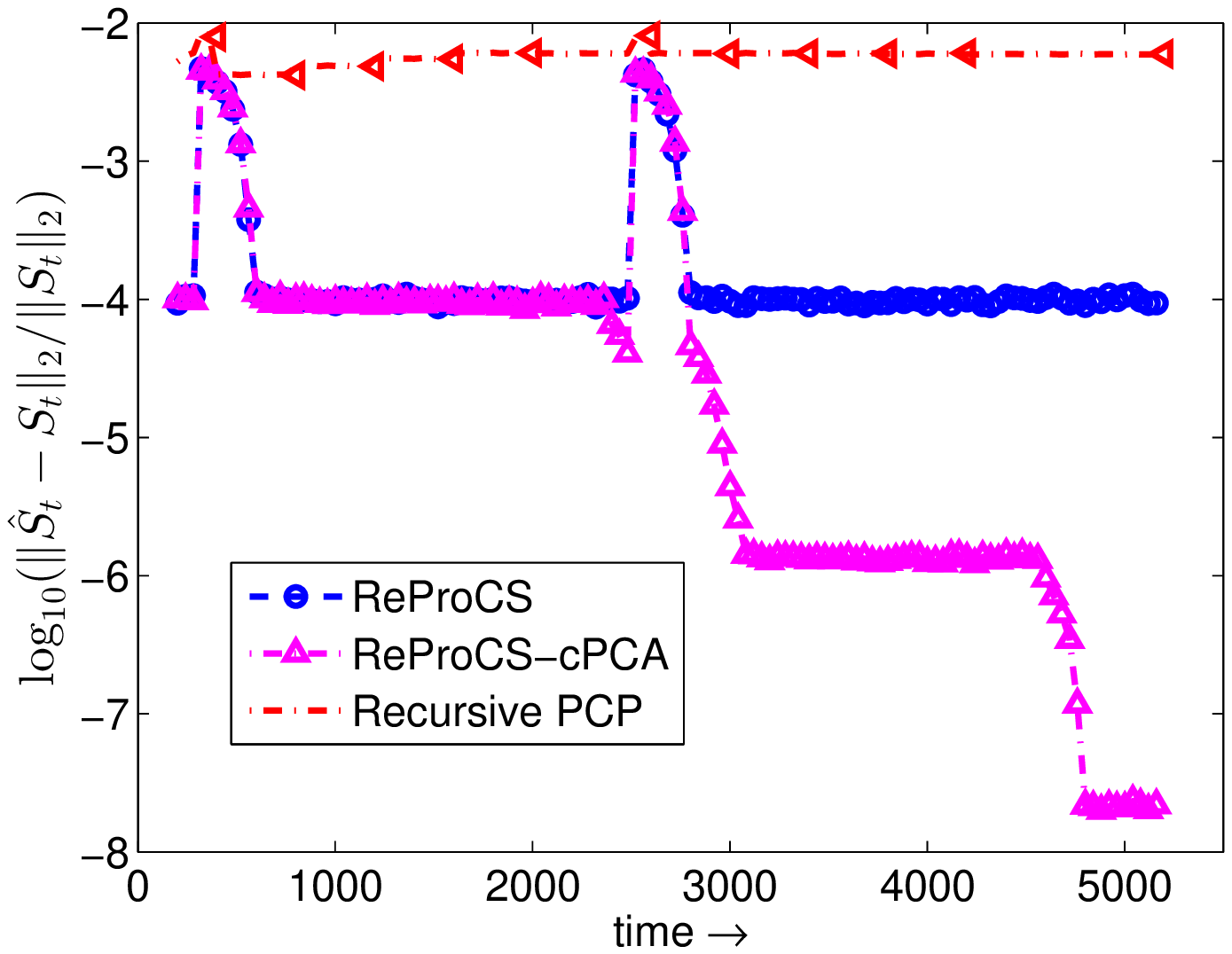, width =6cm, height=5cm}
}
\subfigure[plot of $\frac{\|{I_{T_t}}' D_{j,\new,k}\|_2}{\|D_{j,\new,k}\|_2}$]{
\epsfig{file = 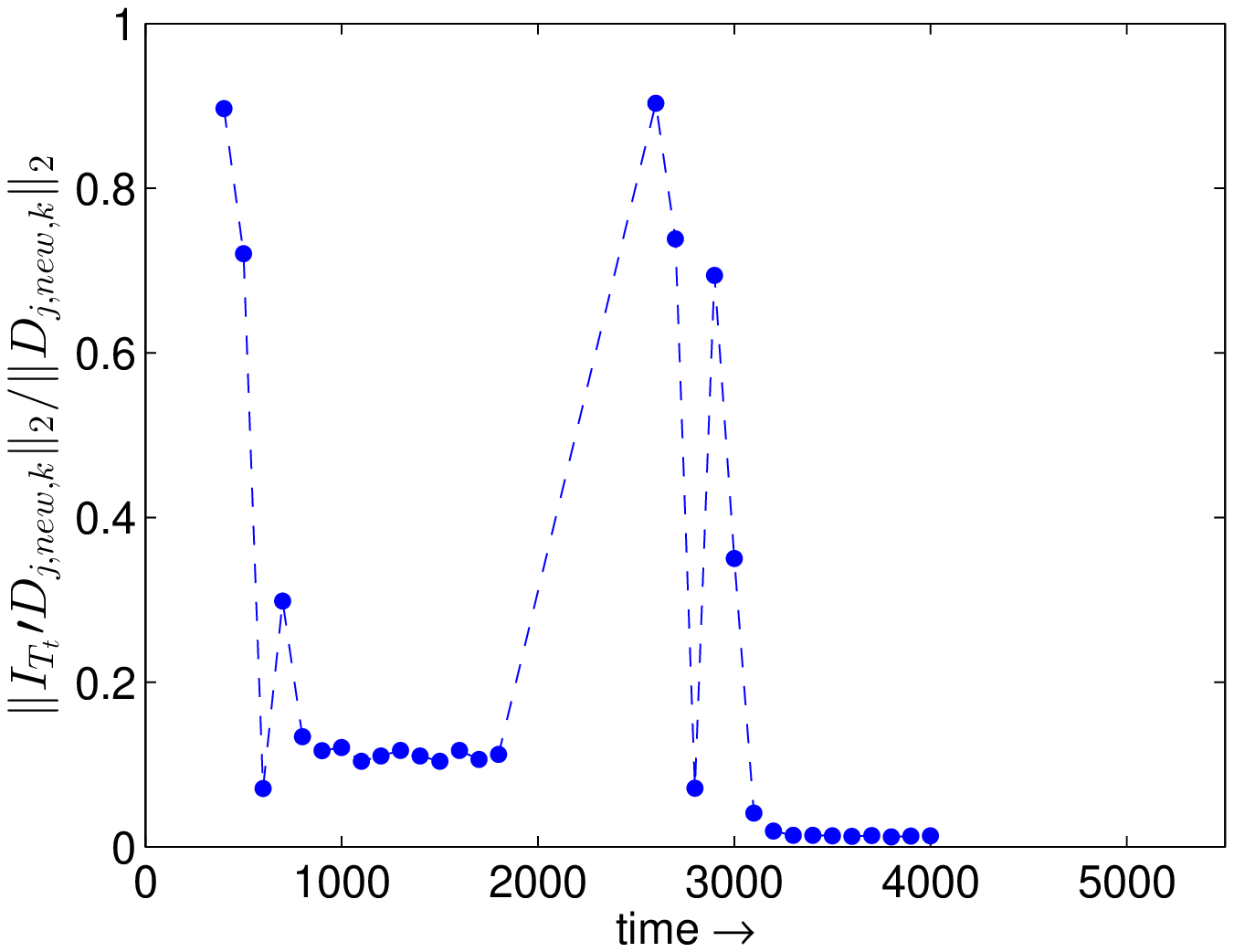, width=6cm,height=5cm}
}
}
\caption{\small{$r_0 =36$, $s = \max_t |T_t| = 20$ and $\Delta =10$. The times at which PCP is done are marked by red triangles in (b).
}\label{s20_Delta_10_del}
}
\end{figure}

\begin{figure}
\hspace{-2mm}
\centerline{
\subfigure[subspace error, $\SE_{(t)}$]{
\epsfig{file = 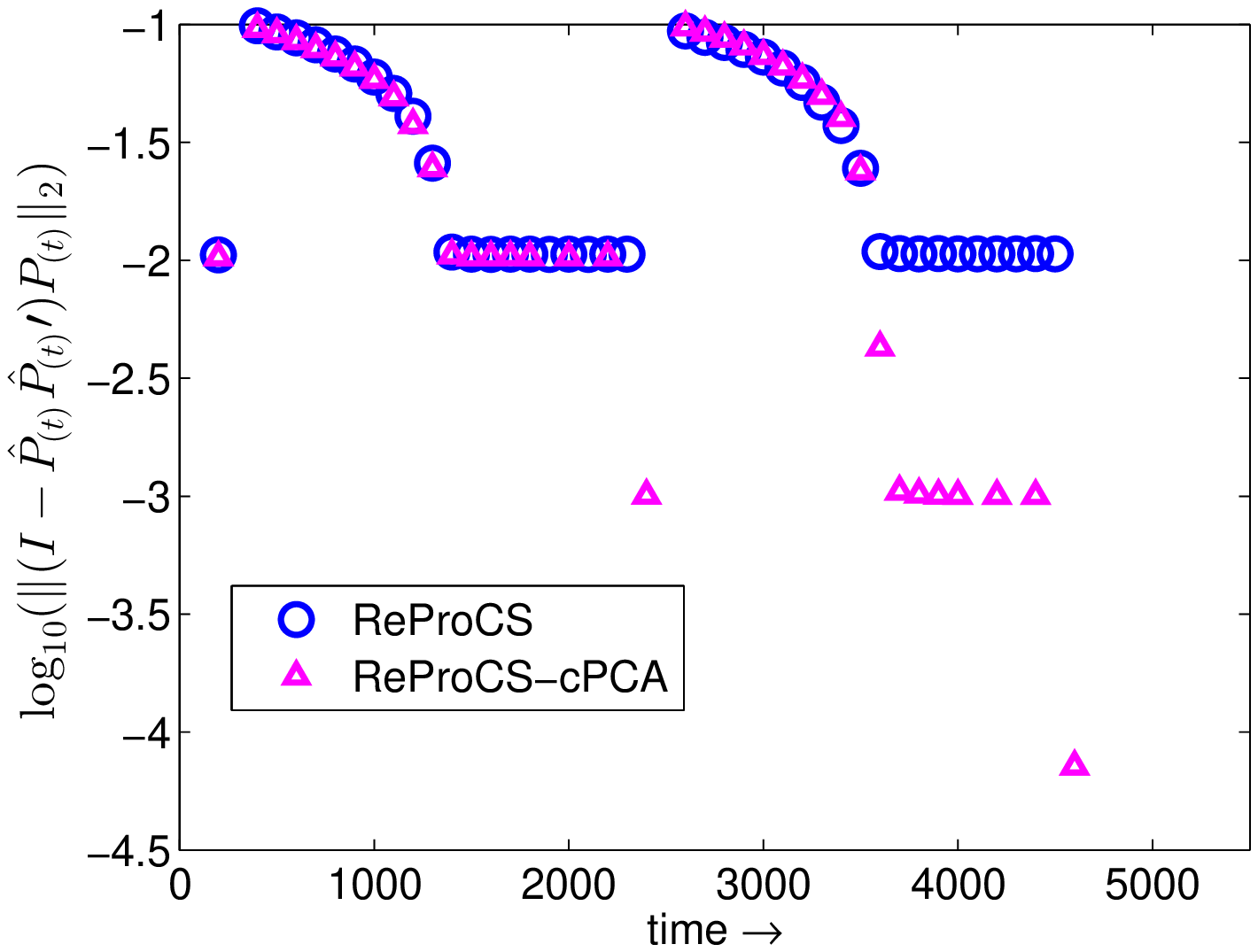, width =6cm, height=5cm}
}
\subfigure[recon error of $S_t$]{
\epsfig{file = 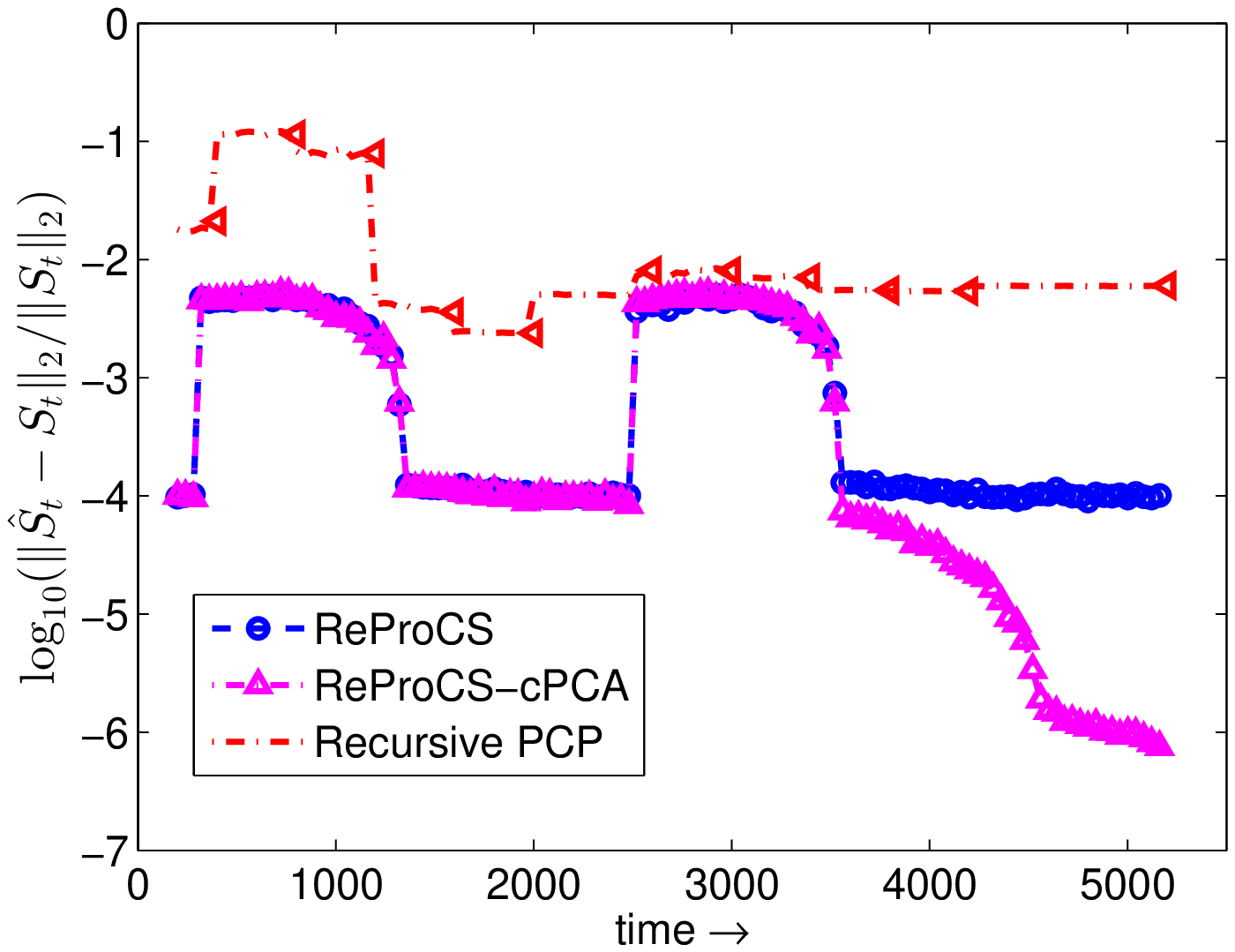, width =6cm, height=5cm}
}
\subfigure[plot of $\frac{\|{I_{T_t}}' D_{j,\new,k}\|_2}{\|D_{j,\new,k}\|_2}$]{
\epsfig{file = 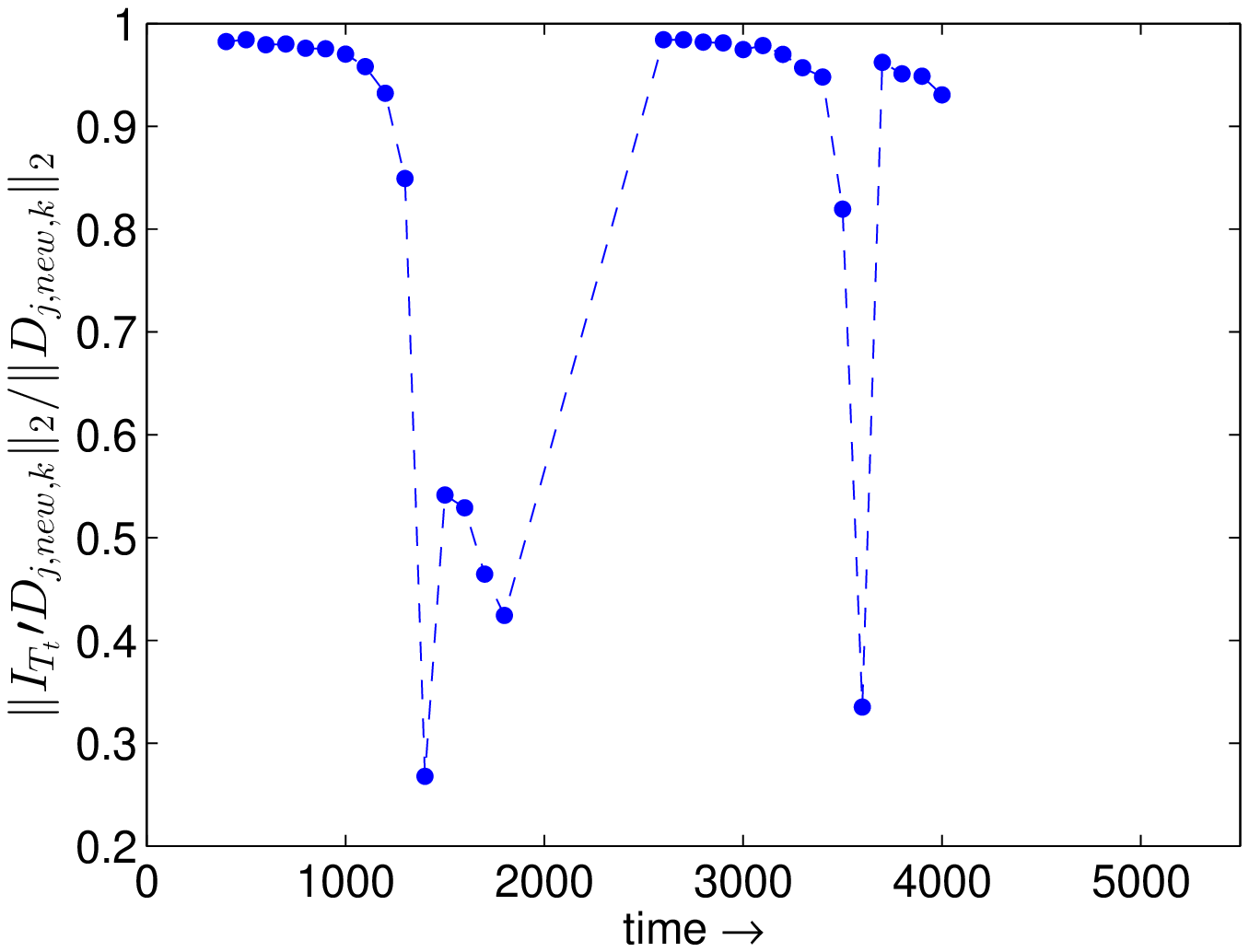, width=6cm,height=5cm}
}
}
\caption{\small{$r_0 =36$, $s = \max_t |T_t| = 20$ and $\Delta =50$.  The times at which PCP is done are marked by red triangles in (b).}\label{s20_Delta_50_del}
}
\end{figure}

For Fig. \ref{s20_Delta_10_del} and Fig. \ref{s20_Delta_50_del}, we used $s=20$. We used $\Delta = 10$ for Fig. \ref{s20_Delta_10_del} and $\Delta= 50$ for Fig. \ref{s20_Delta_50_del}.
Because of the correlated support change, the $2048 \times t$ sparse matrix $\mathcal{S}_t = [S_1 ,S_2,\cdots,S_t]$ is rank deficient in either case, e.g. for Fig. \ref{s20_Delta_10_del}, $\mathcal{S}_t$ has rank $29,39,49,259$ at $t=300,400,500,2600$; for Fig. \ref{s20_Delta_50_del}, $\mathcal{S}_t$ has rank $21,23,25,67$ at $t=300,400,500,2600$. We plot the subspace error $\SE_{(t)}$ and the normalized error for $S_t$, $\frac{\|\hat{S}_t-S_t\|_2}{\|S_t\|_2}$ averaged over 100 Monte Carlo simulations.

As can be seen from Fig. \ref{s20_Delta_10_del} and Fig. \ref{s20_Delta_50_del}, the subspace error $\SE_{(t)}$ of ReProCS and ReProCS-cPCA decreased exponentially and stabilized. Furthermore, ReProCS-cPCA outperforms over ReProCS greatly when deletion steps are done (i.e., at $t=2400$ and $4600$). The averaged normalized error for $S_t$ followed a similar trend.

We also compared against PCP \cite{rpca}. At every $t = t_j + 4 k\alpha$, we solved (\ref{pcp_prob}) with $\lambda = 1/\sqrt{\max(n,t)}$ as suggested in \cite{rpca} to recover ${\cal S}_t$ and ${\cal L}_t$. We used the estimates of $S_t$ for the last $4 \alpha$ frames as the final estimates of $\Shat_t$. So, the $\Shat_t$ for $t=t_j+1, \dots t_j + 4 \alpha$ is obtained from PCP done at $t=t_j + 4 \alpha$, the $\Shat_t$ for $t=t_j+4\alpha + 1, \dots t_j + 8 \alpha$ is obtained from PCP done at $t=t_j + 8 \alpha$ and so on. Because of the correlated support change, the error of PCP was larger in both cases.

We also plot the ratio $\frac{\|{I_{T_t}}' D_{j,\new,k}\|_2}{\|D_{j,\new,k}\|_2}$ at the projection PCA times. This serves as a proxy for $\kappa_s(D_{j,\new,k})$ (which has exponential computational complexity). As can be seen from Fig. \ref{s20_Delta_10_del} and Fig. \ref{s20_Delta_50_del}, this ratio is less than 1 and it becomes larger when $\Delta$ increases ($T_t$ becomes more correlated over $t$).  

We implemented ReProCS-cPCA using Algorithm \ref{ReProCS_del} with $\alpha = 100$, $\tilde{\alpha} = 200$ and $K=15$. The algorithm is not very sensitive to these choices. Also, we let $\xi = \xi_t$ and $\omega = \omega_t$ vary with time. Recall that $\xi_t$ is the upper bound on $\|\beta_t\|_2$. We do not know $\beta_t$. All we have is an estimate of $\beta_t$ from $t-1$, $\hat{\beta}_{t-1} = (I - \Phat_{t-1} {\Phat_{t-1}}')\Lhat_{t-1}$. We used a value a little larger than $\|\hat{\beta}_{t-1}\|_2$; we let $\xi_t = 2\|\hat{\beta}_{t-1}\|_2$. The parameter $\omega_t$ is the support estimation threshold. One reasonable way to pick this is to use a percentage energy threshold of $\Shat_{t,\cs}$\cite{modcsjournal}. For a vector $v$, define the $99\%$-energy set of $v$ as $T_{0.99}(v) := \{i: |v_i| \geq v^{0.99}\}$ where the 99\% energy threshold, $v^{0.99}$, is the largest value of $|v_i|$ so that $\|v_{T_{0.99}}\|_2^2 \geq 0.99 \|v\|_2^2$. It is computed by sorting $|v_i|$ in non-increasing order of magnitude. One keeps adding elements to $T_{0.99}$ until $\|v_{T_{0.99}}\|_2^2 \geq 0.99 \|v\|_2^2$. We used $\omega_t= 0.5(\Shat_{t,\cs})^{0.99}$.

\section{Conclusions and Future Work} \label{conc}
We studied the problem of recursive sparse recovery in the presence of large but structured noise (noise lying in a ``slowly changing" low dimensional subspace). We introduced the ReProCS with cluster-PCA (ReProCS-cPCA) algorithm that addresses some of the limitations of our earlier work on ReProCS \cite{rrpcp_perf} and of PCP \cite{rpca}. 
Under mild assumptions, we showed that, w.h.p., ReProCS-cPCA can exactly recover the support set of $S_t$ at all times; and the reconstruction errors of both $S_t$ and $L_t$ are upper bounded by a time-invariant and small value at all times.
In ongoing work, we are studying the undersampled measurements case. Open questions include (i) how to analyze a practical version of ReProCS-cPCA (which does not assume knowledge of signal model parameters), and (ii) how to study the correlated $a_t$'s case (e.g. the case where $a_t$'s satisfy a linear random walk model). The starting point for (ii) would be to try to use the matrix Azuma inequality \cite{tail_bound} instead of Hoefdding.

\appendices

\section{Proof of Lemma \ref{lem_add}}
\label{proof_lem_add}
The proof follows by using the following three lemmas.

\begin{lem}[Exponential decay of $\zeta_{k}^+$]  
\label{defn_zeta+}
Assume that all the conditions of Theorem \ref{thm2} hold. Let $\zeta_*^+ = r \zeta$.
Define the series ${\zeta_{k}}^+$ as in Definition \ref{defzetap}. Then,
\ben
\item  $\zeta_0^+=1$ and $\zeta_k^+ \le 0.6^k + 0.4c\zeta$ for all $k =1,2, \dots K$,  
\item the denominator of $\zeta_k^+$ is positive for all $k =1,2, \dots K$.
\een
\end{lem}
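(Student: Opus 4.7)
The plan is to proceed by induction on $k$. The base case $\zeta_0^+ = 1$ holds by definition, so the work is in the inductive step. Assume $\zeta_{k-1}^+ \le 0.6^{k-1} + 0.4 c \zeta$; we must show both that the denominator in the expression for $\zeta_k^+$ is strictly positive and that $\zeta_k^+ \le 0.6^k + 0.4 c \zeta$. The denominator-positivity claim of item 2 then follows from the same estimate at each $k$.

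First I would exploit the three bounds on $\zeta$ from Theorem \ref{thm2}, namely $\zeta \le 10^{-4}/(r+c)^2$, $\zeta \le 1.5 \cdot 10^{-4}/((r+c)^2 f)$, and $\zeta \le 1/((r+c)^3 \gamma_*^2)$, to show that each of the ``perturbation'' quantities is negligibly small: $(\zeta_*^+)^2 = r^2\zeta^2 \le 10^{-8}/(r+c)^2$, $(\zeta_*^+)^2 f = r^2\zeta^2 f \le 1.5 \cdot 10^{-4}$, and $C' f (\zeta_*^+)^2$ is likewise tiny. Then I would plug $\kappa_s^+ = 0.15$, $g^+ = \sqrt{2}$, and $\phi^+ = 1.1735$ into the definitions of $C$, $\tilde C$, $C'$ to get explicit numerical bounds
\[
C \kappa_s^+ g^+ \le \alpha_1, \qquad \tilde C (\kappa_s^+)^2 g^+ \le \alpha_2,
\]
where $\alpha_1 \approx 0.324$ and $\alpha_2 \approx 0.051$. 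Using $\zeta_{k-1}^+ \le 1$ (which follows from the inductive hypothesis and the bound on $\zeta$), the quantity $b$ satisfies
\[
b \;\le\; \alpha_1 \zeta_{k-1}^+ + \alpha_2 (\zeta_{k-1}^+)^2 + \eps_1,
\]
where $\eps_1$ collects the $C' f (\zeta_*^+)^2$ piece and is $O(\zeta)$.

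Next I would verify positivity and a quantitative lower bound on the denominator. Since $b \le \alpha_1 + \alpha_2 + \eps_1 < 0.38$ and the other subtracted terms total at most a small multiple of $\zeta$, the denominator exceeds $1 - 0.38 - O(\zeta) > 0.6$, which establishes item 2. For the main bound, substituting the inductive hypothesis and collecting terms of the same type, I would show
\[
b + 0.125 c \zeta \;\le\; \alpha_1 \bigl(0.6^{k-1} + 0.4 c \zeta\bigr) + \alpha_2 \bigl(0.6^{k-1} + 0.4 c \zeta\bigr)^2 + 0.125 c \zeta + \eps_1,
\]
and then bound $\zeta_k^+ = (b + 0.125 c \zeta)/(\text{denom})$ by checking separately that (i) the $0.6^{k-1}$ part of the numerator, divided by the denominator, is at most $0.6^k$, which reduces to verifying $(\alpha_1 + \alpha_2 \cdot 0.6^{k-1})/(1 - \alpha_1 - \alpha_2 - O(\zeta)) \le 0.6$, and (ii) the remaining $c \zeta$ part is at most $0.4 c \zeta$. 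Both inequalities should follow from plugging in the numerical values of $\alpha_1, \alpha_2$ with a small amount of slack.

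The main obstacle is purely arithmetic bookkeeping: ensuring the chosen constants $\kappa_s^+ = 0.15$, $g^+ = \sqrt 2$ and the bound on $\zeta$ together yield the factor $0.6$ with enough margin that the extra $\eps_1$ and $c \zeta$ error terms can be absorbed. If the slack is tight I would either sharpen the estimate of $\alpha_1$ using the fact that $\sqrt{1-(\zeta_*^+)^2}$ is extremely close to $1$, or push the small cross terms into the additive $0.4 c \zeta$ budget. Once the induction is closed, item 1 gives $\zeta_k^+ \le 0.6^k + 0.4 c \zeta$ for all $k \le K$, and the same estimate forces the denominator to stay bounded below by a positive constant, yielding item 2.
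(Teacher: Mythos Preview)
The paper does not give a self-contained proof here; it simply points to \cite[Lemma~37]{rrpcp_perf} and notes that the only change is $\zeta_*^+ = r\zeta$ in place of $(r_0+(j-1)c)\zeta$. Your inductive approach with explicit numerical bounds is exactly the natural argument and is what the referenced lemma does; your estimates $\alpha_1 = C\kappa_s^+ g^+ \approx 0.324$ and $\alpha_2 = \tilde C(\kappa_s^+)^2 g^+ \approx 0.051$ are correct, and the denominator bound $\ge 1-(\alpha_1+\alpha_2)-O(\zeta)\approx 0.626>0$ immediately gives item~2.

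One caution about the execution of item~1: the proposed split into a ``$0.6^{k-1}$ part'' (check~(i)) and a ``$c\zeta$ part'' (check~(ii)), each compared against the \emph{worst-case} denominator $\approx 0.626$, does not quite close on the $c\zeta$ side, since $(0.4\alpha_1+0.125)c\zeta/0.626\approx 0.41c\zeta>0.4c\zeta$. You already anticipate this in your last paragraph, and indeed the slack from check~(i) (which yields $\approx 0.598<0.6$) absorbs the tiny overshoot when $c\zeta\le 10^{-4}$. A cleaner route that avoids the bookkeeping is to skip the split: since the recursion map is increasing in $\zeta_{k-1}^+$, just substitute the inductive hypothesis $\zeta_{k-1}^+ = 0.6^{k-1}+0.4c\zeta$ into both numerator and denominator and verify $\zeta_k^+\le 0.6^k+0.4c\zeta$ by cross-multiplying. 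The genuinely tight case is $k=1$; for $k\ge 2$ the denominator exceeds $0.78$ and there is ample room.
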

\begin{proof}
This lemma is the same as \cite[Lemma 37]{rrpcp_perf} but with $\zeta_*^+$ defined differently.
\end{proof}

\begin{lem}[Sparse recovery, support recovery and expression for $e_t$]\label{cslem} %
Assume that all conditions of Theorem \ref{thm2} hold.
\ben
\item If $\zeta_{*} \le \zeta_{*}^+ := r \zeta$ and $\zeta_{k-1} \le \zeta_{k-1}^+ \le 0.6^{k-1} + 0.4 c \zeta$,  then for all $t \in  \mathcal{I}_{j,k}$, for any $k=1,2,\dots K$,
\ben
\item the projection noise $\beta_t$ satisfies $\|\beta_t\|_2 \leq \zeta_{k-1}^+ \sqrt{c} \gamma_{\new,k} + \zeta_{*}^+ \sqrt{r} \gamma_* \le \sqrt{c} 0.72^{k-1} \gamma_{\new} + 1.06 \sqrt{\zeta} \le \xi$.
\item the CS error satisfies $\|\hat{S}_{t,\cs} - S_t\|_2 \le7 \xi$.
\item  $\hat{T}_t = T_t$ 
\item $e_t$ satisfies (\ref{etdef})  
 and $\|e_t\|_2 \leq \phi^+ [\kappa_s^+ \zeta_{k-1}^+ \sqrt{c} \gamma_{\new,k} + \zeta_{*}^+ \sqrt{r} \gamma_*] \le
 0.18 \cdot 0.72^{k-1} \sqrt{c}\gamma_{\new} + 1.17 \cdot 1.06 \sqrt{\zeta}$
\een
\item For all $k=1,2,\dots K$, $\mathbf{P}(\That_t = T_t \ \text{and} \ e_t \ \text{satisfies (\ref{etdef})}  \text{ for all } t \in \mathcal{{I}}_{j,k}  | X_{j,k-1,0} ) = 1$ for all $X_{j,k-1,0} \in \Gamma_{j,k-1,0}$.
\item For all $k=1,2,\dots K$, $\mathbf{P}(\That_t = T_t \ \text{and} \ e_t \ \text{satisfies (\ref{etdef})}  \text{ for all } t \in \mathcal{{I}}_{j,k}  | \Gamma_{j,k-1,0}^e)=1$.
\een
\end{lem}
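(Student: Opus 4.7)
The plan is to mirror the structure used for Lemma \ref{lem_et} (which handles the cluster-PCA intervals $\tilde{\mathcal{I}}_{j,k}$), but with the bounds adapted to the addition interval $\mathcal{I}_{j,k}$, where the operative projector is $\Phi_{j,k-1}$ rather than $\Phi_{j,K}$, and where $a_{t,\new}$ is governed by $\gamma_{\new,k}$ rather than by the clustered eigenvalue regime. All three claims will be deduced from a deterministic claim (claim 1) via Lemma \ref{rem_prob}, exactly as in Lemma \ref{lem_et}.

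For claim 1(a), I would first write $\beta_t=(I-\Phat_{(t-1)}\Phat_{(t-1)}')L_t=\Phi_{j,k-1}L_t$ for $t\in\mathcal{I}_{j,k}$, then use $L_t=P_{j,*}a_{t,*}+P_{j,\new}a_{t,\new}$ from Definition \ref{defn_at}, together with $\|\Phi_{j,k-1}P_{j,*}\|_2\le\zeta_{j,*}\le\zeta_*^+$ (from Remark \ref{SE_rem} and orthogonality of $\Phat_{j,\new,k-1}$ to $\Phat_{j-1}$) and $\|\Phi_{j,k-1}P_{j,\new}\|_2=\zeta_{j,k-1}\le\zeta_{k-1}^+$, plus the bounds $\|a_{t,*}\|_2\le\sqrt{r}\gamma_*$ and $\|a_{t,\new}\|_2\le\sqrt{c}\gamma_{\new,k}$ from the slow subspace change assumption. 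The numerical simplification to $\sqrt{c}\,0.72^{k-1}\gamma_{\new}+1.06\sqrt{\zeta}$ follows from $\zeta_{k-1}^+\le 0.6^{k-1}+0.4c\zeta$ (Lemma \ref{defn_zeta+}) together with the bound on $\zeta$ in Theorem \ref{thm2}, which forces $\zeta_*^+\sqrt{r}\gamma_*$ and $\zeta_{k-1}^+\cdot 0.4c\zeta\sqrt{c}\gamma_{\new,k}$ to be absorbed into $1.06\sqrt{\zeta}$.

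For claim 1(b), I would invoke Theorem \ref{candes_csbound} on $y_t=\Phi_{j,k-1}M_t=\Phi_{j,k-1}S_t+\beta_t$. The RIC is controlled through Lemma \ref{RIC_bnd}, which gives $\delta_{2s}(\Phi_{j,k-1})\le\kappa_{2s,*}^{+2}+2\zeta_*^++(\kappa_{2s,\new}^++\tilde\kappa_{2s}^+\zeta_{k-1}^++\zeta_*^+)^2$; using $\kappa_{2s,*}^+=0.3$, $\kappa_{2s,\new}^+=\tilde\kappa_{2s}^+=0.15$, $\zeta_{k-1}^+\le 1$, and the tiny value of $\zeta_*^+=r\zeta$, this is well below $\sqrt{2}-1$, so Theorem \ref{candes_csbound} yields $\|\Shat_{t,\cs}-S_t\|_2\le C_1\xi\le 7\xi$. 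For claim 1(c), I would then bound $\|\Shat_{t,\cs}-S_t\|_\infty\le 7\xi\le 7\rho\xi$ (using $\rho\le 1$ appropriately via the definition of $\rho=\max_t\kappa_1(\Shat_{t,\cs}-S_t)$; strictly one uses $\|(\Shat_{t,\cs}-S_t)_i\|\le\rho\|\Shat_{t,\cs}-S_t\|_2\le 7\rho\xi$), and combine with the threshold choice $7\rho\xi\le\omega\le S_{\min}-7\rho\xi$ to get that no true support index is missed and no false one is detected, i.e. $\That_t=T_t$. Claim 1(d) then follows from Remark \ref{etdef_rem}, with the $\|e_t\|_2$ bound obtained by substituting $\Phi_{(t)}P_j a_t=\Phi_{j,k-1}P_{j,*}a_{t,*}+\Phi_{j,k-1}P_{j,\new}a_{t,\new}$ into (\ref{etdef}), using $\|[(\Phi_{j,k-1})_{T_t}'(\Phi_{j,k-1})_{T_t}]^{-1}\|_2\le\phi^+$ (from $\delta_s(\Phi_{j,k-1})\le 0.1479$), and applying $\kappa_s(\Phi_{j,k-1}P_{j,\new})=\kappa_s(D_{j,\new,k-1})\le\kappa_s^+$ to get the $\kappa_s^+$ factor on the $\sqrt{c}\gamma_{\new,k}$ term.

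Claim 2 is the translation of claim 1 into conditional form: if $X_{j,k-1,0}\in\Gamma_{j,k-1,0}$ then $\zeta_{j,*}\le\zeta_*^+$ and $\zeta_{j,k-1}\le\zeta_{k-1}^+$ deterministically (by Remark \ref{Gamma_rem}), so claim 1 applies w.p. one conditioned on $X_{j,k-1,0}$. Claim 3 then follows by Lemma \ref{rem_prob}. The main obstacle I anticipate is purely bookkeeping: verifying that the numerical constants (notably $\delta_{2s}(\Phi_{j,k-1})<\sqrt{2}-1$, $\phi^+=1.1735$, $C_1\le 7$, and the simplification $\xi=\sqrt{c}\gamma_{\new}+1.06\sqrt{\zeta}$ dominates $\zeta_{k-1}^+\sqrt{c}\gamma_{\new,k}+\zeta_*^+\sqrt{r}\gamma_*$ for all $k$) actually hold under the bound $\zeta\le\min(10^{-4}/(r+c)^2,1.5\cdot 10^{-4}/((r+c)^2 f),1/((r+c)^3\gamma_*^2))$ and under $\gamma_{\new,k}=\min(1.2^{k-1}\gamma_{\new},\gamma_*)$. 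These are all routine but delicate; the same chain of estimates was carried out in \cite{rrpcp_perf}, and the only change here is $\zeta_*^+=r\zeta$ in place of $(r_0+(j-1)c)\zeta$, which only improves the constants.
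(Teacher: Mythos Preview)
Your proposal is correct and follows essentially the same approach as the paper. The paper's own proof simply cites \cite[Lemma 30]{rrpcp_perf} and notes that the only change is $\zeta_*^+=r\zeta$ in place of $(r_0+(j-1)c)\zeta$; you have unpacked exactly the argument that sits behind that citation (split $\beta_t$ via $L_t=P_{j,*}a_{t,*}+P_{j,\new}a_{t,\new}$, control the RIC of $\Phi_{j,k-1}$ through Lemma \ref{RIC_bnd}, apply Theorem \ref{candes_csbound}, use the threshold condition on $\omega$ and the definition of $\rho$ for exact support recovery, then Remark \ref{etdef_rem} for the $e_t$ expression), and you correctly identify that claims 2 and 3 follow from Remark \ref{Gamma_rem} and Lemma \ref{rem_prob} respectively.
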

\begin{proof}
The first claim is the same as \cite[Lemma 30]{rrpcp_perf} but with $\zeta_*^+$ defined differently. The proof follows in an analogous fashion. The second claim follows from the first using Remark \ref{Gamma_rem}. The third claim follows using Lemma \ref{rem_prob}.
\end{proof}

\begin{lem}[High probability bound on $\zeta_{k}$]
Assume that all the conditions of Theorem \ref{thm2} hold.  Let $\zeta_*^+ = r \zeta$.  Then, for all $k=1,2, \dots K$, $$\mathbf{P}(\zeta_{k} \le \zeta_k^+|\egam_{j,k-1,0}) \ge p_k(\alpha,\zeta)$$ where $\zeta_k^+$ is defined in Definition \ref{defzetap} and $p_k(\alpha,\zeta)$ is defined in \cite[Lemma 35]{rrpcp_perf}.
\label{lem1_simple}
\end{lem}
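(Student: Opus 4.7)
The plan is to follow the same blueprint used in Lemmas \ref{defnPCA}--\ref{tilde_zeta} for the cluster-PCA step, adapted to the addition proj-PCA step. Since the lemma is exactly \cite[Lemma 35]{rrpcp_perf} except that $\zeta_*^+$ is now $r\zeta$ rather than $(r_0+(j-1)c)\zeta$, and this change is absorbed entirely in the constants $\zeta_k^+$ as redefined in Definition \ref{defzetap}, the argument carries over with the same logical steps.

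First I would introduce an analogue of Definition \ref{defHk_del} for the addition step: let $D_{j,\new} \overset{QR}{=} E_{j,\new} R_{j,\new}$ and define
\[
A_{j,\new,k}, \ A_{j,\new,k,\perp}, \ H_{j,\new,k}, \ H_{j,\new,k,\perp}, \ B_{j,\new,k}
\]
by replacing $\Psi_{k-1}$ with $\Phi_{j,k-1}$ and $E_{j,k}$ with $E_{j,\new}$ in the corresponding expressions, so that
\[
\mathcal{A}_{j,\new,k}+\mathcal{H}_{j,\new,k}=\frac{1}{\alpha}\sum_{t\in\mathcal{I}_{j,k}}\Phi_{j,k-1}\hat L_t\hat L_t'\Phi_{j,k-1}.
\]
By construction, the top $c_{j,\new}$ eigenvectors of this matrix equal $\Phat_{j,\new,k}$. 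Applying Lemma \ref{sin_theta_weyl} with $E\equiv E_{j,\new}$, $F\equiv\Phat_{j,\new,k}$, and using $\zeta_k=\|(I-\Phat_{j,\new,k}\Phat_{j,\new,k}')D_{j,\new}\|_2\le\|(I-\Phat_{j,\new,k}\Phat_{j,\new,k}')E_{j,\new}\|_2$, we obtain the bound
\[
\zeta_k\le \frac{\|\mathcal{H}_{j,\new,k}\|_2}{\lambda_{\min}(A_{j,\new,k})-\lambda_{\max}(A_{j,\new,k,\perp})-\|\mathcal{H}_{j,\new,k}\|_2},
\]
provided the denominator is positive.

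Next I would bound the three terms using Corollaries \ref{hoeffding_nonzero} and \ref{hoeffding_rec} conditioned on $X_{j,k-1,0}$, exactly as in Sec \ref{proof_lem_bound_terms}. Crucially, by Lemma \ref{cslem}, whenever $X_{j,k-1,0}\in\Gamma_{j,k-1,0}$, the $e_t$'s for $t\in\mathcal{I}_{j,k}$ satisfy (\ref{etdef}) deterministically, so the $Z_t$'s one forms are conditionally independent given $X_{j,k-1,0}$; this is the standard device that lets Hoeffding be applied despite the coupling between $e_t$ and $L_t$. The perturbation matrix $\mathcal H_{j,\new,k}$ contains cross-terms involving $\Phi_{j,k-1} e_t L_t'\Phi_{j,k-1}$, whose norms inherit the factor $\|e_t\|_2\le\phi^+[\kappa_s^+\zeta_{k-1}^+\sqrt{c}\gamma_{\new,k}+\zeta_*^+\sqrt r\gamma_*]$ from Lemma \ref{cslem}, and the denseness bounds $\kappa_{2s}(D_{j,\new,k-1})\le \kappa_s^+$, $\kappa_{2s}(Q_{j,\new,k-1})\le\tilde\kappa_{2s}^+$ to produce the constants $C,C',\tilde C$ appearing in Definition \ref{defzetap}. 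The bound on $\lambda_{\min}(A_{j,\new,k})$ from below and $\lambda_{\max}(A_{j,\new,k,\perp})$ from above uses the small-condition-number hypothesis $g_{j,k}\le g^+=\sqrt 2$ together with Ostrowski's theorem applied to $R_{j,\new}$ (note $\|R_{j,\new}^{-1}\|_2\le 1/\sqrt{1-(\zeta_*^+)^2}$ by Lemma \ref{lemma0}). Each Hoeffding application produces an exponential tail $\exp(-\alpha\epsilon^2/(8 b^2))$, and summing the three failure probabilities over the explicit constants in Definition \ref{defn_alpha} gives the function $p_k(\alpha,\zeta)$.

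Finally, using Lemma \ref{defn_zeta+} (which guarantees that the denominator defining $\zeta_k^+$ is strictly positive and that $\zeta_k^+\le 0.6^k+0.4c\zeta$), one matches the algebraic expression obtained from the sin-$\theta$ bound to the recursive definition of $\zeta_k^+$. Applying Lemma \ref{rem_prob} converts the statement \emph{``for all $X_{j,k-1,0}\in\Gamma_{j,k-1,0}$, $\mathbf P(\zeta_k\le\zeta_k^+\mid X_{j,k-1,0})\ge p_k(\alpha,\zeta)$''} into the desired conditional probability on the event $\Gamma_{j,k-1,0}^e$. The main obstacle is purely bookkeeping: one must carefully track how the previous-step error $\zeta_{k-1}^+$ enters the perturbation bound so that the recursion $\zeta_k^+=(\cdot+0.125c\zeta)/(\cdot)$ closes; this is already handled in \cite[Lemma 35]{rrpcp_perf} and only needs re-verification with $\zeta_*^+=r\zeta$, which leaves all algebraic identities unchanged because the proof only ever uses the numerical bound $\zeta_*^+\le r\zeta\le 10^{-4}/(r+c)$.
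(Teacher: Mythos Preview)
Your approach is essentially the same as the paper's: bound $\zeta_k$ via a $\sin\theta$ estimate (Lemma \ref{sin_theta_weyl}) applied to the proj-PCA matrix, use Lemma \ref{cslem} to express $e_t$ via (\ref{etdef}), apply the Hoeffding corollaries to each term, invoke Lemma \ref{defn_zeta+} for positivity of the denominator, and finish with Lemma \ref{rem_prob}. The paper simply points to \cite[Lemma 35]{rrpcp_perf} and notes that only $\zeta_*^+$ changes.

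There is one concrete slip you should fix. In the addition step (Algorithm \ref{ReProCS_del}, step 3b) the proj-PCA is always carried out with $P=\Phat_{j-1}$, so the matrix whose top $c_{j,\new}$ eigenvectors give $\Phat_{j,\new,k}$ is
\[
\frac{1}{\alpha}\sum_{t\in\mathcal I_{j,k}}\Phi_{j,0}\,\hat L_t\hat L_t'\,\Phi_{j,0},
\qquad \Phi_{j,0}=I-\Phat_{j-1}\Phat_{j-1}',
\]
and \emph{not} $\frac{1}{\alpha}\sum_t\Phi_{j,k-1}\hat L_t\hat L_t'\Phi_{j,k-1}$ as you wrote. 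The projection matrix does not change with $k$; what changes with $k$ is the data interval $\mathcal I_{j,k}$ and, through it, the error $e_t$ (since for $t\in\mathcal I_{j,k}$ one has $\Phi_{(t)}=\Phi_{j,k-1}$ in (\ref{etdef})). This is precisely how $\zeta_{k-1}^+$ enters the perturbation $\mathcal H_{j,\new,k}$ and closes the recursion in Definition \ref{defzetap}. With this correction your decomposition $\mathcal A_{j,\new,k}+\mathcal H_{j,\new,k}$, the QR of $D_{j,\new}=\Phi_{j,0}P_{j,\new}$, and the rest of the argument go through exactly as you outlined.
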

\begin{proof} 
Using Lemma \ref{defn_zeta+}, (i) $\zeta_0^+=1$ and $\zeta_{k-1}^+ \le 0.6^{k-1} + 0.4c\zeta$ and (ii) the denominator of  $\zeta_{k}^+$ is positive. Using this and the theorem's conditions, the above lemma follows exactly as in  \cite[Lemma 35]{rrpcp_perf}. The only difference is that $\zeta_*^+$ is defined differently. Also, $\Gamma_{j,k}:=\Gamma_{j,k,0}$. The proof proceeds by first bounding $\zeta_k$ (in a fashion similar to the bound in Lemma \ref{defnPCA}); using Lemma \ref{cslem} to get an expression for $e_t$; and finally using Corollaries \ref{hoeffding_nonzero} and \ref{hoeffding_rec} to get high probability bounds on each of the terms in the bound on $\zeta_k$.
\end{proof}

\begin{proof}[Proof of Lemma \ref{lem_add}]
Lemma \ref{lem_add} follows by combining Lemma \ref{lem1_simple} and the third claim of Lemma \ref{cslem} and using the fact that $\mathbf{P}({\Gamma}_{j,k,0}^e|{\Gamma}_{j,k-1,0}^e) = \mathbf{P}(\zeta_{k} \le \zeta_k^+, \ \That_t = T_t \ \text{and} \ e_t \ \text{satisfies (\ref{etdef})}  \text{ for all } t \in \mathcal{{I}}_{j,k}|{\Gamma}_{j,k-1,0}^e)$.
%
\end{proof}

\section{Proof of Lemma \ref{bound_R}}\label{proof_lem_bound_R}
\begin{proof} [Proof of Lemma \ref{bound_R}]
\ben
\item The first claim follows because $\|D_{\text{det},k}\|_2 = \|\Psi_{k-1} G_{\text{det},k}\|_2  = \| \Psi_{k-1} [G_1 G_2 \cdots G_{k-1}]\|_2 \leq \sum_{k_1=1}^{k-1}\|\Psi_{k-1} G_{k_1}\|_2
\leq \sum_{k_1=1}^{k-1} \|\Psi_{k_1} G_{k_1}\|_2 =  \sum_{k_1=1}^{k-1} \tilde{\zeta}_{k_1} \leq  \sum_{k_1=1}^{k-1} \tilde{c}_{k_1} \zeta \leq r\zeta$. The first inequality follows by triangle inequality. The second one follows because $\hat{G}_1,\cdots,\hat{G}_{k-1}$ are mutually orthonormal and so $\Psi_{k-1} = \prod_{k_2=1}^{k-1}(I - \hat{G}_{k_2}{\hat{G}_{k_2}}')$.

\item By the first claim, $\|(I - \hat{G}_{\text{det},k} {\hat{G}_{\text{det},k}}') G_{\text{det},k} \|_2 =\|\Psi_{k-1} G_{\text{det},k}\|_2  \leq r\zeta$. By item 2) of Lemma \ref{lemma0} with $P = G_{\text{det},k}$ and $\hat{P} = \hat{G}_{\text{det},k}$, the result $\|G_{\text{det},k} {G_{\text{det},k}}' - \hat{G}_{\text{det},k}{\hat{G}_{\text{det},k}}'\|_2 \leq 2 r\zeta$ follows.

\item Recall that $D_k \overset{QR}{=} E_k R_k$ is a QR decomposition where $E_k$ is orthonormal and $R_k$ is upper triangular. Therefore, $\sigma_i(D_k)  = \sigma_i (R_k)$. Since $\|(I - \hat{G}_{\text{det},k} {\hat{G}_{\text{det},k}}')G_{\text{det},k}\|_2 =\|\Psi_{k-1} G_{\text{det},k}\|_2 \leq r\zeta$ and $G_k' G_{\text{det},k} = 0$, by item 4) of Lemma \ref{lemma0} with $P=G_{\text{det},k}$, $\Phat=\hat{G}_{\text{det},k}$ and $Q=G_k$, we have $\sqrt{1-r^2\zeta^2} \leq \sigma_i((I - \hat{G}_{\text{det},k} {\hat{G}_{\text{det},k}}')G_k)=\sigma_i(D_k)\leq 1$.

\item Since $D_k \overset{QR}{=} E_k R_k$, so  $\|{D_{\text{undet},k}}'E_k \|_2 =\|{D_{\text{undet},k}}'D_k R_k^{-1} \|_2 = \|{G_{\text{undet},k}}'\Psi_{k-1}' \Psi_{k-1} G_k R_k^{-1} \|_2 =  \|{G_{\text{undet},k}}'\Psi_{k-1} G_k R_k^{-1} \|_2 = \|{G_{\text{undet},k}}'D_k R_k^{-1} \|_2 = \|{G_{\text{undet},k}}'E_k\|_2$.
Since $E_k = D_k R_k^{-1} = ( I -\hat{G}_{\text{det},k} {\hat{G}_{\text{det},k}}') G_k R_k^{-1}$,
\bea
\|{G_{\text{undet},k}}'E_k \|_2  &=& \| {G_{\text{undet},k}}' ( I -\hat{G}_{\text{det},k} {\hat{G}_{\text{det},k}}') G_k R_k^{-1}\|_2 \nn \\
&\leq& \| {G_{\text{undet},k}}' ( I -\hat{G}_{\text{det},k} {\hat{G}_{\text{det},k}}') G_k \|_2 (1/\sqrt{1-r^2 \zeta^2}) =  \| {G_{\text{undet},k}}' \hat{G}_{\text{det},k} {\hat{G}_{\text{det},k}}' G_k \|_2(1/\sqrt{1-r^2 \zeta^2}) \nn
\eea
By item 3) of Lemma \ref{lemma0} with $P = {G}_{\text{det},k}$, $\Phat = \hat{G}_{\text{det},k}$ and $Q= G_{\text{undet},k}$, we get $\| {G_{\text{undet},k}}' \hat{G}_{\text{det},k}\|_2 \leq r\zeta$. By item 3) of Lemma \ref{lemma0} with $\Phat = \hat{G}_{\text{det},k}$ and $Q= G_k$, we get $ \| {\hat{G}_{\text{det},k}}'G_{k} \|_2 \leq r\zeta$.  Therefore, $\|{G_{\text{undet},k}}'E_k \|_2 = \|{E_k}' G_{\text{undet},k}\|_2 \leq \frac{r^2 \zeta^2}{\sqrt{1-r^2\zeta^2}}$.
\een
\end{proof}

\bibliographystyle{IEEEtran}
\bibliography{tipnewpfmt}
\end{document}